\documentclass{article}
\usepackage{tikz}
\usepackage{framed}
\usetikzlibrary{decorations.text,calc,arrows.meta}
\usepackage{forest, tikz-qtree,pseudocode,ulem, framed,fancybox}
\newcommand{\cbox}[2]{%
  \fcolorbox{black}{#1}{\texttt{#2\strut}}\kern-\fboxrule}
\usepackage{times}
\usepackage{palatino}
\usepackage{epsfig}
\usepackage{xcolor}
\usepackage{amssymb,amsmath,amsthm}
\usepackage{graphics} 
\usepackage{graphicx}

\newtheorem{claim}{Claim}
\newtheorem{remark}{Remark}
\newtheorem{open}{Open problem}
\newtheorem{definition}{Definition}
\newtheorem{theorem}{Theorem}
\newtheorem{lemma}{Lemma}
\newtheorem{example}{Example}
\newtheorem{proposition}{Proposition}
\newtheorem{corollary}{Corollary}
\usepackage{authblk}
\definecolor{shadecolor}{named}{gray}
\begin{document}

\title{Interactive Particle Systems on Hypergraphs, Drift Analysis and the WalkSAT algorithm}

\author[1,2,*]{Gabriel Istrate}
\author[1,2]{Cosmin Bonchi\c{s}}
\author[1]{Mircea Marin}
\affil[1]{Department of Computer Science, West University of Timi\c{s}oara}
\affil[2]{e-Austria Research Institute}
\affil[*]{corresponding author email: gabrielistrate@acm.org}
\renewcommand\Authands{ and }

%

%


\maketitle

\begin{abstract}
We analyze the expected running time of WalkSAT, a well-known local search procedure for satisfiability solving, on satisfiable instances of the $k$-XOR SAT problem. We obtain estimates of this expected running time by reducing the problem to a setting amenable to classical techniques from drift analysis. 

A crucial ingredient of this reduction is the definition of (new, explosive) {\it hypergraph versions of interacting particle systems}, notably of coalescing and annihilating random walks as well as the voter model. The use of these tools allows to show that the expected running time of WalkSAT depends on structural parameter (we call {\it  odd Cheeger drift}) of the dual of the formula hypergraph. 
\end{abstract}

\textbf{Keywords:} XOR-SAT, interacting particle systems, hypergraphs, drift analysis.

\section{Introduction} 

Interacting particle systems are discrete dynamical systems, usually defined on lattices, studied intensely in Mathematical Physics  \cite{liggett-ips}. They can be investigated on finite graphs as well \cite{donnelly-welsh-interacting}, \cite{donnelly-welsh-coloring}, \cite{aldous1999deterministic} as finite Markov chains; some of them correspond via {\it duality} to certain types of random walks \cite{aldous-fill-book}. It is, therefore, not that surprising that the analysis of such particle systems can  sometimes be used to upper bound the mixing time of several Markov chains, e.g. (hyper)graph coloring procedures \cite{donnelly-welsh-coloring,chung2012hypergraph}. 

In this paper we consider generalized such models in conjunction with a problem in the rigorous analysis of randomized search algorithms for combinatorial optimization. While progress has been made \cite{doerr-book}, tools and techniques for the analysis of such methods are still somewhat limited, and our theoretical knowledge still considerably lags that in the experimental investigation of heuristics. 

The problem we deal with is 
{\it the analysis of the local search procedure WalkSAT} \cite{walksat}  for a version of the satisfiability problem, the so-called $k$-XOR-SAT problem, where $k\geq 2$.  We show that studying {\it hypergraph analogues of coalescing/annihilating random walks and the voter model} allows the control of the expected convergence time of WalkSAT on many \textbf{individual satisfiable instances} $\Phi$ of  $k$-XOR-SAT in terms of structural parameters of two associated hypergraphs:
\begin{itemize} 
\item[-] the formula hypergraph of $\Phi$ (for uniquely satisfiable instances)
\item[-] a certain "triadic dual" of this hypergraph (in general) 
\end{itemize}
The tool we employ is drift analysis \cite{he2001drift}. 

On a technical level, our models extend the classical versions of annihilating and coalescing random walks,  as well as the voter model to hypergraphs. Besides their intended application to XOR-SAT, such  generalizations have obvious intrinsic interest, and add to the growing recent literature on extending interacting particle systems  to hypergraphs \cite{lu2012loose,cooper2012spectra,chung2012hypergraph,cooper2013cover,avin2014radio} and simplicial complexes \cite{steenbergen2012cheeger,parzanchevski2012isoperimetric}. The analysis we perform also has consequences for several other seemingly unrelated problems, such as   
 social balance \cite{redner-balance} and  lights-out games \cite{scherphuis-lightsout}. 
 
Though inheriting some properties from the graph case, our generalizations  display additional technically interesting features: for instance, in contrast to the graph case (where it is nonincreasing), the number of live particles in annihilating random walks on hypergraphs may go up, and the structure of recurrent states is constrained by systems of linear equations similar to the ones used to analyze lights-out games \cite{scherphuis-lightsout}. On the other hand, in  coalescing random walks on hypergraphs there may be more than one copy of an initial "particle" and the process is naturally described using {\it multisets}, rather than sets of particles.

The plan of the paper is as follows: first we define the models we are interested in. In Section~\ref{motiv} we motivate some of the concepts relevant to our result through the analysis of a particular case. This example can easily be extended to many families of uniquely solvable instances of $k$-XOR SAT. There are three cases, intuitively corresponding to positive, neutral and negative drift, respectively.  

In section~\ref{duality} we reinterpret the dynamics underlying WalkSAT by duality. 
Our main result (Theorem~\ref{vm}/Corollary~\ref{cor}) extends techniques developed in \cite{aldous-fill-book} for the analysis of the voter model on finite graphs, bounding the expected convergence time of WalkSAT in terms of two Cheeger-like constants of the dual of the formula hypergraph.

\section{Problem Statement} 
We are concerned with a version of the satisfiability problem called {\it $k$-XOR satisfiability} ($k$-XORSAT): 

\begin{definition} Given $k\geq 2$, 
 an instance of {\it $k$-XORSAT} is a linear system of $m$ equations, $A\cdot \overrightarrow{x}=\overrightarrow{b}$ 
 over ${\bf Z}_{2}$, where
 $A$ is an $m\times n$ matrix, $m,n\geq 1$,  
$\overrightarrow{x}=(x_{1},x_{2},\ldots, x_{n})^{T}$ is an $n\times 1$ vector,  
$\overrightarrow{b}=(b_{1},b_{2},\ldots, b_{m})^{T}$ is an $m\times 1$ vector, and each equation has exactly $k$ variables. 
\end{definition}

Though $k$-XORSAT can easily be solved in polynomial time by Gaussian elimination, we will not be concerned with 
this algorithm. Instead our aim is to analyze a local search procedure, called WalkSAT \cite{papadimitriou1991selecting},  displayed as Algorithm~\ref{alg}, and originally investigated on random instances of $k$-SAT. Though possible in principle in several cases (e.g. \cite{schoning1999probabilistic,alekhnovich2006linear,coja2009smoothed,zhou2013exponential,coja2012analyzing-journal}) and well-understood from the standpoint of Statistical Mechanics \cite{semerjian2003relaxation,semerjian2004study}, 
 such an analysis is still quite complicated in general. 
 
Analyzing WalkSAT on instances of $k$-XORSAT (rather than $k$-SAT) is motivated by the empirical observation that "curiously" \cite{guidetti2011walksat,achlioptas2015solution} XOR-SAT instances prove even harder for WalkSAT than those arising from $k$-SAT. On the other hand, one may hope that obtaining a rigorous analysis of WalkSAT may prove more tractable for the better understood problem $k$-XORSAT. While previous (highly nontrivial) such analyses concentrated on random instances \cite{barthel2003solving, semerjian2003relaxation,altarelli2008relationship},  we show that {\bf one can in fact obtain rigorous upper bounds on the expected running time of WalkSAT on {\it individual solvable instances}} of $k$-XORSAT, expressed in terms of {\bf (measurable) structural parameters of these
individual instances.}\footnote{we don't mean by this statement that the expected running time may be predictable: these structural parameters may be hard to compute.} We believe that such individual characterizations are important, as they make more transparent the structural properties of the input formula that influence the tractability of algorithms and heuristics. 

\begin{figure*}
\begin{center} 
\begin{pseudocode}[shadowbox]{Algorithm WalkSAT}{\Phi}
\mbox{Start with  assignment }U\mbox{ chosen uniformly at random.}\\
\mbox{while (there exist unsatisfied clauses)}\\
\hspace{5mm}\mbox{ pick a random unsatisfied clause }C\\
\hspace{5mm}\mbox{ flip the value of a random variable of }C\mbox{ in }U.\\
\mbox{\bf return}\mbox{ assignment }U.
\label{alg}
\end{pseudocode}
\end{center}
\caption{\textbf{Algorithm WalkSAT.}}
\end{figure*}

First of all, the following easy observation is true:  


\begin{theorem}
Let $\Phi$ be a satisfiable instance of $k$-XOR-SAT. Let $X^{(1)}$ be an arbitrary assignment. Then a satisfying assignment $X^{(2)}$ for $\Phi$ is reachable from $X^{(1)}$ by means of moves of WalkSAT.  
\label{rec-xor}
\end{theorem}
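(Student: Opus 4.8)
The plan is to reduce the reachability claim to a simple greedy descent toward a fixed target assignment. Since $\Phi$ is satisfiable, I would first pick \emph{any} satisfying assignment $X^{*}$ and simply set $X^{(2)} = X^{*}$. It then suffices to exhibit a single legal WalkSAT trajectory $X^{(1)} = U_0, U_1, \ldots, U_t = X^{*}$, i.e.\ a finite sequence of moves each of which WalkSAT makes with positive probability. To control the length of such a trajectory I would introduce the Hamming distance to the target, $d(U) := |\{\, i : U_i \neq X^{*}_i \,\}|$, and aim to decrease it by one at every step until a satisfying assignment is reached.

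The key step, and essentially the only piece of real content, is the following XOR-specific observation: any clause left unsatisfied by the current assignment must disagree with $X^{*}$ on at least one of its variables. Writing each clause $C$ as a linear equation $\sum_{i \in S_C} x_i = b_C$ over $\mathbb{Z}_2$, where $S_C$ is the set of $k$ variables occurring in $C$, note that if the current assignment $U$ agreed with $X^{*}$ on every variable of $S_C$, then $\sum_{i \in S_C} U_i = \sum_{i \in S_C} X^{*}_i = b_C$, forcing $U$ to satisfy $C$. Hence, whenever $U$ fails to satisfy $C$, there is some variable $x_j$ with $j \in S_C$ on which $U$ and $X^{*}$ differ. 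Flipping $x_j$ is exactly a move WalkSAT can perform (select the unsatisfied clause $C$, then select the variable $x_j \in C$), and it strictly decreases $d(U)$ by one.

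Combining these, I would run the induction on $d(U)$. As long as the current assignment $U$ is unsatisfied, WalkSAT must select some unsatisfied clause $C$, and by the observation there is a legal move flipping a variable of $C$ that lowers $d(U)$. Since $d(\cdot)$ is a nonnegative integer bounded by $n$, this descent can be iterated at most $d(X^{(1)}) \le n$ times; it terminates only when the current assignment is satisfying, and in the worst case it reaches $X^{*}$ itself (when $d$ hits $0$). This establishes positive-probability reachability of a satisfying assignment. I do not expect a genuine obstacle here — the statement is an ``easy observation'' — but the point worth emphasizing is that monotone progress is guaranteed only relative to a \emph{fixed} target $X^{*}$; this is what separates a bare reachability argument from the quantitative convergence-in-expectation analysis that occupies the remainder of the paper, where the absence of a fixed target and the randomness of the clause/variable choices must be handled by drift analysis rather than greedy descent.
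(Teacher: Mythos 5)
Your proposal is correct and follows essentially the same route as the paper: both arguments fix a target satisfying assignment and use the key XOR observation that an unsatisfied equation must contain a variable on which the current assignment and the target disagree, so a positive-probability WalkSAT move strictly decreases the Hamming distance to the target. The only cosmetic difference is that the paper phrases this as induction on the distance to a \emph{nearest} solution (with an unnecessary detour through the system $H(X^{(2)},X)$), whereas you fix an arbitrary solution $X^{*}$ and run the greedy descent directly, which is if anything slightly cleaner.
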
 
\begin{proof}
We prove that a solution of the system is reachable from $X^{(1)}$ by induction on $k$, the Hamming distance between $X^{(1)}$ and the set of solutions of the system $A\cdot \vec{x} = \vec{b}$ (denote by $X$ a solution satisfying $d_{H}(X^{(1)},X)=k$). 
\begin{itemize}
\item{\bf Case $k=0$.} Then $X^{(1)}=X$ and there is nothing to prove. 
\item{\bf Case $k=1$.} Then $X^{(1)}$ and $X$ differ on a single variable $z$. Let $m$ be an equation containing $z$. Then $X^{(1)}$ does not satisfy $m$ (as $X$, which only differs on $z$, does). Choosing equation $m$ and variable $z$ we reach $X^{(2)}=X$ from $X^{(1)}$. 
\item{\bf Case $k\geq 2.$} If there is an equation $w$ not satisfied by $X^{(1)}$ (but satisfied by $X$) then $w$ must contain a variable on which $X^{(1)}$ and $X$ differ. Let $z$ be such a variable. Then by flipping the value of $z$ in WalkSAT (by chosing clause $w$) 
one can reach from $X^{(1)}$ an assignment $X^{(2)}$ at Hamming distance $k-1$ from $X$. Now it is easily seen that system $H(X^{(2)},X)$ has solutions: any solution of $H(X^{(1)},X)$ with the value of $z$ flipped.  
By the induction hypothesis one can reach a solution from $X^{(2)}$, therefore from $X^{(1)}$. 
\end{itemize}
\end{proof}

Given the previous theorem, the following is a fairly natural research question: 
{\bf Given satisfiable formula $\Phi$ and initial assignment $X$, estimate quantity $E[T_{WalkSAT}(\Phi,X)]$, the average number of steps WalkSAT makes on $\Phi$ starting from $X$ in order to find a satisfying assignment.}

We will answer this question for a large-class of $k$-XOR-SAT instances by reducing the problem to one amenable to drift analysis \cite{he2001drift}. There will be two reductions: one (involving the formula hypergraph) works for uniquely satisfiable instances only. The second one reinterprets the dynamics of WalkSAT by duality and uses a "triadic dual" of the formula hypergraph. In this second setting, instances for which drift analysis leads to polynomial time upper bounds will require a mild additional "acyclicity" condition. 

\section{Other applications}

A second application comes from the physics of complex systems and is given by the following dynamics, first investigated by Antal et al. \cite{redner-balance}: 

\begin{definition} \label{nondet}
 {\bf Constrained Triadic Dynamics}. Start with 
graph $G=(V,E)$ whose edges are labeled $0/1$. A
triangle $T$ is $G$ is called {\it balanced} if the
sum of its edge labels is 0 (mod 2). At any
step $t$, we randomly chose an imbalanced triangle $T$ and change the sign of a 
random edge of $T$ (thus
balancing $T$). The move might, however, unbalance other
triangles.
\end{definition}

CTD can be modeled by the WalkSAT algorithm on an instance of 3-XORSAT \cite{triad-xorsat}. As further shown in \cite{istrate-balance}, one can sometimes analyze CTD using duality. 

Finally, the particle systems in this paper are related to certain {\it lights-out games} \cite{sutner-lightsout}. More precisely, when viewed by duality (see Section 4), the dynamics considered in this paper corresponds to a hypergraph extension of the {\it lit-only $\sigma^{+}$-game}. 

Markov chains based on {\it lights-out games with random moves} were recently studied by Hughes \cite{random-lights-out}. As this latter paper deals with a different version of the lights-out game, the two results are not comparable. 

\section{Preliminaries} 
We allow hypergraphs with {\it self-loops}, i.e. hyperedges $e$ with $|e|=1$. We will even allow multiple self-loops to the same vertex. A {\it multiset} is an unordered container of items whose elements have a (positive) multiplicity. The {\it disjoint union} of multisets $A$ and $B$, denoted $A\sqcup B$, is the multiset that adds up multiplicities of an element in $A$ and $B$. 

Given hypergraph $H=(V,E)$ and $v\in V$ we will denote by $N(v)$ its {\it open neighborhood}, defined as the set $\{w\neq v\in V:\mbox{ }(\exists e\in E), \{v,w\}\subseteq e\}$ and by $N[v]=\{v\}\cup N(v)$ its {\it closed neigborhood}. 

Given instance $\Phi$ of $k$-XOR-SAT, the \textit{formula graph of $\Phi$}, $H(\Phi)$ is the hypergraph having variables of $\Phi$ as nodes and hyperedges which correspond to equations in $\Phi$. If $\Phi$ is satisfiable then $H(\Phi)$ will be a simple hypergraph, since for every $S\subseteq V$, $\Phi$ can contain at most one equation involving precisely the variables in $S$. 

We will deal with discrete dynamical systems on hypergraphs. Consider such a d.d.s. $\mathcal{D}$ on hypergraph $H$. Given two configurations 
$C_{1},C_{2}$  of $\mathcal{D}$, we will use notation $C_{1}\vdash C_{2}$. We also use notation $\Vdash$ to denote the transitive closure of relation $\vdash$: Specifically, we write $C_{1}\Vdash C_{2}$ iff $C_{2}$ is reachable from $C_{1}$. 

\begin{definition} A satisfiable instance $\Phi$ of $k$-XOR-SAT is \textit{connected} iff it cannot be partitioned into two non-empty parts, $\Phi=\Phi_{1}\cup \Phi_{2}$, with $\Phi_{1},\Phi_{2}$ having disjoint sets of variables. 
\end{definition} 

It is reasonable to require that instances of XOR-SAT we want to solve are connected: indeed, if it were not so then one could simply solve XOR-SAT separately on the two instances $\Phi_{1},\Phi_{2}$.  

\begin{definition} A satisfiable instance $\Phi$ of $k$-XOR-SAT is \textit{cyclic} iff every variable appears in an even number of clauses (alternatively, if each equation $C$ of $\Phi$ is implied by the conjunction of all other equations of $\Phi$). A formula $\Phi$ is \textit{acyclic} iff no empty subformula of $\Phi$ (including $\Phi$ istself) is a cycle. 
\end{definition} 

Finally, we will use the following simple result: 
\begin{lemma} 
Given random variable $X$ with support on ${\bf Z}_{+}$, $
E[X]=\sum_{i\geq 0} Pr[X> i].$
\label{avglemma}
\end{lemma}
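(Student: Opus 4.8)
The plan is to establish the identity via the standard ``tail-sum'' (or layer-cake) decomposition of a non-negative integer-valued random variable. The key observation is that for any fixed value $X=k$ with $k\in{\bf Z}_{+}$, the number of indices $i\geq 0$ satisfying $i<k$ is exactly $k$; equivalently, $k=\sum_{i\geq 0}\mathbf{1}[k>i]$, where $\mathbf{1}[\cdot]$ denotes the indicator of an event. First I would record this as a pointwise identity between random variables, namely $X=\sum_{i\geq 0}\mathbf{1}[X>i]$, which holds surely since $X$ takes values in ${\bf Z}_{+}$.

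Next I would take expectations on both sides. The right-hand side is an infinite sum of non-negative random variables, so I would interchange the expectation with the summation, obtaining $E[X]=\sum_{i\geq 0}E[\mathbf{1}[X>i]]$. Finally, since $E[\mathbf{1}[X>i]]=Pr[X>i]$ for each $i$, this yields the claimed formula $E[X]=\sum_{i\geq 0}Pr[X>i]$.

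The only step requiring care --- and the main (though routine) obstacle --- is justifying the interchange of the expectation operator with the infinite sum. Because all summands $\mathbf{1}[X>i]$ are non-negative, this is permitted by the monotone convergence theorem, applied to the increasing partial sums $\sum_{i=0}^{N}\mathbf{1}[X>i]$, or equivalently by Tonelli's theorem. An entirely elementary alternative, avoiding measure-theoretic machinery, is to start from $E[X]=\sum_{k\geq 0}k\,Pr[X=k]$, substitute $k=\sum_{i\geq 0}\mathbf{1}[k>i]$, and reorder the resulting double series of non-negative terms; the reordering is unconditionally valid precisely because every term is non-negative, so no separate absolute-convergence hypothesis is needed. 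Either route delivers the identity, and the equality continues to hold in the extended sense when $E[X]=+\infty$.
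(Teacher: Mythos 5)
Your proof is correct: the pointwise identity $X=\sum_{i\geq 0}\mathbf{1}[X>i]$, followed by an interchange of expectation and summation justified by monotone convergence (or Tonelli), is the standard argument, and your remarks on non-negativity and the case $E[X]=+\infty$ are exactly the right points of care. The paper states Lemma~\ref{avglemma} without proof, treating it as a known ``simple result,'' so there is nothing to diverge from; your write-up supplies precisely the canonical justification the authors implicitly rely on.
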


\section{A Motivating Example}
\label{motiv}
To motivate some of the concepts we will introduce in the sequel,  we first study a particular instance of our problem: 

\begin{definition} 
Let $n\geq 1$, let $H$ be a hypergraph with $n$ vertices and let $Z_{n}=(Z_{1,n},Z_{2,n},\ldots, Z_{n,n})\in \{0,1\}^{n}$ be a boolean vector. We denote by  $H(Z_{n})$ the linear system with $n$ boolean variables $X_{1},X_{2},\ldots, X_{n}$ and equations
$\sum_{i\in e} X_{i}=\sum_{i\in e} Z_{i,n}$, 
where $e$ ranges over all hyperedges of $H$. 
\end{definition} 

By design $H(Z_{n})$ has $Z_{n}$ among the solutions. 
When $H$ is $k$-uniform, $H(Z_{n})$ is an instance of $k$-XOR SAT. 
In particular, we refer to $K_{5}(Z_{n})$ as the {\it complete 5-uniform linear system}. The reason for this name above is obvious: the {\it formula hypergraph of $K_{5}(Z_{n})$} (having variables as vertices and equations corresponding to hyperedges) is the complete 5-uniform hypergraph. 

The following is an easy observation: 
\begin{lemma} 
$Z_{n}$ is the only solution of $K_{5}(Z_{n})$. 
\label{1}
\end{lemma}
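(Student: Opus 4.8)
The plan is to reduce the uniqueness claim to a statement about the associated homogeneous system and then exploit the fact that $5$ is odd. First I would take an arbitrary solution $X=(X_1,\ldots,X_n)$ of $K_5(Z_n)$ and set $Y_i = X_i \oplus Z_{i,n}$ (addition over ${\bf Z}_{2}$) for each $i$. Since $Z_n$ is a solution by design, subtracting the two systems turns every equation $\sum_{i\in e}X_i = \sum_{i\in e}Z_{i,n}$ into the homogeneous equation $\sum_{i\in e}Y_i = 0$, where $e$ ranges over all $5$-element subsets of $\{1,\ldots,n\}$. Thus $X=Z_n$ is forced for every solution if and only if the only $Y\in\{0,1\}^n$ satisfying all these homogeneous equations is $Y=0$, and it suffices to prove the latter.

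The key step is a \emph{swap} argument showing that all coordinates of $Y$ coincide. Assuming $n\geq 6$, fix two distinct indices $a,b$ and choose any $4$-element set $S\subseteq\{1,\ldots,n\}\setminus\{a,b\}$, which exists precisely because at least four vertices remain. Then $e=\{a\}\cup S$ and $e'=\{b\}\cup S$ are both $5$-subsets, so $\sum_{i\in e}Y_i=0$ and $\sum_{i\in e'}Y_i=0$; adding these two equations over ${\bf Z}_{2}$ cancels the common part $\sum_{i\in S}Y_i$ and yields $Y_a+Y_b=0$, i.e.\ $Y_a=Y_b$. Since $a,b$ were arbitrary, there is a single value $c\in\{0,1\}$ with $Y_i=c$ for every $i$.

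It then remains to pin down $c$. Substituting $Y_i=c$ into any one equation gives $\sum_{i\in e}Y_i = 5c = c = 0$ over ${\bf Z}_{2}$, the middle equality being exactly where the oddness of $5$ enters, since $5\equiv 1\pmod 2$. Hence $c=0$, so $Y=0$, which completes the reduction and the proof.

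I expect the only genuine subtlety to be bookkeeping about the range of $n$: the swap needs $n\geq 6$ (for $n=5$ there is a single equation with many solutions, and for $n<5$ there are no equations at all), so the statement is to be read for $n$ large enough. It is worth emphasizing that the argument is special to \emph{odd} uniformity: for even $k$ the all-ones vector solves the homogeneous system, so the analogous complete $k$-uniform linear system would fail to have a unique solution --- which is precisely why the motivating example is built on the $5$-uniform hypergraph.
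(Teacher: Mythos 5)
Your proof is correct and is essentially the paper's own argument in homogeneous dress: your swap of two $5$-sets sharing a $4$-set $S$ is exactly the paper's ``subtract two equations that differ in one variable'' step (yielding $X_a - X_b = Z_{a,n}-Z_{b,n}$, i.e.\ $Y_a=Y_b$), and your $5c=c=0$ computation is the paper's observation that flipping all bits fails because every equation has odd width. Your explicit remark that the argument needs $n\geq 6$ is a small but worthwhile addition the paper leaves implicit.
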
 
\begin{proof} 
Subtracting two equations that only differ one one variable ($X_{i}$ and $X_{j}$, respectively) 
we infer $X_{i}-X_{j}=Z_{i,n}-Z_{j,n}$ for all $1\leq i,j\leq n$. Thus the values of all variables are determined by the value of $X_{1}$. 

When $X_{1}=Z_{1,n}$ we obtain solution $Z_{n}$. The alternative $X_{1}=Z_{1,n}+1$ does not lead to a solution, because it corresponds to flipping all bits in $Z_{1,n}$, which is not a solution of the system (as all equations have odd width).
\end{proof} 

From Lemma~\ref{1}, to any assignment $U_{t}$ considered at step $t$ by $WalkSAT$ one can associate a partition $(A_{t},\overline{A_{t}})$ of the variables $\{X_{1},X_{2},\ldots, X_{n}\}$ with \begin{equation} 
A_{t}=\{X_{i}: U_{t}(X_{i})\neq Z_{i}\}
\end{equation}
denoting the set of "bad variables".  We can analyze the WalkSAT algorithm on $K_{5}(Z_{n})$ by employing the potential function $u(t)=|A_{t}|$. Eventually w.h.p. $u(t)=0$, and the analysis amounts to investigating the expected hitting time of this event. 

WalkSAT evolves by flipping the value of a single variable. Therefore $u(t)$ can either decrease by 1 (if one "bad" variable becomes "good") or increase by one (if one "good" variables flips to "bad"). The following easy observation is crucial: 

\begin{lemma} 
For $t\geq 0$, equation $e$ is \textbf{not} satisfied by assignment $U_{t}$ iff
$|Var(e)\cap A_{t}|\mbox{ is \textbf{odd.}}$
\end{lemma}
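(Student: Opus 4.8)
The plan is to prove the equivalence by a direct parity computation in ${\bf Z}_{2}$. Recall that equation $e$ of $K_{5}(Z_{n})$ reads $\sum_{i\in e} X_{i}=\sum_{i\in e} Z_{i,n}$, so under the assignment $U_{t}$ the equation is satisfied precisely when $\sum_{i\in e} U_{t}(X_{i})=\sum_{i\in e} Z_{i,n} \pmod 2$, and it fails exactly when the two sides differ mod $2$. Equivalently, $e$ is unsatisfied iff $\sum_{i\in e}\bigl(U_{t}(X_{i})+Z_{i,n}\bigr)\neq 0 \pmod 2$, where I have moved everything to one side and used that subtraction and addition coincide over ${\bf Z}_{2}$.

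First I would evaluate each summand. For $i\in e$, the term $U_{t}(X_{i})+Z_{i,n}$ equals $1$ exactly when $U_{t}(X_{i})\neq Z_{i,n}$, that is, when $X_{i}\in A_{t}$, and equals $0$ otherwise. Hence the sum $\sum_{i\in e}\bigl(U_{t}(X_{i})+Z_{i,n}\bigr)$ is simply the number of variables of $e$ that are bad, counted mod $2$; in the notation of the statement this is $|Var(e)\cap A_{t}| \pmod 2$. This is the one substitution that drives the whole argument, and it rests only on the definition $A_{t}=\{X_{i}: U_{t}(X_{i})\neq Z_{i,n}\}$ together with the fact that $x+y=1$ in ${\bf Z}_{2}$ iff $x\neq y$.

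Combining the two observations, $e$ is unsatisfied iff $|Var(e)\cap A_{t}|$ is nonzero mod $2$, i.e.\ iff $|Var(e)\cap A_{t}|$ is odd, which is exactly the claim. There is no genuine obstacle here beyond keeping the mod-$2$ bookkeeping clean: the statement is essentially the reformulation, valid for any linear system over ${\bf Z}_{2}$ with a fixed reference solution $Z_{n}$, that the ``defect'' of an equation records the parity of the number of coordinates on which the current assignment and $Z_{n}$ disagree within that equation. The only point requiring care is that the right-hand constant $\sum_{i\in e} Z_{i,n}$ is exactly the target for that equation, so forming the difference genuinely isolates the contribution of the bad variables and nothing else.
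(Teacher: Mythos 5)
Your proof is correct and is precisely the mod-$2$ bookkeeping the paper has in mind: the lemma is stated there as an ``easy observation'' with the proof omitted, and your computation --- rewriting unsatisfiability of $e$ as $\sum_{i\in e}\bigl(U_{t}(X_{i})+Z_{i,n}\bigr)\neq 0$ in ${\bf Z}_{2}$ and identifying each summand with the indicator of $X_{i}\in A_{t}$ --- is exactly the intended argument. Your closing remark that the statement holds for any ${\bf Z}_{2}$-linear system with a fixed reference solution is also accurate and consistent with how the paper later reuses the idea.
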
  

This lemma motivates the following rather "exotic" notion of an odd cut in a hypergraph: 

\begin{definition}
Given hypergraph $H$ and partition $V(H)=A\cup \overline{A}, A\cap \overline{A}=\emptyset$, define 
\begin{itemize} 
\item[-] $OddCut(A)$ to be the subhypergraph of $H$ induced by 
edges $e$ such that {\bf $|e\cap A|$ is odd.} 
\item[-] $E^{-}(A,\overline{A})$ to be the set of pairs $(v,e)$ with $v\in A$ and $e\ni v$, $e\in OddCut(A)$. 
\item[-] $E^{+}(A,\overline{A})$ to be the set of pairs $(w,e)$ with $w\in \overline{A}$ and $e\ni w$, $e\in OddCut(A)$. 
\end{itemize} 
\label{expansion-hyper}
\end{definition} 

\begin{remark} 
In the definition of $OddCut(A),E^{-}(A,\overline{A})$ we allow (odd-size) hyperedges that may not contain a single vertex from $\overline{A}$ ! 

\end{remark}  

The connection between these notions and the analysis of WalkSAT is clear: 
\begin{itemize} 
\item[-] $\Delta u(t)=+1$, precisely when at step $t$ the chosen pair $(v,e)$ belongs to $E^{-}(A_{t},\overline{A_{t}})$. 
\item[-] Similarly, $\Delta u(t)=-1$, precisely when at step $t$ the chosen pair $(v,e)$ belongs to $E^{+}(A_{t},\overline{A_{t}})$.
\end{itemize} 

\begin{definition} For a  hypergraph $H$ and set $A\subseteq V(H)$ define the 
\textit{odd  Cheeger drift $D_{odd}(A)$} as 
\begin{equation}
D_{odd}(A)=\frac{|E^{+}(A,\overline{A})|-|E^{-}(A,\overline{A})|}{|E^{+}(A,\overline{A})|+|E^{-}(A,\overline{A})|}.
\end{equation}
 Note that the odd Cheeger drift $D_{odd}(A)$ is only well-defined for sets $A$ such that $OddCut(A)\neq \emptyset$.   
\end{definition}

The characterization of all hypergraphs for which condition  $OddCut(A)\neq \emptyset$ is satisfied for all $A$ is related to {\it parity domination in graphs} \cite{sutner-lightsout,amin1992neighborhood}, and is adapted to hypergraphs as follows:  



\begin{definition}
Given connected hypergraph $H=(V,E)$, set of vertices $\emptyset \neq A\subseteq V$ is {\it even parity dominating in $H$} if for every $e\in E$, 
$|A\cap e|$ is even.  $H$ is {\it odd-connected} if it has no even dominating set $\emptyset \neq A\neq V$.


 
\end{definition} 

The introduced terminology allows us to characterize hypergraphs $H$ such that $D_{odd}(A)$ is well-defined for all $\emptyset \neq A\neq V$ by the following simple result: 

\begin{proposition} $OddCut(A)\neq \emptyset$ holds for all $\emptyset \neq A\neq V(H)$ {\it iff} $H$ is odd-connected.  
\end{proposition}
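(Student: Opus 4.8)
The plan is to prove the proposition by direct contraposition, exploiting the fact that the event $OddCut(A)=\emptyset$ is, by definition, literally the statement that $A$ is an even parity dominating set. The whole content of the proposition is this definitional coincidence together with a matching of quantifier ranges.

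First I would unfold the definition of $OddCut(A)$. Since $OddCut(A)$ is the subhypergraph induced by exactly those edges $e\in E$ with $|e\cap A|$ odd, the condition $OddCut(A)=\emptyset$ (i.e.\ it contains no edge) is equivalent to saying that $|e\cap A|$ is even for every $e\in E$. For a fixed $\emptyset\neq A\neq V(H)$, this is precisely the defining property of $A$ being \emph{even parity dominating}. Thus the negation of the claim ``$OddCut(A)\neq\emptyset$'' coincides, verbatim, with ``$A$ is an even parity dominating set.''

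With this observation the two implications follow by contraposition, and I would simply verify that the quantifier ranges agree. For the forward direction, assume $OddCut(A)\neq\emptyset$ for all $\emptyset\neq A\neq V$; if $H$ failed to be odd-connected it would possess an even dominating set $\emptyset\neq A\neq V$, for which $OddCut(A)=\emptyset$, a contradiction. For the converse, assume $H$ is odd-connected and take any $\emptyset\neq A\neq V$; were $OddCut(A)=\emptyset$, the set $A$ would be even parity dominating with $\emptyset\neq A\neq V$, contradicting odd-connectedness. The crucial point is that the range $\emptyset\neq A\neq V(H)$ quantified in the proposition is exactly the range of admissible sets in the definition of odd-connectedness, so no boundary set is lost or gained in passing between the two formulations.

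I expect no genuine obstacle here: the result is a definitional tautology once $OddCut(A)=\emptyset$ is recognized as ``$A$ is even parity dominating.'' The only care required is the bookkeeping of the nonempty and proper constraints on $A$, making sure they are imposed consistently on both sides of the equivalence; and I would note in passing that the standing connectedness assumption on $H$ (under which ``odd-connected'' is defined) is not actually used in establishing the equivalence itself.
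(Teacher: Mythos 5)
Your proof is correct and coincides with the paper's own (implicit) reasoning: the paper states this proposition without proof, as a ``simple result,'' precisely because, as you observe, $OddCut(A)=\emptyset$ is verbatim the condition that $A$ is even parity dominating, and the quantifier range $\emptyset\neq A\neq V(H)$ in the proposition matches the range of forbidden sets in the definition of odd-connectedness. Your bookkeeping of the nonemptiness/properness constraints, and your remark that connectedness of $H$ plays no role in the equivalence, are both accurate; nothing is missing.
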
 

We also require a specially tailored Cheeger-like quantity, somewhat similar to the definition of coboundary expansion but with an easy combinatorial definition reminiscent of the so-called Cheeger time \cite{aldous-fill-book}: 

\begin{definition} Given a $k$-regular hypergraph $H$ define {\rm the odd Cheeger time $\tau_{H}$ } as 
\[
\tau_{odd}(H)=\sup_{0<|A|\leq |V|}\frac{nk}{|E^{-}(A,\overline{A})|}. 
\] 
\end{definition}

We now return to the definition of odd Cheeger drift in hypergraphs, presenting a couple of examples. The analysis of the hypergraph $K_{5}(Z)$ in particular, allows us to finally settle the problem investigated in this section: the expected convergence time of WalkSAT on $K_{5}(Z)$ is exponential.

\begin{example} For every $k$-regular {\it graph} $G$ {\bf without self-loops}, the odd Cheeger drift of an arbitrary set $A$ is zero, as $|E^{+}(A,\overline{A})|=|E^{-}(A,\overline{A})|$ for all $A$. 
\label{drift-zero}
\end{example}

\begin{example} For every $k$-regular graph $G$ {\bf that may include self-loops}, the odd Cheeger drift of an arbitrary set $A$ is $\geq 0$. More precisely, if $L(A)$ is the multiset of self-loops of vertices in $A$, 
\[
D_{odd}(A)=\frac{|L(A)|}{2\cdot |OddCut(A)|-|L(A)|}
\]

Indeed, all edges $e\in OddCut(A)$ that are not self-loops contribute both to $E^{+}(A,\overline{A})$ and $E^{-}(A,\overline{A})$. On the other hand self-loops only contribute to $E^{-}(A,\overline{A})$. 
\end{example}

\begin{example}
Let $H=K_{n,5}$ be the complete $5$-uniform hypergraph with $n$ vertices. 
Then the odd Cheeger drift of arbitrary low-density subsets of $H$ will be negative (for large values of $n$). Indeed, if $|A|=\delta n$, the number of hyperedges $e$ containing 
\begin{itemize} 
\item[-] five vertices in $A$ is ${{\delta n}\choose {5}}\sim \frac{\delta^5}{5!}n^{5}$. Each vertex $v\in e$ will count for $E^{-}(A,\overline{A})$ in pair $(v,e)$
\item[-] three vertices in $A$ is 
${{\delta n}\choose {3}}{{(1-\delta) n}\choose {2}}\sim n^5 \delta^{3}(1-\delta)^{2}/12$. These three vertices will count for $E^{-}(A,\overline{A})$, the other two for $E^{+}(A,\overline{A})$. 
\item[-] 
exactly one vertex in $A$ is $\delta n{{(1-\delta) n}\choose {4}}\sim n^5 \delta(1-\delta)^{4}/24$. This vertex will count for $E^{-}(A,\overline{A})$, all the rest for $E^{+}(A,\overline{A})$.
\end{itemize} 

Thus, as $n\rightarrow \infty$ 
\[
|E^{-}(A,\overline{A})|= n^5[\frac{\delta^5}{24}+3\frac{\delta^3\cdot (1-\delta)^{2}}{12}+\frac{\delta (1-\delta)^{4}}{24}](1+o(1))
\]

On the other hand 
\[
|E^{+}(A,\overline{A})|= n^5[2\frac{\delta^3 (1-\delta)^{2}}{12}+4\frac{\delta(1-\delta)^{4}}{24}](1+o(1))
\]

"Asymptotic drift" quantity 
\begin{align*}
D_{odd}(\delta) & =
\lim_{n\rightarrow \infty, |A|=\delta n} \frac{|E^{-}(A,\overline{A})|-|E^{+}(A,\overline{A})|}{|E^{-}(A,\overline{A})|+|E^{+}(A,\overline{A})|}\\ 
& = \frac{\delta^{4}+2\delta^{2}(1-\delta)^{2}-3(1-\delta)^{4}}{\delta^{4}+10\delta^{2}(1-\delta)^{2}+5(1-\delta)^{4}}
\end{align*} 
 is plotted 
against density parameter $\delta$ in Figure~\ref{k:5}.  Note that the asymptotic drift is negative for $\delta < 1/2$. This allows us to employ drift analysis to prove an exponential lower bound on the expected convergence time of WalkSAT:
\label{2}
\end{example}

\begin{figure}[t] 
\begin{center}
\includegraphics[width=8cm,height=6cm]{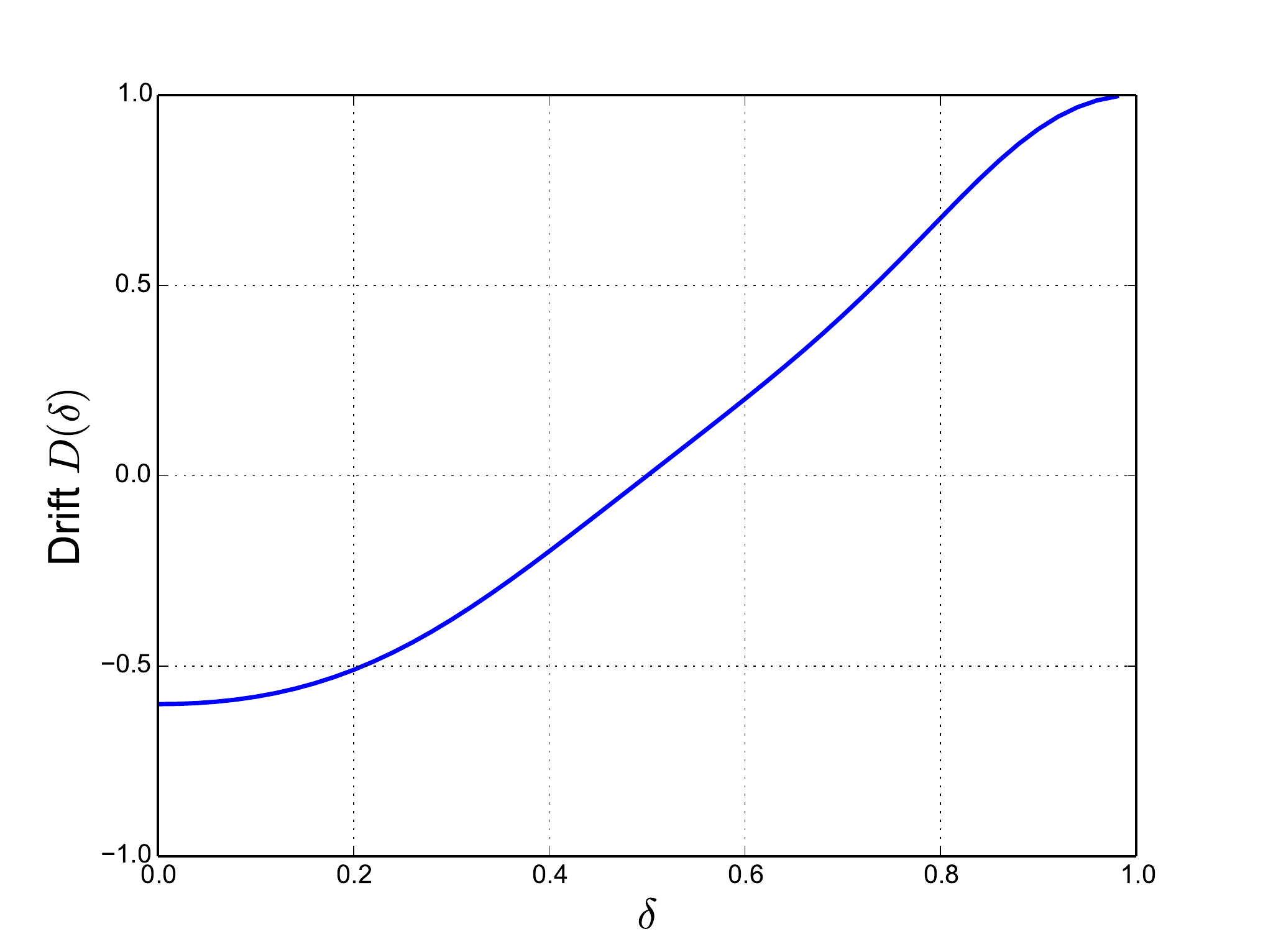}
\end{center} 
\caption{The (asymptotic) odd Cheeger drift of the complete 5-uniform hypergraph $K_{n,5}$.}
\label{k:5}
\end{figure}

\begin{theorem}
There exist constants $\epsilon > 0,$ $0<d<1$ and $c>1$ such that for all large enough $n\geq 1$ and any initial assignment $X_{1,n}$ with $d_{H}(X_{1,n},Z_{n})>n(\frac{1}{2}-\epsilon)$, the expected convergence time of WalkSAT on system $K_{5}(Z_{n})$ starting from initial assignment $X_{1,n}$ is at least 
 $c^n (1-d)$. 
\label{exp1}
\end{theorem}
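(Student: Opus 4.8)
The plan is to reduce WalkSAT on $K_5(Z_n)$ to a one–dimensional birth–death chain and then invoke a negative–drift (exponential lower bound) theorem. By Lemma~\ref{1} the unique solution is $Z_n$, so the potential $u(t)=|A_t|=d_H(U_t,Z_n)$ measures progress and ``solved'' is exactly the event $u=0$. Because $K_{n,5}$ is vertex–transitive and every $5$–subset is an edge, the quantities $|E^-(A_t,\overline{A_t})|$ and $|E^+(A_t,\overline{A_t})|$ depend only on $u=|A_t|$, not on the particular set $A_t$. Since WalkSAT picks a uniformly random unsatisfied clause and then a uniformly random variable of it, the increment $\Delta u(t)\in\{-1,+1\}$ has law $\Pr[\Delta u=-1]=|E^-|/(|E^-|+|E^+|)$ and $\Pr[\Delta u=+1]=|E^+|/(|E^-|+|E^+|)$ (flipping a bad variable, i.e.\ a pair counted by $E^-$, lowers $u$; flipping a good variable raises it). Hence $u(t)$ is a birth–death chain on $\{0,1,\dots,n\}$ with absorbing state $0$, and its one–step drift is exactly the odd Cheeger drift as a function of $u$ alone,
\[
E[\Delta u(t)\mid u(t)=u]=\frac{|E^+|-|E^-|}{|E^+|+|E^-|}=D_{odd}(A_t).
\]

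Next I would pin down the sign of this drift in the relevant window. Writing $u=\delta n$ and using the exact pair counts $|E^-|=\binom{u}{1}\binom{n-u}{4}+3\binom{u}{3}\binom{n-u}{2}+5\binom{u}{5}$ and $|E^+|=4\binom{u}{1}\binom{n-u}{4}+2\binom{u}{3}\binom{n-u}{2}$ from Example~\ref{2}, the leading asymptotics give $|E^+|-|E^-|=\frac{n^5}{24}\,\delta\big[3(1-\delta)^4-2\delta^2(1-\delta)^2-\delta^4\big](1+o(1))$. The bracketed polynomial $h(\delta)=3(1-\delta)^4-2\delta^2(1-\delta)^2-\delta^4$ (which is minus the numerator of the drift formula in Example~\ref{2}) satisfies $h(0)=3$, $h(1/2)=0$, and is strictly positive on $(0,1/2)$, so the drift is strictly positive there: the potential is pushed \emph{away} from $0$, toward $n/2$. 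Fixing the $\epsilon>0$ of the statement and any small $\epsilon_1\in(0,\tfrac12-\epsilon)$, the continuous function $g(\delta)=\frac{|E^+|-|E^-|}{|E^+|+|E^-|}$ is bounded below by some constant $\epsilon_0>0$ on the compact interval $[\epsilon_1,\tfrac12-\epsilon]$; since the binomial asymptotics are uniform there, for all large $n$ the true drift obeys $E[\Delta u\mid u]\ge \epsilon_0/2=:\epsilon_2>0$ for every $u\in[\epsilon_1 n,(\tfrac12-\epsilon)n]$.

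With this uniform upward drift in hand I would apply the negative–drift theorem (Hajek; Oliveto–Witt) to $u(t)$ with the two thresholds $a=\epsilon_1 n$ and $b=(\tfrac12-\epsilon)n$, so that $\ell:=b-a=(\tfrac12-\epsilon-\epsilon_1)n=\Theta(n)$. Both hypotheses hold: (i) the drift pushing away from the target is $\ge\epsilon_2$ throughout $(a,b)$, and (ii) the step size is deterministically $\pm1$, so the bounded–jump/exponential–tail condition is trivially satisfied. The assumption $d_H(X_{1,n},Z_n)>n(\tfrac12-\epsilon)$ means the chain starts at $u(0)>b$, and any trajectory reaching $u=0$ must first cross down to $u\le a$; hence the hitting time $T$ of $\{u\le a\}$ satisfies $\Pr[T\le 2^{c^*\ell}]=2^{-\Omega(\ell)}$ for some constant $c^*>0$, whence $E[T_{WalkSAT}]\ge E[T]\ge 2^{c^*\ell}\big(1-2^{-\Omega(\ell)}\big)$. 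Setting $c=2^{c^*(\frac12-\epsilon-\epsilon_1)}>1$ and absorbing the vanishing correction into a constant $0<d<1$ for all large $n$ gives the claimed bound $c^n(1-d)$.

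The step I expect to be the real work is making the drift estimate uniform: turning the per-$\delta$, $(1+o(1))$ binomial asymptotics into a single constant lower bound $\epsilon_2>0$ valid \emph{simultaneously} for all $u$ in the linear–width window $[\epsilon_1 n,(\tfrac12-\epsilon)n]$ and all large $n$, together with the elementary but necessary verification that $h(\delta)>0$ on $(0,1/2)$ (so there is no interior sign change that would destroy the barrier). Once the uniform positive drift is established, the negative–drift theorem is essentially a black box; alternatively one could bypass it and bound $E_m[T_0]$ directly via the classical birth–death hitting–time formula, where the products $\prod_j p_j/q_j\ge(1+\epsilon_2')^{\Theta(n)}$ produce the exponential factor.
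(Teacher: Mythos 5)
Your proposal is correct and follows essentially the same route as the paper: track the potential $u(t)=d_{H}(U_{t},Z_{n})$, use the pair counts from Example~\ref{2} to get a drift bounded away from zero (pushing $u$ away from $0$) on a linear-width window below $(\tfrac12-\epsilon)n$, and apply the Simplified (negative) Drift Theorem of Oliveto--Witt to conclude $\Pr[T<c^{n}]<d^{n}$ and hence the exponential lower bound on $E[T]$. Your version merely fills in details the paper leaves implicit (uniformity of the asymptotics over the window, positivity of $h(\delta)$ on $(0,\tfrac12)$, and the birth--death reduction), and it silently uses the physically correct sign convention that pairs in $E^{-}$ give $\Delta u=-1$ --- the paper's bullet list after Definition~\ref{expansion-hyper} has these signs swapped.
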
 
\begin{proof} 
A consequence of drift analysis. Formally, take $n$ large enough such that $-1< D_{odd}(A)< -0.1$ for all $A$ with $0.1n<|A|< 0.2n$. Then choose e.g. $\epsilon = 0.1$ and apply the  Simplified Drift Theorem \cite{oliveto2011simplified}, inferring that there exist constants $c>1>d>0$ such that $Pr[T< c^{n}]< d^{n}.$ Therefore the expected time to hit zero is at least $c^{n}(1-d^{n})$.

\end{proof} 

One can easily extend this example, as follows: 

\begin{theorem}

Given family $\Phi=(\Phi_{n})$ of connected, \textbf{uniquely satisfiable} instances of $k$-XOR-SAT, the following statements hold:  

\begin{itemize} 
\item[(i).] Suppose there exists constant $\delta >0$  such that for all nonempty sets $A\subset V$,  $D_{odd}(A)\geq \delta.$ Then 
\begin{equation}
\max_{X_{n}\in \{0,1\}^{n}} E[T_{WalkSAT}(\Phi_{n},X_{n})]\leq T(n),
\label{eq10}
\end{equation}

where $T(n)$ is a function such that 
\begin{equation} 
T(n)=O(n).
\end{equation}
\item[(ii).] Suppose case 1 does not apply but $D_{odd}(A)\geq 0$ for every nonempty set $A$. Then inequality~(\ref{eq10}) holds with 
\begin{equation}
T(n)=O(n^2\tau_{odd}(H(\Phi_{n}))).
\end{equation}
\end{itemize} 
On the other hand, given family $\Phi=(\Phi_{n})$ of connected instances of $k$-XOR-SAT, the following statement holds:  
 \begin{itemize} 
\item[(iii).] Suppose there exist $0<\eta_{1}<\eta_{2}< 1$ and $\delta >0$ such that for all  sets 
$A\subset V$ with $\eta_{1}n<|A|<\eta_{2}n$ we have $D_{odd}(A)<-\delta.$ Then for any sequence of assignments $X_{n}$ with $|X_{n}|\geq \eta_{2}n$,  
\begin{equation}
E[T_{WalkSAT}(\Phi_{n},X_{n})]\geq T_{H}(n),
\end{equation}

where $T_{H}(n)$ is a function such that
\begin{equation}
T_{H}(n)=\Omega(e^{\Omega(n)}).
\end{equation}
\end{itemize} 
\label{thm0}
\end{theorem}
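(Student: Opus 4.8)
The plan is to run all three cases off a single potential together with the appropriate drift theorem. In cases (i)--(ii) the instance is uniquely satisfiable, so its unique solution $Z_n$ lets me reuse the setup of Section~\ref{motiv}: put $A_t=\{X_i:U_t(X_i)\neq Z_i\}$ and $u(t)=|A_t|$, the Hamming distance from the current assignment to $Z_n$, so that WalkSAT halts exactly when $u(t)=0$. Unique satisfiability forbids any nonempty even parity dominating set, hence $OddCut(A_t)\neq\emptyset$ for every $A_t\neq\emptyset$ and $u=0$ is the only absorbing value. The one computation that drives everything is the drift identity: a single flip moves $u$ by exactly $\pm1$, and by the characterization of unsatisfied equations the pair $(v,e)$ picked by WalkSAT is uniform on $E^-(A_t)\cup E^+(A_t)$, so $\Pr[\Delta u(t)=-1\mid A_t]=|E^-(A_t)|/(|E^-(A_t)|+|E^+(A_t)|)$ and $E[\Delta u(t)\mid A_t]=-D_{odd}(A_t)$ in the sign convention of Example~\ref{2}. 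Each part is then this identity fed into a different theorem.

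For part (i), $D_{odd}(A)\geq\delta$ gives $E[u(t)-u(t+1)\mid A_t]\geq\delta$ uniformly while $u(t)>0$, so the additive drift theorem \cite{he2001drift} yields $E[T]\leq u(0)/\delta\leq n/\delta=O(n)$. For part (ii) the drift is merely non-negative, so $u(t)$ is a supermartingale with $\pm1$ steps and $\Pr[\Delta u=-1]\geq\frac12$ everywhere, and the additive theorem gives no rate. I would instead couple $u(t)$ monotonically below a symmetric nearest-neighbour walk on $\{0,\dots,n\}$ that is absorbed at $0$ and reflected at $n$; that walk reaches $0$ from height $j$ in expected time $j(2n-j)=O(n^2)$, so $E[T]=O(n^2)$, which sits inside the claimed $O(n^2\,\tau_{odd}(H(\Phi_n)))$. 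The role of the odd Cheeger time is to provide a configuration-independent bound on the slowest level of the walk, so that the quadratic estimate survives even without exploiting the $\pm1$ structure directly.

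For part (iii) the drift is reversed on a band: $D_{odd}(A)<-\delta$ for $\eta_1 n<|A|<\eta_2 n$ means $E[\Delta u(t)\mid A_t]>\delta$ there, i.e. $u(t)$ is pushed away from $0$. Fixing one solution $Z_n$ and starting from $u(0)=|X_n|\geq\eta_2 n$, above the band, I would argue exactly as in Theorem~\ref{exp1}: bounded $\pm1$ steps plus constant adverse drift across a band of linear width let the Simplified Drift Theorem \cite{oliveto2011simplified} give $\Pr[T<c^n]<d^n$ for some $c>1>d>0$, whence $E[T]\geq c^n(1-d^n)=\Omega(e^{\Omega(n)})$.

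The step I expect to be the main obstacle is precisely the move in part (iii) from uniquely satisfiable to general connected instances. With several solutions the honest potential is the distance $d_H(U_t,\mathcal{S})$ to the entire solution set $\mathcal{S}$, and a single flip can promote a \emph{different} solution to nearest, so $d_H(U_t,\mathcal{S})$ may decrease even while $|A_t^{(Z_n)}|$ drifts upward; the adverse drift computed against one reference solution need not survive the minimum over all of them. To keep the Simplified Drift Theorem applicable one must exclude this ``solution switching'' inside the band, for which the clean sufficient condition is that distinct solutions be Hamming-separated by more than the band width (equivalently, that every nonzero kernel vector of $A$ have weight exceeding $\eta_2 n$): then at most one solution is near-optimal at any time in the band and the upward drift is inherited by $d_H(U_t,\mathcal{S})$. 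Establishing that such separation follows from, or may be appended to, the stated hypotheses --- or else quantifying the drift correction contributed by nearby solutions --- is where the real difficulty lies; once the drift identity and this separation are in place, the three invocations of the drift theorems are routine.
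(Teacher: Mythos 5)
Your proposal is correct and, in parts (i) and (iii), takes essentially the paper's route: you derive the same key identity --- that WalkSAT's chosen pair $(v,e)$ is uniform on $E^{-}(A_t,\overline{A_t})\cup E^{+}(A_t,\overline{A_t})$ (this uses $k$-uniformity of $H(\Phi_n)$, so each unsatisfied clause contributes exactly $k$ pairs), whence $E[\Delta u(t)\mid A_t]=-D_{odd}(A_t)$ --- and then feed it into additive drift for (i) and the Simplified Drift Theorem \cite{oliveto2011simplified} for (iii). The paper's proof of (i) is the same bound $n/\delta$ obtained slightly more laboriously, by stochastically dominating $\Delta u(t)$ with i.i.d.\ $\{-1,0,1\}$ variables and computing the hitting time by generating functions; your appeal to \cite{he2001drift} is a cleaner packaging of the identical idea. (You were also right to pin down a sign convention explicitly: the paper's Definition of $D_{odd}$, its bullet points relating $\Delta u$ to $E^{\pm}$, and the ``asymptotic drift'' of Example~\ref{2} are not mutually consistent, and only the convention you chose makes all three cases of the theorem true.) In part (ii) you genuinely diverge, to your advantage: since WalkSAT always flips a variable of an unsatisfied clause (and unique satisfiability guarantees such a clause exists whenever $u(t)>0$), there is no holding probability, $\Delta u(t)\in\{-1,+1\}$, and $D_{odd}\geq 0$ gives down-probability $\geq 1/2$ at every level; your monotone coupling under the reflected symmetric walk then yields $O(n^2)$ outright, which is \emph{stronger} than the stated $O(n^2\tau_{odd}(H(\Phi_n)))$. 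The paper instead couples to a walk that matches $\Pr[\Delta u=0]$ and splits the remaining mass evenly, lower-bounds the step probabilities by $\frac{1}{2\tau_{odd}}$, and invokes the Aldous--Fill comparison lemma; that $\tau_{odd}$ correction is genuinely needed for the lazy dual dynamics in the proof of Theorem~\ref{vm}, but it is vacuous in the never-lazy primal setting of Theorem~\ref{thm0}, a point your simpler argument exposes.

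Your closing concern about part (iii) is well placed, and it is a gap in the paper rather than in your attempt: the paper disposes of (iii) in one sentence (``similar to Theorem~\ref{exp1}''), never addressing that for merely connected instances WalkSAT halts at the first time $d_H(U_t,Z')=0$ for \emph{any} solution $Z'$, so adverse drift of $u^{Z}(t)$ measured against one reference solution does not by itself lower-bound the halting time (indeed the statement's hypothesis ``$|X_n|\geq \eta_2 n$'' is only meaningful relative to a choice of solution). Your proposed fix --- requiring every nonzero vector of $\ker A$ to have weight exceeding the band, so that each $u^{Z'}$ must independently traverse a region of drift $+\delta$ --- is the natural supplement, but note one further wrinkle you should make explicit: one must then union-bound the failure probability $d^n$ from the Simplified Drift Theorem over all solutions, and when $\dim\ker A=\Theta(n)$ there can be exponentially many of them, so the separation condition must additionally keep the solution count below $d^{-n}$ (e.g., via a Plotkin-type bound when $\eta_2>1/2$). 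With that caveat recorded, your three drift invocations match or improve on the paper's argument.
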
 
\begin{proof}

Case (i)., a positive drift case, is easily analyzed as follows: First, rewrite drift condition as 
\[
E[\Delta u(t)]=Prob[\Delta u(t)= 1]- Prob[\Delta u(t)= -1]\leq -\frac{\delta}{k} (*)
\]  Because probabilities above belong to interval (0,1), $-1\leq E[\Delta u(t)]\leq -\delta$. Also 
\begin{align*}
 Prob[\Delta u(t)= 1]= \frac{1}{2}[(Prob[\Delta u(t)= 1]+ Prob[\Delta u(t)= -1])+ \\ (Prob[\Delta u(t)= 1] - Prob[\Delta u(t)= -1])]\leq  \frac{1}{2}-\frac{\delta}{2}
\end{align*}
and 
\[
Prob[\Delta u(t)= -1]\geq  Prob[\Delta u(t)= 1]+ \frac{\delta}{k}\geq \delta. 
\]
  In fact, as long as $u(t)>0$, $\Delta u(t)$ can be stochastically upper bounded by a random variable $X_{t}$ taking only values $-1,0,1$ which has  the following properties:
 \begin{itemize} 
 \item[(a).] $Pr[X_{t}=0]=Pr[\Delta u(t)=0]$. 
 \item[(b).] $E[X_{t}]=Pr[X_{t}=1]-Pr[X_{t}=-1]=-\delta$
 \end{itemize} 
 To accomplish this, we simply "move mass" in $\Delta u(t)$ from value $-1$ to $+1$. There is "enough mass" at $-1$ because of $(*)$. 

Since hyperedge choices are independent, random variables $\Delta u_{t}$ are also independent and we can take their dominating random variables $X_{t}$  
to be independent too. 

Given arbitrary independent random variables $Z_{t}$ with values in $-1,0,1$ and $E[Z_{t}]=-\delta$ define 
chain $(Y_{t})$ by $Y_{0}=Z_{0}$, $Y_{t}=Y_{t-1}+ Z_{t}$. By standard application of elementary hitting time techniques (such as the forward equation and generating functions) to chain $Y_{t}$ 
we infer $
E_{Y_{0}} [T_{\{Y_{t}=0\}}]= Z_{0}/\delta.$ Applying this to chain $Z_{t}=\Delta u(t)$ (or, rather to $Z_{t}=Y_{t}$) we infer 
\[
E[T_{\{u(t)=0\}}] \leq n/\delta:=T(n).
\]

Case (ii): The argument is similar: we couple the process with a random walk $Y_{t}$ on the integers with a reflecting (upper bound) barrier  at $n$. We do so by requiring that for every $0\leq k< n$, $ 
Pr[\Delta u(t)=0 | u(t)=k]=Pr[\Delta Y_{t}=0 | Y_{t}=k],$ 
and redistributing the remaining probability equally between $Pr[\Delta Y_{t}=-1 | Y_{t}=k]$ and $Pr[\Delta Y_{t}=1 | Y_{t}=k]$. From the hypothesis, $u(t)$ can be stochastically dominated by $Y_{t}$, so upper bounds for the maximum hitting time of $Y_{t}$ upper bound the maximum hitting time of $u(t)$ as well. 

By the definition of $Y_{t}$

\begin{align*}
& Pr[\Delta Y_{t}=-1 | Y_{t}=k]=Pr[\Delta Y_{t}=1 | Y_{t}=k]= \\ 
& \frac{1}{2}(Pr[\Delta u(t)=-1 | u(t)=k]+Pr[\Delta u(t)=-1 | u(t)=k])\geq \\ 
& \frac{1}{2}(Pr[\Delta u(t)=-1 | u(t)=k] 
\geq \frac{1}{2}\cdot \frac{1}{\tau_{odd}(H(\Phi_{n}))}.
\end{align*}
 Now we apply to chain $Y_{t}$ Lemma 10 from \cite{aldous-fill-book}), Chapter 14, comparing $Y_{t}$ with the simple unbiased lazy random walk on the integers, whose maximum hitting time is $\Theta(n^{2})$ The conclusion is that $T(n)$ can be taken to be $O(n^2\cdot \tau_{odd}(H(\Phi_{n})))$.  

Case (iii). is proved using an argument similar to that of  Theorem~\ref{exp1}. 
\end{proof}

\section{Beyond unique satisfiability.}
\label{duality}
The previous section showed the relevance of concepts such as odd cuts and odd Cheeger drift in the analysis of  algorithm WalkSAT. However, the unique satisfiability restriction on formulas is a serious restriction, that hampers the practical applicability of the result: it is not even clear that uniquely satisfiable formulas falling into Cases (i) and (ii). of the theorem exist ! A difficulty in extending the analysis beyond the uniquely satisfiable case is the lack of a good analog of the progress measure $u(t)$: when the system has multiple solutions the set of variables can no longer be partitioned into good and bad ones. In such cases the formula may still have a backbone or spine \cite{aimath04}, but typically \cite{ibrahimi2012set,achlioptas2015solution} formulas  have "minibackbones" corresponding to local clusters of solutions, whose values differ between the exponentially many different clusters. Some variables may be outside the 2-core of the formula hypergraph, playing no role in its satisfiability, but be dependent on the variables in the 2-core (and possibly important in the dynamics of WalkSAT).

\noindent In this section we show that a different route works sometimes: rather than concentrating on {\it variable-based} measures of progress, we will instead concentrate on the dynamics of  {\it clause-based measures}. The analysis will require us to consider structural properties of the {\it dual} of the formula hypergraph: 

\begin{definition}
 Given instance $\Phi$ of $k$-XORSAT, the {\it triadic dual } $D(\Phi)$ of $\Phi$ is an undirected
hypergraph with self-loops
$D(\Phi)=(\overline{V},\overline{E})$ defined as
follows: $\overline{V}$ is the set of equations of
$\Phi$. Hyperedges in $D(\Phi)$ correspond to variables in $\Phi$
and connect all equations 
containing a given variable. In particular
we add a self-loop to an equation (vertex) $v$ if it contains a variable appearing {\it only in 
$v$}. We may even add multiple self-loops to the same vertex. In other words $D(\Phi)$ is simply the dual of the formula hypergraph of $\Phi$. 
\end{definition}

Note that if $\Phi$ is an instance of $k$-XORSAT then $D(\Phi)$ is a {\bf $k$-regular hypergraph} (i.e. every vertex has degree exactly $k$). Examples of duality are displayed in Fig.~\ref{dual1} and~\ref{dual2}. In both cases the border nodes, edges in the primal hypergraph correspond to variables of the formula, and triangles to equations; vertices of the dual correspond to equations as well. Self-loops in the dual, correspond to variables in the primal formula appearing {\bf exactly} in one equation. 

\begin{example} 
\label{hnru}
Consider Let $1\leq r\leq n$ and let $u:{{n}\choose {r}}\rightarrow \{0,1\}$, where by ${{n}\choose {r}}$ we have denoted the family of subsets of $\{1,2,\ldots, n\}$ having exactly $r$ elements. Define system $H(n,r,u)$ by equations: 
\[
\sum_{A\in {{n}\choose {r}}, A\ni i} X_{A}=\sum_{A\in {{n}\choose {r}}, A\ni i} u(A). 
\]
$H(n,r,u)$ is a satisfiable instance of ${{n}\choose {r}}$-XOR-SAT.  Its triadic dual, 
$D(H(n,r,u))$, is isomorphic to the complete $r$-ary hypergraph $K(n,r)$. 
\end{example}

\begin{figure}
\label{dual1}
\begin{center} 
\includegraphics[width=10cm]{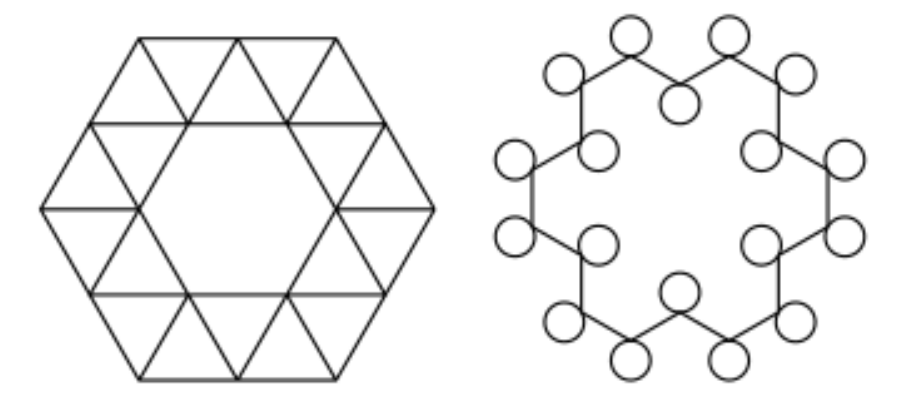}
\end{center} 
\caption{The triadic cycle with 18 triangles, and its triadic dual. }
\label{dual1}
\end{figure}

\begin{figure}
\label{dual2}
\begin{center} 
\includegraphics[width=10cm]{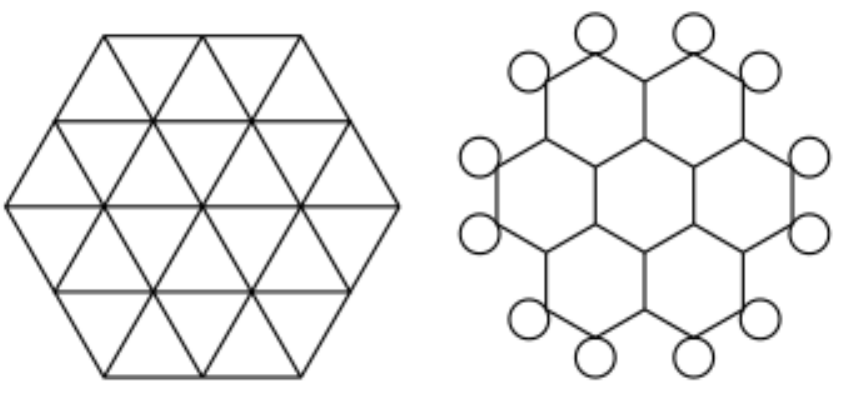}
\end{center} 
\caption{A section of the triangular lattice and its triadic dual.}
\label{dual2}
\end{figure}

Odd-connected hypergraphs capture an important class of formulas: 

\begin{proposition} 
Instances $\Phi$ of $k$-XOR-SAT is acyclic iff the triadic dual of its formula graph is odd-connected. 
\end{proposition}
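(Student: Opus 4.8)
The plan is to translate everything into linear algebra over ${\bf Z}_{2}$. Let $M\in {\bf Z}_{2}^{m\times n}$ be the coefficient matrix of $\Phi$ (the matrix $A$ of the defining system), with rows indexed by the equations $C_{1},\dots,C_{m}$ and columns by the variables $x_{1},\dots,x_{n}$, so that $M_{ij}=1$ iff $x_{j}$ occurs in $C_{i}$. A subformula $\Psi$ is the same data as a set $A\subseteq\{C_{1},\dots,C_{m}\}$ of vertices of $D(\Phi)$, which I encode by its indicator vector $\mathbf{1}_{A}\in{\bf Z}_{2}^{m}$. The single computation driving the whole proof is that, for each variable $x_{j}$, the number of equations of $\Psi$ containing $x_{j}$ equals $|A\cap e_{j}|$, where $e_{j}$ is exactly the hyperedge of $D(\Phi)$ associated with $x_{j}$: its support is the set of equations containing $x_{j}$, i.e. the support of the $j$-th column of $M$ (self-loops being the columns supported on a single equation). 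Hence $|A\cap e_{j}|\equiv (\mathbf{1}_{A}^{T}M)_{j}\pmod 2$.

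First I would record the resulting dictionary. By the definition of cyclicity, $\Psi$ is a cycle iff every variable occurs an even number of times in $\Psi$, i.e. iff $|A\cap e_{j}|$ is even for all $j$; by the congruence above this says $\mathbf{1}_{A}^{T}M=0$, i.e. $\mathbf{1}_{A}$ lies in the left null space $\ker(M^{T})$ of $M$. On the other hand, by the definition of even parity domination, $A$ is even parity dominating in $D(\Phi)$ iff $|A\cap e|$ is even for every hyperedge $e$, which is the very same condition. Thus cyclic subformulas, even parity dominating sets, and nonzero vectors of $\ker(M^{T})$ are three names for one object, and the correspondence respects sizes (empty set $\leftrightarrow 0$, whole vertex set $\leftrightarrow$ all-ones vector $\mathbf{1}$).

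With the dictionary in place the equivalence is immediate on the matching quantifier ranges. ``$\Phi$ is acyclic'' means no (proper) nonempty subformula is a cycle, i.e. $\ker(M^{T})$ contains no $\mathbf{1}_{A}$ with $\emptyset\neq A\neq V$; ``$D(\Phi)$ is odd-connected'' means, via the preceding proposition (recalling that $OddCut(A)=\emptyset$ is precisely the assertion that $A$ is even parity dominating), that no such $A$ is even parity dominating, i.e. $\ker(M^{T})\subseteq\{0,\mathbf{1}\}$. The self-loop case is automatically consistent: a variable confined to a single equation forces that equation out of every cyclic subformula, matching the fact that a self-loop forbids its vertex from any even dominating set. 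Throughout I would assume $\Phi$, and hence $D(\Phi)$, connected, in line with the standing convention of the paper.

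The delicate point, and the main obstacle, is exactly the boundary case $A=V$, equivalently $\mathbf{1}\in\ker(M^{T})$, which says $\Phi$ is cyclic ``as a whole'' (every variable occurring in an even number of equations, so that summing all equations yields $0=0$). Odd-connectedness is designed to disregard this global dependency, since $A=V$ is explicitly excluded and $OddCut(A)$, hence the odd Cheeger drift, is only defined for proper nonempty $A$. The hard part is therefore not the algebra but pinning down the statement so that the two sides quantify over the same range: the clean equivalence is that $\Phi$ has no proper nonempty cyclic subformula iff $D(\Phi)$ is odd-connected, and I would state and prove it in that form, treating the whole-formula case $A=V$ separately exactly as the definition of odd-connectedness does (the cycle $C_{n}$ as a $2$-XOR-SAT instance, whose triadic dual is again $C_{n}$ and is odd-connected while $\Phi$ is a single global cycle, shows why this boundary must be handled explicitly).
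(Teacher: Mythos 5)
Your proof is correct, and its core is the same dictionary the paper itself uses: the paper's entire proof is the one-line observation that vertices of $D(\Phi)$ are clauses and that an even dominating set of $D(\Phi)$ is exactly (the clause set of) a subformula in which every variable occurs an even number of times, i.e.\ a cycle. Your ${\bf Z}_{2}$-linear-algebra packaging (cyclic subformulas $=$ even parity dominating sets $=$ nonzero vectors of $\ker(M^{T})$, where $M$ is the coefficient matrix) is a faithful, more explicit rendering of that observation; it adds little for this proposition per se, though it makes the correspondence mechanically checkable, including the self-loop case. Where you genuinely go beyond the paper is the boundary case $A=V$: the paper's definition of acyclicity explicitly forbids $\Phi$ \emph{itself} from being a cycle, whereas odd-connectedness only forbids even dominating sets with $\emptyset\neq A\neq V$, so the proposition as literally stated fails in the direction ``odd-connected $\Rightarrow$ acyclic'' whenever $\Phi$ is globally cyclic but has no proper nonempty cyclic subformula. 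Your $C_{n}$ example witnesses this: the $2$-XOR-SAT cycle instance has every variable in exactly two equations (so $\Phi$ is a cycle, hence not acyclic), yet its triadic dual is again the cycle graph $C_{n}$, which is odd-connected since any even dominating set in a loopless graph must be a union of connected components. The paper's two-sentence proof is silent on this mismatch of quantifier ranges; your restated equivalence (no \emph{proper} nonempty cyclic subformula iff $D(\Phi)$ is odd-connected) is the version that is actually true, and noting that $\mathbf{1}\in\ker(M^{T})$ encodes exactly the global linear dependency that odd-connectedness is designed to disregard is a genuine correction to the statement, not pedantry.
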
 
\begin{proof} 
Vertices of $D(\Phi)$ correspond to clauses of $\Phi$. An even cut of $D(\Phi)$ corresponds to a subformula $\Phi_{1}$ of $\Phi$ such that every variable in $\Phi$ appears in an even number of clauses of $\Phi_{1}$, i.e. to a cycle. 
\end{proof}

\begin{definition} 
For any instance $\Phi$ of $k$-XORSAT let $Z$ be an assignment to the variables of $\Phi$. Let $C_{Z}$ be the configuration on $D_{\Phi}$ (called {\it the configuration dual to  $Z$}) defined as follows: a vertex $v$ has label 1 in $C_{Z}$ if and only if the corresponding equation is satisfied by $Z$.  
\end{definition} 

The starting point of our analysis is the translation by duality of WalkSAT: 

\begin{theorem} 
For any instance $\Phi$ of $k$-XORSAT let $X_{0}$ be an initial assignment to the variables of $\Phi$. Let $C(X_{0)}$ be the configuration dual to $X_{0}$. Suppose the algorithm WalkSAT on $\Phi$ with initial assignment $X_{0}$ changes variable $x$ in (unsatisfied) clause $C$, resulting in assignment $X_{1}$. Then the configuration $C(X_{1})$ dual to $X_{1}$ is obtained by flipping the values of those nodes in hyperedge $x$ of $D_{\Phi}$ (which contains node $C$ whose initial value was 1).  
\end{theorem}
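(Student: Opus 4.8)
The plan is to reduce the claim to a purely local computation over ${\bf Z}_{2}$, one equation at a time, and then to translate the outcome into the combinatorial language of the triadic dual. By definition the dual configuration $C(X)$ is nothing more than the vertex-labelling of $D(\Phi)$ that records, for each equation $e$, whether $e$ is satisfied by $X$. Hence to compare $C(X_0)$ with $C(X_1)$ it suffices to decide, for each equation $e$ separately, whether the satisfaction status of $e$ differs between $X_0$ and $X_1$; the configuration change is completely determined by this set of "flipped" equations.

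First I would fix an arbitrary equation $e$, written $\sum_{i\in Var(e)} X_i = b_e$ over ${\bf Z}_{2}$, and compare its left-hand side under $X_0$ and under $X_1$. Since $X_1$ differs from $X_0$ only in the single variable $x$, the two evaluations of $\sum_{i\in Var(e)} X_i$ differ by exactly the indicator of the event $x\in Var(e)$ (addition modulo $2$). Therefore the truth value of $e$ is reversed by the move if and only if $x\in Var(e)$, and is left unchanged otherwise. This equivalence is the heart of the argument, and the "if and only if" matters in both directions: the "if" tells us which labels do flip, and the "only if" guarantees that no vertex outside $Var(e)$-containing equations changes, so $C(X_1)$ and $C(X_0)$ agree everywhere else.

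Next I would invoke the definition of the triadic dual to rewrite the set $\{e : x\in Var(e)\}$. By construction the hyperedge of $D(\Phi)$ associated with the variable $x$ is precisely the set of equations (vertices) that contain $x$, with the self-loop convention covering the degenerate case in which $x$ occurs in a single equation. Combining this with the previous step shows that the vertices whose labels change between $C(X_0)$ and $C(X_1)$ are exactly those lying in the hyperedge $x$ of $D_{\Phi}$, which is the assertion of the theorem. Finally, since WalkSAT flips a variable of the selected clause, $x\in Var(C)$, so the vertex $C$ indeed lies in hyperedge $x$; as $C$ was unsatisfied by $X_0$ its label is the one that flips from $0$ to $1$ (becoming satisfied), which is the content intended by the parenthetical remark. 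I do not expect a genuine obstacle here: the result is essentially a definitional unwinding, and the only points demanding care are keeping both directions of the equivalence (so as to rule out spurious label changes) and treating the self-loop case uniformly with ordinary hyperedges, so that variables appearing in exactly one equation are handled correctly.
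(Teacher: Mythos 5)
Your proof is correct and takes essentially the same route as the paper's (the paper's entire proof is the observation that flipping $x$ reverses the satisfaction status of exactly the equations containing $x$, which by construction of $D(\Phi)$ — self-loops included — are precisely the vertices of hyperedge $x$); your version merely makes the ${\bf Z}_{2}$ computation and the self-loop case explicit. One small caveat on your final remark: the theorem's parenthetical takes node $C$ to have initial value $1$ (i.e.\ label $1$ = \emph{unsatisfied}, the convention needed for annihilation at ${\bf 0}$ to mean a satisfying assignment), whereas you read the definition of $C_{Z}$ literally (label $1$ = satisfied) and have $C$ flip from $0$ to $1$ — this is an inconsistency internal to the paper, not a flaw in your argument, since the flipping claim is symmetric in the two conventions.
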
 
\begin{proof} 
By changing the value of variable $x$ any equation that contains $x$ and was satisfied by $X_{0}$ becomes unsatisfied by $X_{1}$ and viceversa. On the dual this reads as follows: every vertex of the hyperedge that corresponds to variable $v$ changes value. 
\end{proof} 

Translation by duality motivates the following definition:  

\begin{definition}
Let $H=(V,E)$ be a connected hypergraph. Define an {\it annihilating random walk on $H$} (Figure~\ref{annrw}) by the following:
\begin{itemize} 
\item[(a).]
 {\bf Initial state:} Initially: $A_{i}\in  \{0,1\}$. We identify this configuration with  $\mathcal{B}=\{i\in V:A_{i}=1\}$, and call such a vertex $i$ {\it live}. 
 \item[(b).]
 {\bf Moves:} Choose pair $i,e$ consisting of a {\bf random live node} $i$ and a random hyperedge $e=(i,j_{1},\ldots, j_{k})$ containing $i$. Simultaneously set $A_{v}=A_{v}\oplus A_{i}$ for all $v\in e$ (including $v=i$, which will result in $A_{i}=0$).  
 \item[(c).] {\bf Annihilation:} The event we want to time is {\it annihilation}, defined by condition $A_{v}=0$ for all  $v$. 
 \item[(c).] {\bf Stabilizing configurations:} To be able to talk about the expected time to annihilation we will limit ourselves to \textit{stabilizing configurations}, i.e. configurations $C$ of the annihilating random walk such that the annihilating configuration \textbf{0} is reachable from $C$ and all its descendants. We will denote the set of stabilizing configurations of $G$ by $\mathcal{S}(G)$. As implicitly shown in Theorem~\ref{rec-xor}, configurations $\mathcal{B}$ obtained by duality from assignments to satisfiable instances of XOR-SAT are indeed stabilizing. 
  \end{itemize} 
  
 \end{definition} 

\begin{figure}
\begin{center}
\begin{minipage}{.4\textwidth}
\includegraphics[height=5cm,width=5cm]{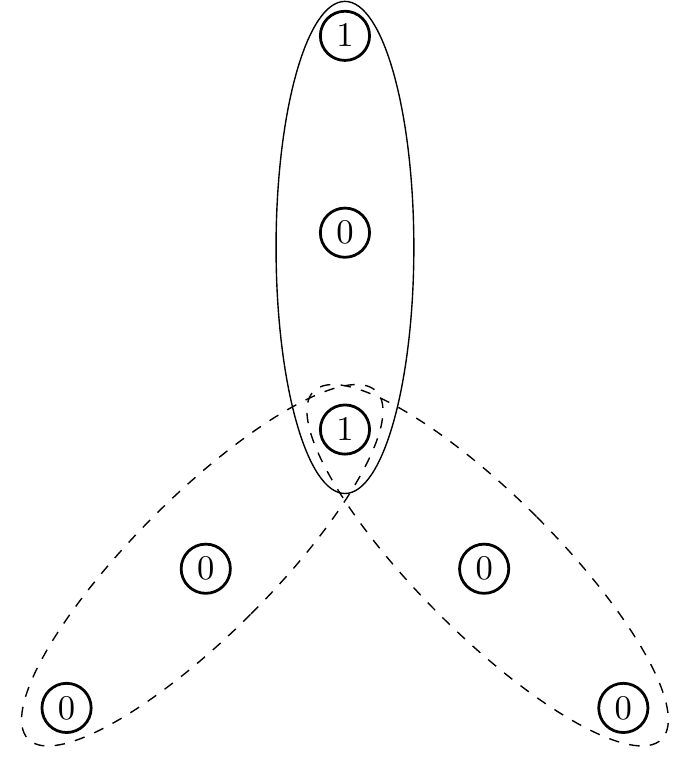}
\end{minipage} 
\begin{minipage}{.1\textwidth}
\Huge{
\[
\vdash
\]}
\end{minipage} 
\begin{minipage}{.4\textwidth}
\includegraphics[height=5cm,width=5cm]{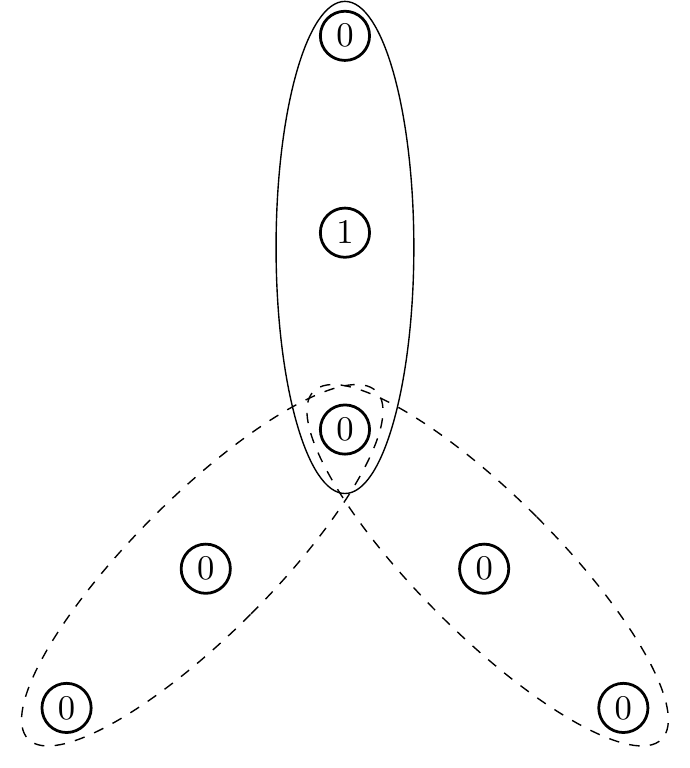}
\end{minipage} 
\end{center}  
\caption{Annihilating random walk on hypergraphs.} 
\label{annrw} 
\end{figure}

Our main result bounds the convergence time of the annihilating random walk on a connected hypergraph. By duality, it yields estimates of the expected convergence time of algorithm 
WalkSAT on a satisfiable $k$-XORSAT instance  $\Phi$. Such bounds are similar to the ones in Theorem~\ref{thm0}, except that the relevant quantity turns out to be the odd Cheeger drift of the \textit{triadic dual hypergraph $D(\Phi)$:}

\begin{theorem}
 Given family $H=(H_{n})$ of odd-connected hypergraphs, the following statements hold:

\begin{itemize} 
\item[(i).] Suppose there exists constant $\delta >0$  such that for all nonempty sets $A$,  $D_{odd}(A)\geq \delta.$ Then 
\begin{equation}
\max_{\mathcal{B}\in \mathcal{S}(H_{n})} E[c_{ann}(H_{n}, \mathcal{B})]\leq T_{H}(n),
\label{eq1}
\end{equation}

where $T_{H}(n)$ is a function such that 
\begin{equation} 
T_{H}(n)=O(n).
\end{equation}
\item[(ii).] Suppose case 1 does not apply but $D_{odd}(A)\geq 0$ for every nonempty set $A$. Then inequality~(\ref{eq1}) holds with 
\begin{equation}
T_{H}(n)=O(n^2\tau_{odd}(H_{n})).
\end{equation}
\end{itemize} 
\label{vm}
\end{theorem}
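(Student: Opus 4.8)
The plan is to run the drift analysis of Theorem~\ref{thm0} directly on $H_n$, but with the clause-side progress measure $u(t)=|\mathcal{B}_t|$, the number of live vertices, so that the annihilation event is exactly $\{u(t)=0\}$. Restricting attention to stabilizing configurations $\mathcal{B}\in\mathcal{S}(H_n)$ guarantees that $\mathbf{0}$ is reachable from every state the walk visits, so the hitting time $c_{ann}(H_n,\mathcal{B})$ of $\{u=0\}$ is almost surely finite, and it is its expectation, maximized over $\mathcal{B}$, that must be controlled. The first step is to record the one-step law of $u$: a move selects a live vertex $i$ and an incident hyperedge $\bar{e}$ with probability $\frac{1}{|\mathcal{B}_t|}\cdot\frac1k$ (using $k$-regularity), toggles all of $\bar{e}$, and hence changes the potential by $\Delta u = |\bar{e}| - 2\,|\bar{e}\cap\mathcal{B}_t|$.

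The technical heart of the argument is a lemma translating this law into the odd Cheeger quantities of $H_n$ evaluated at the current live set $\mathcal{B}_t$. Summing $\Delta u$ over the selected incidences, the ``down pressure'' should be carried by $E^{-}(\mathcal{B}_t,\overline{\mathcal{B}_t})$ and the ``up pressure'' by $E^{+}(\mathcal{B}_t,\overline{\mathcal{B}_t})$, so that the expected drift $E[\Delta u\mid\mathcal{B}_t]$ has sign and magnitude governed by $D_{odd}(\mathcal{B}_t)$. In parallel, I would isolate a family of guaranteed strictly-decreasing moves -- the self-loop moves always give $\Delta u=-1$, and more generally the odd-cut moves meeting $\mathcal{B}_t$ in a strict majority give $\Delta u<0$ -- and lower bound their probability, the target being $|E^{-}(\mathcal{B}_t,\overline{\mathcal{B}_t})|/(|V(H_n)|\,k)\ge 1/\tau_{odd}(H_n)$ via the definition of $\tau_{odd}$.

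Granting this lemma, the two cases close exactly as in Theorem~\ref{thm0}. In part (i) the hypothesis $D_{odd}(\mathcal{B})\ge\delta$ forces a uniform negative drift $E[\Delta u\mid\mathcal{B}_t]\le -c(\delta)<0$ while $u>0$; since the chain starts at $u(0)\le |V(H_n)|=:n$, the additive-drift / forward-equation computation of Theorem~\ref{thm0}(i) yields $\max_{\mathcal{B}}E[c_{ann}(H_n,\mathcal{B})]=O(n)$. In part (ii) the hypothesis $D_{odd}(\mathcal{B})\ge0$ makes the drift non-positive, and I would couple $u(t)$ with the lazy unbiased random walk on $\{0,\dots,n\}$ with a reflecting barrier at $n$, matching holding probabilities and splitting the remaining mass symmetrically; the down-probability bound $\ge\frac12\tau_{odd}(H_n)^{-1}$ together with Lemma~10 of \cite{aldous-fill-book}, Chapter~14 (lazy walk, maximum hitting time $\Theta(n^2)$) then gives $O(n^2\tau_{odd}(H_n))$.

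The main obstacle is the lemma itself, and it is exactly where the hypergraph setting departs from Theorem~\ref{thm0}: there WalkSAT's variable-side potential moves by $\pm1$, whereas here $\Delta u$ can be any integer and can be positive, since toggling a hyperedge meeting $\mathcal{B}_t$ in an even set of fewer than half its vertices creates more live particles than it annihilates -- the ``explosive'' feature flagged in the introduction. I must therefore (a) show that the contributions of hyperedges \emph{outside} $OddCut(\mathcal{B}_t)$, which appear in neither $E^{+}$ nor $E^{-}$, cannot overwhelm the odd-cut drift, and (b) stochastically dominate the resulting non-unit, two-sided increments by a one-dimensional birth--death chain to which the additive-drift theorem and the Aldous--Fill comparison legitimately apply -- bounding jump magnitudes by the maximum hyperedge size of $H_n$ and redistributing probability mass from positive to negative increments so as to preserve both the drift sign and the holding probability at every value of $u$. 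Verifying this domination, rather than the subsequent hitting-time estimates, is the crux.
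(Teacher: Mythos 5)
There is a genuine gap, and it sits exactly where you located the ``crux'': the bridging lemma you postulate is false for the annihilating random walk, not merely unproven. In the a.r.w.\ a move flips \emph{every} vertex of the chosen hyperedge, so selecting a pair $(i,e)$ with $i$ live changes your potential by $\Delta u = |e| - 2|e\cap\mathcal{B}_t|$. A pair $(i,e)\in E^{-}(\mathcal{B}_t,\overline{\mathcal{B}_t})$ only guarantees $|e\cap\mathcal{B}_t|$ odd; if, say, $|e|=5$ and $|e\cap\mathcal{B}_t|=1$, the pair lies in $E^{-}$ yet $\Delta u = +3$. This is not a pathological corner case but the typical ``explosive'' move when the live set is sparse, so $E^{-}$ does not carry down pressure and $E^{+}$ does not carry up pressure. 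Moreover, hyperedges with even intersection $|e\cap\mathcal{B}_t|\geq 2$ are incident to live vertices, are selectable, and move $u$ by $|e|-2|e\cap\mathcal{B}_t|$ of either sign, yet they appear in neither $E^{+}$ nor $E^{-}$; and the a.r.w.\ weights each edge by the number of its live vertices (selection probability $1/(|\mathcal{B}_t|k)$ per incidence), whereas $D_{odd}$ compares the unnormalized counts $|E^{\pm}|$. For these reasons the hypothesis $D_{odd}(A)\geq\delta$ for all $A$ places no usable constraint on $E[\Delta u\mid \mathcal{B}_t]$, and your step (a) cannot be carried out; once the drift lemma falls, both cases collapse (your step (b) is secondary --- in case (i) the additive drift theorem tolerates unbounded jumps anyway, so the domination issue was never the real obstruction).

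This failure is precisely why the paper does not analyze the live-particle count of the a.r.w.\ at all. Instead it passes to the \emph{lazy} a.r.w.\ (which can only increase the annihilation time), couples it with the coalescing random walk (Theorem~\ref{coup-hypergraphs}), uses the left-right-path duality to the multiset voter model (Theorem~\ref{dual}), and then couples with the two-party voter model started from a random $(\tfrac12-\epsilon)$-biased set $D_\epsilon$ (Theorem~\ref{coup-voter} and the $\frac{2}{1-2\epsilon}$ comparison lemma), with odd-connectedness invoked through Theorem~\ref{rec-voter} and Lemma~\ref{help} to replace parity on $\mathcal{B}$ by absorption at $\mathbf{0}$ (or $\{\mathbf{0},\mathbf{1}\}$). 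The payoff of this long detour is structural: in the two-party voter model only the \emph{selected} vertex updates its label, so $N_t^{D_\epsilon}$ moves by $\pm 1$ or $0$, and one gets the exact identity
\begin{equation*}
P[\Delta N_{t}^{D_{\epsilon}}=-1]-P[\Delta N_{t}^{D_{\epsilon}}=+1]=\frac{|E^{-}(D_{t},\overline{D_{t}})|-|E^{+}(D_{t},\overline{D_{t}})|}{nk},
\end{equation*}
after which the drift computations of Theorem~\ref{thm0} (i), (ii) apply verbatim, with $P[\Delta N_t^{D_\epsilon}=-1]\geq 1/\tau_{odd}(H_n)$ supplying the case (ii) correction. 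In short: your endgame matches the paper's, but the single-site-update model that makes $D_{odd}$ the governing quantity has to be manufactured by the coupling/duality chain; it is not available on the annihilating walk directly.
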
 

\begin{corollary}

Given family $\Phi=(\Phi_{n})$ of  \textbf{acyclic} instances of $k$-XOR-SAT, the following statements hold:  

\begin{itemize} 
\item[(i).] Suppose there exists constant $\delta >0$  such that for all nonempty sets $A$,  $D_{odd}(A)\geq \delta.$ Then 
\begin{equation}
\max_{X_{n}\in \{0,1\}^{n}} E[T_{WalkSAT}(\Phi_{n},X_{n})]\leq T(n),
\label{eq1}
\end{equation}

where $T(n)$ is a function such that 
\begin{equation} 
T(n)=O(n).
\end{equation}
\item[(ii).] Suppose case 1 does not apply but $D_{odd}(A)\geq 0$ for every nonempty set $A$. Then inequality~(\ref{eq1}) holds with 
\begin{equation}
T(n)=O(n^2\tau_{odd}(D(\Phi_{n}))).
\end{equation}
\end{itemize} 
\label{cor}
\end{corollary}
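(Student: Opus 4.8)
\section*{Proof proposal for Corollary~\ref{cor}}

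The plan is to obtain the Corollary as an essentially immediate consequence of Theorem~\ref{vm} via the duality translation established just before the statement of the annihilating random walk. First I would fix an acyclic instance $\Phi_{n}$ of $k$-XOR-SAT and let $H_{n}=D(\Phi_{n})$ be its triadic dual. Two structural facts make the two theorems line up. By the Proposition stating that $\Phi$ is acyclic iff the triadic dual of its formula graph is odd-connected, acyclicity of the family $(\Phi_{n})$ is exactly equivalent to odd-connectedness of the family $(H_{n})$, so the hypotheses of Theorem~\ref{vm} are available. Moreover, since $\Phi_{n}$ is an instance of $k$-XOR-SAT, the remark following the definition of the triadic dual guarantees that each $H_{n}$ is $k$-regular, which is precisely the regularity needed for the odd Cheeger time $\tau_{odd}(H_{n})$ used in case (ii) to be well-defined.

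Next I would set up the correspondence between the two dynamics. Given an initial assignment $X_{n}$ to the variables of $\Phi_{n}$, form the dual configuration $C(X_{n})$ on $H_{n}$, where a vertex (equation) is labeled by whether $X_{n}$ satisfies it. The Theorem translating WalkSAT by duality shows that a single WalkSAT move---picking an unsatisfied clause $C$ and flipping a variable $x$ occurring in $C$---corresponds exactly to an annihilating-random-walk move on $H_{n}$: the chosen live node is the vertex $C$ (whose label is $1$, i.e.\ the clause is unsatisfied), the chosen hyperedge is the one dual to $x$, and the update flips the labels along that hyperedge via the $A_{v}\leftarrow A_{v}\oplus A_{i}$ rule. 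Thus the WalkSAT trajectory on $\Phi_{n}$ started at $X_{n}$ and the annihilating-random-walk trajectory on $H_{n}$ started at $\mathcal{B}=C(X_{n})$ are identically distributed step-for-step, and reaching a satisfying assignment corresponds precisely to the annihilation event $\mathbf{0}$. Here I would invoke the remark inside the annihilating-walk definition (and the reference to Theorem~\ref{rec-xor}) that configurations arising by duality from assignments to satisfiable XOR-SAT instances are stabilizing, so $C(X_{n})\in\mathcal{S}(H_{n})$ and the annihilation time is finite in expectation. Consequently
\[
T_{WalkSAT}(\Phi_{n},X_{n})=c_{ann}\bigl(H_{n},C(X_{n})\bigr)
\]
as random variables, whence $E[T_{WalkSAT}(\Phi_{n},X_{n})]=E[c_{ann}(H_{n},C(X_{n}))]$.

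Finally I would push the maximum through the bijection and read off the bounds. Since $X_{n}\mapsto C(X_{n})$ sends assignments into $\mathcal{S}(H_{n})$, we have
\[
\max_{X_{n}\in\{0,1\}^{n}} E[T_{WalkSAT}(\Phi_{n},X_{n})]
\;\le\;
\max_{\mathcal{B}\in\mathcal{S}(H_{n})} E[c_{ann}(H_{n},\mathcal{B})],
\]
so applying Theorem~\ref{vm}(i) under the uniform positive-drift hypothesis $D_{odd}(A)\ge\delta$ yields the $O(n)$ bound of case (i), while applying Theorem~\ref{vm}(ii) under the nonnegative-drift hypothesis yields the $O(n^{2}\tau_{odd}(H_{n}))$ bound. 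Substituting $H_{n}=D(\Phi_{n})$ gives exactly the stated $O(n^{2}\tau_{odd}(D(\Phi_{n})))$ in case (ii), completing the reduction.

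The main obstacle I anticipate is not the drift analysis---that is entirely packaged inside Theorem~\ref{vm}---but making the step-for-step coupling of the two processes genuinely faithful. In particular one must check that the \emph{sampling distributions} match: WalkSAT first picks a uniformly random unsatisfied clause and then a uniformly random variable of that clause, whereas the annihilating walk picks a random live node and then a random incident hyperedge. These coincide only because live nodes of $C(X_{n})$ are exactly the unsatisfied clauses and incident hyperedges are exactly the variables of the corresponding clause; I would verify that the induced probabilities on (node, hyperedge) pairs agree so that the two trajectories are truly equidistributed, and would also confirm that the parameter $n$ indexing $H_{n}$ (number of equations / dual vertices) is compatible with the $n$ appearing in the $O(n)$ and $O(n^{2})$ estimates inherited from Theorem~\ref{vm}.
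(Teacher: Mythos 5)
Your proposal is correct and follows exactly the route the paper intends: the corollary is read off from Theorem~\ref{vm} via the duality translation of WalkSAT into the annihilating random walk on $D(\Phi_{n})$, combined with the proposition that acyclicity of $\Phi_{n}$ is equivalent to odd-connectedness of $D(\Phi_{n})$, the $k$-regularity of the dual, and the remark that dual configurations of assignments to satisfiable instances are stabilizing. The two caveats you flag --- that the sampling distributions of the two processes agree (which holds precisely because every dual vertex has degree exactly $k$, so uniform choice of an unsatisfied clause followed by a uniform variable of it matches uniform live node followed by a uniform incident hyperedge) and that the $n$ in the bounds counts dual vertices, i.e.\ equations --- are exactly the points the paper leaves implicit, and your handling of them is sound.
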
 

A few comments on the bounds on $T_{H}(n)$ from the above result are in order:  
\begin{itemize}
\item[-] Proving such bounds will require hypergraph analogues of annihilating and coalescing random walks, as well as of the voter model. Unlike their graph counterparts, the ordinary  variants of random walks on hypergraphs \cite{cooper2013cover} or the recently defined {\it $s$-walks} \cite{lu2012loose}, these models will be {\it explosive}: the number of "particles" may in generally increase at a given step and be unbounded (for  CRW and the voter model). Such properties make these models qualitatively different from their graph versions.   

\item[-] Our result can be restated as the claim that the odd Cheeger drift dictates the nature of the convergence time of 
the multiset voter model (and, with it, of the application to algorithm WalkSAT). Of course, drift-based methods are well established in the analysis of local search heuristics \cite{he2001drift,droste2002analysis,lehre2014concentrated}. Our contribution is, therefore, \textbf{the identification of the odd Cheeger drift as the relevant quantity driving the dynamics of 
WalkSAT.}
\item[-] Conditions in Cases (i), (ii) are, of course, not exhaustive. They are chosen by analogy with the (first two) possible situations in drift analysis (and, more generally, in that of mixing in Markov chains) with tractable/intractable behavior arising from the connectedness of the configuration space (expansion and the existence of a bottleneck cut, respectively)

\item[-] The result arises from considering the evolution in time of $N_{t}$, the number of vertices having an odd number of opinions in the multiset voter model (or, equivalently of the coupled two-state voter model). In one update step $N_{t}$ can only go up/down by 1 or stay the same. On the other hand, for odd-connected hypergraphs $H$, state ${\bf 0}$ is the only absorbing state of the two-state voter model: for every configuration $C$ different from ${\bf 0}$ at least one edge $e$ has an odd number of ones in $C$. Let $v$ be a vertex in $e$ labeled 1 in $C$. Scheduling pair $(v,e)$ decreases the number of ones. 

\item[-] In case (i). of the Theorem the walk has a positive bias towards zero. The convergence time is therefore linear. In case (ii). the random walk is at worst unbiased. It will hit zero only as a result of diffusive behavior. The convergence time would be quadratic if the probability of moving to the left/right would be (lower bounded by) a constant. It is not in general, thus we need to slightly alter the result to take into account this phenomenon. The Cheeger constant appears in the final upper bound. 

\end{itemize}

\begin{example} Case (i). of Corollary~\ref{cor} is illustrated by instances whose formula hypergraph are the triadic cycle (Figure~\ref{dual1}). The triadic dual $T_{3}(G)$ of this graph is depicted in Figure~\ref{dual1} as well. By the computation in Example 2 in Section~3, for every $A\subseteq V(T_{3}(G))$ 
\[
D_{odd}(A)=\frac{|A|}{2|Oddcut(A)|-|A|}\geq\frac{1}{3}. 
\]
Indeed, the last inequality is equivalent to $4 |A|\geq 2 |OddCut(A)|$ which is evident, since each node in $A$ has degee 3 and a self-loop, hence at most two adjacent edges in $OddCut(A)$. 

We infer that the convergence time of WalkSAT on the corresponding XOR-SAT instance is linear. 
\end{example}

\begin{example} 
An example of a family of formulas falling in the case (ii). of the Corollary is a slight modification of the example in Figure~\ref{dual2}, the hexagonal finite section of the triangular lattice: take \textbf{two} congruent copies of a hexagonal section of the triangular lattice. Place each of them on a hemisphere, gluing the boundary edges to the equator. Glue the two hemispheres into a sphere, obtaining a graph $G_{n}$. One can easily create a family of instances whose formula graph is isomorphic to $G_{n}$: for each edge of $G_{n}$ consider value $u(e)\in \{0,1\}$. Define satisfiable instance $H(n,H_{6},u)$ of 3-XOR-SAT by adding for each triangle of $G_{n}$ composed of, say, edges $e_{1},e_{2},e_{3}$, equation 
\[
X_{e_{1}}+X_{e_{2}}+X_{e_{3}}= u_{e_{1}}+u_{e_{2}}+u_{e_{3}}. 
\]

The triadic dual of this formula may be represented on the sphere as well, and is a graph without loops (hence falling, by Example~\ref{drift-zero}, into Case 2 of Corollary~\ref{cor}) consisting of regular hexagons, with the exception of the patches across the equator, where hexagons and rectangles alternate. 

We don't know how to compute the Cheeger time of this class of examples, though. 
\end{example}

\subsection{Annihilating random walks: reachability, recurrence and stabilization} 
\label{reach}
If the hypergraph $H$ is actually a graph without loops, the long-term structure of configurations of the annihilating random walk is simple: either a single live particle survives (if $|V(H)|$ is odd) or none do. In the general case the nature of recurrent states may be more complicated: the number of live particles does {\bf not} always (weakly) decrease, as it does the case in the graph setting. There may be, therefore, recurrent states different from ${\bf 0}$ and those states with a single live node. A necessary condition for reachability was given in \cite{istrate-balance}:

\begin{definition}
For every pair of boolean configurations
$w_{1},w_{2}:V(H)\rightarrow {\bf Z}_{2}$ on hypergraph $H$ we
define a system of boolean linear equations $H(w_{1},w_{2})$ as
follows: Define, for each hyperedge $e$ a variable $z_{e}$ with
values in ${\bf Z}_{2}$. For any vertex $v\in V(H)$ we define the
equation
$ \sum_{v\in e}z_{e}=w_{2}(v)-w_{1}(v).$
In the previous equation the difference on the right-hand side is
taken in ${\bf Z}_{2}$; also, we allow empty sums on the left side.
System $H(w_{1},w_{2})$ simply consists of all equations, for all $v\in V(H)$.
\end{definition}

\begin{lemma} \label{if}
 If $w_{2}$ is reachable from $w_{1}$ then system \\ $H(w_{1},w_{2})$ has a solution in ${\bf Z}_{2}$.
\end{lemma}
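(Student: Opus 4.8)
The plan is to track, for each hyperedge, the parity of the number of times it is used along a witnessing sequence of moves, and to show that these parities constitute a solution of the system $H(w_{1},w_{2})$. The starting observation is that a single move of the annihilating random walk, when it selects a live node $i$ (so that $A_{i}=1$) together with a hyperedge $e\ni i$, performs the update $A_{v}\mapsto A_{v}\oplus A_{i}=A_{v}\oplus 1$ for every $v\in e$; that is, a move simply flips the labels of \emph{all} vertices of the chosen edge $e$. The requirement that $e$ contain a live node governs which moves are \emph{legal}, but it plays no role in computing the net effect of a move on the labels, which is what we need here.

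Since $w_{2}$ is reachable from $w_{1}$, by definition of the transitive closure $\Vdash$ there is a finite sequence $w_{1}=c_{0}\vdash c_{1}\vdash\cdots\vdash c_{T}=w_{2}$, where each step $c_{t-1}\vdash c_{t}$ flips the labels of some hyperedge $e_{t}\in E$. For each edge $e$ I would let $z_{e}\in {\bf Z}_{2}$ be the parity of $|\{t:e_{t}=e\}|$, the number of times $e$ is used along the sequence.

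The key computation is the net effect on a fixed vertex $v$. Its label is toggled exactly at those steps $t$ for which $v\in e_{t}$, and each toggle adds $1$ modulo $2$; hence over ${\bf Z}_{2}$ we have $w_{2}(v)-w_{1}(v)=\sum_{t:\,v\in e_{t}}1=\sum_{e\ni v}|\{t:e_{t}=e\}|=\sum_{e\ni v}z_{e}$, where the middle equality groups the steps according to which edge was used. This manipulation relies only on the facts that addition in ${\bf Z}_{2}$ is commutative and that each flip is its own inverse, so the order of the moves is immaterial and only the per-edge usage parities survive. Reading this identity for every $v\in V(H)$ yields precisely the defining equations of $H(w_{1},w_{2})$, so $(z_{e})_{e\in E}$ is a solution in ${\bf Z}_{2}$, as claimed.

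I do not anticipate a serious obstacle: the only point requiring care is the reduction of a move to an unconditional edge-flip, i.e. the role of the liveness condition $A_{i}=1$ in turning $A_{v}\oplus A_{i}$ into a genuine flip; once this is noted, the remainder is a routine parity bookkeeping argument. It is worth emphasizing that only the necessity direction is being asserted: the converse, realizing a prescribed multiset of edge-flips as a \emph{legal} sequence of moves, would be genuinely harder, since it would require a live node to be available at each step, and that is exactly why Lemma~\ref{if} is stated as a one-directional necessary condition.
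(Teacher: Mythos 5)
Your proposal is correct and follows essentially the same route as the paper's own proof: define $z_{e}$ as the parity of the number of times edge $e$ is scheduled along a witnessing path, and observe that since the chosen node is live, each move flips the labels of all vertices of the chosen edge, so the net change at each vertex $v$ is $\sum_{e\ni v}z_{e}$ in ${\bf Z}_{2}$. Your write-up merely makes explicit two points the paper leaves implicit, namely that liveness ($A_{i}=1$) is what turns the update $A_{v}\oplus A_{i}$ into an unconditional flip, and that commutativity of addition in ${\bf Z}_{2}$ lets one group steps by edge.
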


\begin{proof}
Let $P$ be a path from $w_{1}$ to $w_{2}$ and let $z_{e}$ be the number of times edge $e$ is used on path $P$ (mod 2). Then $(z_{e})_{e\in E}$ is a solution of system $H(w_{1},w_{2})$. Indeed, element $w(v)$ (viewed modulo 2) flips its value any time an edge containing $v$ is scheduled.

\end{proof}

In \cite{istrate-balance} a partial converse of Lemma~\ref{if} was claimed. As we show in the Appendix (Theorem~\ref{graph}) such a result is, however, {\bf not} true, not even in restricted settings. This motivates the following 

\begin{open} What is the complexity of deciding the following problem:  Given connected hypergraph $H$ and two configurations $w_{1},w_{2}$ of an annihilating random walk on $H$, is $w_{2}$ is reachable (recurrent) from $w_{1}$ ?
\end{open} 

Note, however, that in the case we are interested in, the one corresponding to the setting of a satisfiable XOR-formula,  we have $w_{2}=\textbf{0}$ and 
we {\bf do} have a converse, which yields an easy characterization of stabilizing configurations for the annihilating random walk: 

\begin{theorem} 
Configuration $w_{1}$ is a stabilizing configuration for the annihilating random walk on $H$ \textbf{if and only if} the system $H(w_{1},\bf{0})$ has a solution in $\textbf{Z}_{2}$. 
\end{theorem}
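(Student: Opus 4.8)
The plan is to prove the two implications separately, disposing of the ``only if'' direction by citing the already-established Lemma~\ref{if}, and concentrating the real effort on the ``if'' direction, which I would split into a move-invariance observation and a reachability induction. For the forward direction, suppose $w_{1}$ is stabilizing. By definition this entails, in particular, that $\mathbf{0}$ is reachable from $w_{1}$ itself; applying Lemma~\ref{if} with $w_{2}=\mathbf{0}$ then immediately yields a solution of $H(w_{1},\mathbf{0})$ in $\mathbf{Z}_{2}$, and nothing further is required.

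For the converse I would first record the structural fact that solvability of $H(\,\cdot\,,\mathbf{0})$ is \emph{preserved} by the dynamics. A single move flips all labels of some hyperedge $e$ (the live vertex $i\in e$ only licenses the move), sending a configuration $C$ to $C\oplus\mathbf{1}_{e}$, where $\mathbf{1}_{e}$ is the indicator vector of $e$. If $(z_{e'})_{e'}$ solves $H(C,\mathbf{0})$, then toggling the single coordinate $z_{e}$ yields a solution of $H(C\oplus\mathbf{1}_{e},\mathbf{0})$: for each vertex $v$ the left-hand side changes by exactly $\mathbf{1}_{e}(v)$, matching the change $C(v)\mapsto C(v)+\mathbf{1}_{e}(v)$ on the right. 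Hence, if $H(w_{1},\mathbf{0})$ is solvable, so is $H(C,\mathbf{0})$ for \emph{every} descendant $C$ of $w_{1}$.

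It then remains to establish the local claim that whenever $H(w,\mathbf{0})$ has a solution $z$, the configuration $\mathbf{0}$ is reachable from $w$. I would argue by induction on the weight $|S|$ of the chosen solution, $S=\{e:z_{e}=1\}$. If $w=\mathbf{0}$ we are done; otherwise some vertex $i$ is live, and its defining equation $\sum_{e\ni i}z_{e}=w(i)=1$ has odd right-hand side, forcing an edge $e\ni i$ with $z_{e}=1$. The pair $(i,e)$ is a legal move precisely because $i$ is live, and flipping $e$ passes to a configuration whose inherited solution $S\setminus\{e\}$ is strictly smaller, so induction closes the argument. Combining this with the invariance above shows that $\mathbf{0}$ is reachable from $w_{1}$ and from each of its descendants, which is exactly the definition of a stabilizing configuration.

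The main obstacle, and what makes the statement more than a restatement of Lemma~\ref{if}, is the lit-only constraint in the reachability induction: an arbitrary edge set summing to $w$ cannot be discharged in an arbitrary order, since every flip must be initiated at a currently live vertex. The resolution above handles this by always selecting the edge to flip \emph{through} a live vertex supplied by the parity of that vertex's equation, which simultaneously guarantees legality of the move and a decrease of $|S|$; the complementary point, the move-invariance of solvability, is what upgrades ``reachable from $w_{1}$'' to ``reachable from $w_{1}$ and all its descendants.'' I note that this induction is the dual, clause-based form of the argument already used in Theorem~\ref{rec-xor}, so one could alternatively deduce the converse by encoding the annihilating walk on $H$ as WalkSAT on the XOR-SAT instance with variables indexed by the hyperedges of $H$ and equations $\sum_{e\ni v}x_{e}=w_{1}(v)$, whose satisfiability coincides with solvability of $H(w_{1},\mathbf{0})$.
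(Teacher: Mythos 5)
Your proposal is correct, but for the converse it takes a different, more self-contained route than the paper. Necessity is handled identically (via Lemma~\ref{if}). For sufficiency, the paper argues by reduction: from a solution $u$ of $H(w_{1},\mathbf{0})$ it builds the XOR-SAT system $S$ with one variable $x_{e}$ per hyperedge and equations $\sum_{e\ni v}x_{e}=w_{1}[v]$, observes $S$ is satisfiable, and invokes Theorem~\ref{rec-xor}: WalkSAT reaches a solution from \emph{any} assignment, and under duality WalkSAT moves on $S$ are exactly annihilating-walk moves on $H$, with satisfying assignments corresponding to $\mathbf{0}$ --- precisely the alternative you sketch in your closing remark. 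Your main argument instead stays entirely inside the a.r.w.\ formalism, splitting the converse into (a) invariance of solvability of $H(\cdot,\mathbf{0})$ under the move $C\mapsto C\oplus\mathbf{1}_{e}$ (toggle $z_{e}$), and (b) an induction on $|S|$, $S=\{e:z_{e}=1\}$, where the odd parity of the live vertex's equation supplies a legal, flippable edge. Under the duality dictionary your induction on $|S|$ is exactly the Hamming-distance induction inside the proof of Theorem~\ref{rec-xor} (for the assignment $X_{1}\equiv 0$ one has $d_{H}(X_{1},u)=|S|$), so the mathematical core coincides; what your version buys is explicitness on two points the paper leaves implicit or garbles. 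First, the definition of stabilizing requires $\mathbf{0}$ to be reachable from $w_{1}$ \emph{and all its descendants}; your invariance step (a) delivers this directly, whereas the paper relies tacitly on Theorem~\ref{rec-xor} applying from an arbitrary assignment. Second, the paper's proof says ``in particular $X_{1}\equiv u$,'' which is a slip: $u$ already satisfies $S$ and its dual configuration would be $\mathbf{0}$; the assignment whose dual configuration is $w_{1}$ is the all-zeros one. The paper's version, in exchange, is shorter and reuses an already-proved theorem rather than reproving its induction.
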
 
\begin{proof} 

Necessity follows by Lemma~\ref{if}. 

Suppose now that $w_{1}$ is a configuration on hypergraph $H$ such that system $H(w_{1},0)$ has a solution $u\in \textbf{Z}_{2}$. Thus 
$w_{1}[v]=\sum_{e\ni v} u(e)$. 

Consider the system $S$ naturally corresponding to $H$. That is, variables $x_{e}$ of $S$ correspond to hyperedges $e$ of $H$. Equations of $S$ are defined to be 
\[
\sum_{e \ni v} x_{e}= w_{1}[v] (=\sum_{e\ni v} u(e)), v\in V(H).
\]
$S$ is satisfiable (since $u$ is a solution), hence starting with any initial assignment $X_{1}$ (in particular $X_{1}\equiv u$) we will reach a solution $X_{2}$. But the configuration corresponding to $X_{1}$ is indeed $w_{1}$, and the configuration corresponding to $X_{2}$ is ${\bf 0}$. 

\end{proof} 

\section{Theorem~\ref{vm}: Plan of the proof} 

In order to prove Theorem~\ref{vm}, it will prove more convenient to analyze annihilating random walks on $k$-uniform hypergraphs with one additional twist: we will study the {\it lazy} version of a.r.w., the one in which the choice of node $i$ is not restricted to live nodes only.  More precisely, moves are specified as follows: 
\begin{itemize} 
\item[-] Choose random node $i$ and random edge $(i,j_{1},\ldots, j_{k})$ containing $i$. \item[-] Simultaneously set $A_{v}=A_{v}\oplus A_{i}$ for all $v\in e$ (including $v=i$, which will result in $A_{i}=0$). 
\end{itemize}

Making the a.r.w. lazy increases, of course, the annihilation time, thus providing an upper bound on the convergence time of the WalkSAT algorithm in the dual model. 

The plan of the proof can be described as follows: 
\begin{itemize} 
\item[-] We will control annihilation using another IPS, an extension of \textit{coalescing random walks} to hypergraphs. 
\item[-] We introduce an extension of the voter model to hypergraphs called \textit{the multiset voter model} and extend the classical duality \cite{aldous-fill-book} between coalescing random walks and voter model to their new (explosive) versions. 
\item[-] We introduce a two-party version of the multiset voter model and analyze it in terms of a "Cheeger-time." 
\end{itemize}

\section{Explosive random walks and interactive particle systems on hypergraphs.}

Next we define an analogue of coalescing random walks for hypergraphs (Figure~\ref{two-graph-coal}): 

\begin{definition} 
Let $H=(V,E)$ be a connected hypergraph. Each vertex holds a multiset of labels $A_{i}$. 
Define a {\it coalescing random walk on $H$} by the following: 
\begin{itemize}
\item[(a).] \textbf{Initial state:}
 $A_{i}\subseteq \{i\}$. Note that $\mathcal{B}:= A_{1}\cup A_{2}\cup \ldots \cup A_{n}\subseteq [n]$. We will call a vertex $i$ with $|A_{i}|=$ odd {\it live}. 
\item[(b).] \textbf{Moves:} Given node $i$ and hyperedge $e=(i,j_{1},j_{2},\ldots, j_{k})$ 
updating pair $(i,e)$ 
 proceeds by making $A_{j_{r}}:=A_{j_{r}}\uplus A_{i}$, for $r=1,\ldots, k$ and $A_{i}=\emptyset$. Here $\uplus$ refers to the {\bf multiset sum}, i.e. union with multiplicities. Note that the move never destroys any label (always $A_{1}\cup A_{2}\cup \ldots \cup A_{n}=[n]$) but may make some indices $i$ satisfy  $|A_{i}|=$ even. 
\item[(c).] \textbf{Parity (coalescence) from $\mathcal{A}$ on $\mathcal{B}$:} Given sets of vertices $\mathcal{A}, \mathcal{B}\subseteq V(G)$, {\it $c_{coal}(H, \mathcal{B})$} is the minimum $t\geq 0$ such that, if starting with $A_{v}=\{v\}$ when $v\in \mathcal{A}$, $A_{v}=\emptyset $ otherwise, at time $t$ $|A_{j}|$ is even for every  $j\in \mathcal{B}$. 
\end{itemize} 
\end{definition} 

\begin{figure}[ht]
\begin{center} 
\begin{minipage}{.4\textwidth}
\begin{center}
\includegraphics[width=5cm,height=5cm]{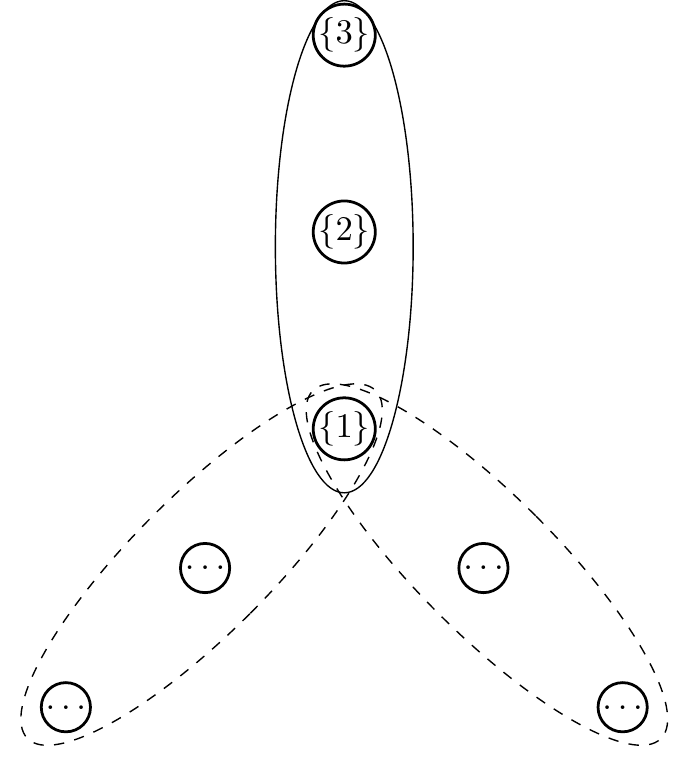}
\end{center}
\end{minipage}
\begin{minipage}{.1\textwidth}
\Huge{
\[
\vdash
\]}
\end{minipage} 
\begin{minipage}{.4\textwidth}
\begin{center}
\includegraphics[width=5cm,height=5cm]{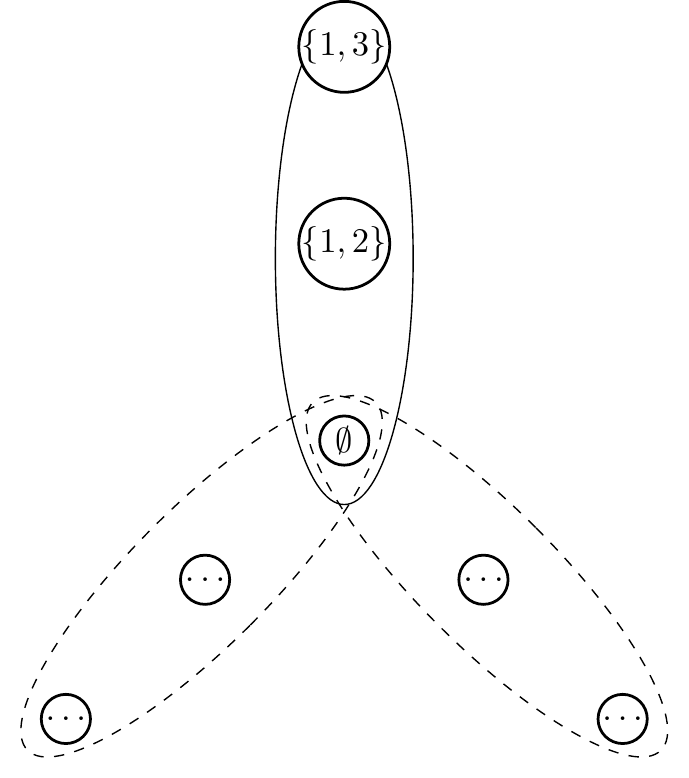}
\end{center}
\end{minipage}
\end{center} 
\caption{One step of a coalescing random walk on hypergraphs.}
\label{two-graph-coal} 
\end{figure}

Finally, we present the "dual" to CRW, a multiset voter model (Figure~\ref{multiset}):  

\begin{definition} 
Let $H=(V,E)$ be a connected hypergraph. Define a {\it multiset voter model on $H$} by the following: 
\begin{itemize} 
\item[(a).]
{\bf Initial state:} $A_{i}=\{i\}$. Note that $A_{1}\cup A_{2}\cup \ldots \cup A_{n}=[n]$. 
\item[(b).]
{\bf Moves:} Given node $i$ and hyperedge $e=(i,j_{1},j_{2},\ldots, j_{k})$, 
updating pair $(i,e)$ results in setting $A_{i}=\uplus_{r=1}^{k} A_{j_{r}}$. Note that the operation may decrease the number of different "opinions" present in the system, if such opinions were only held by node $i$. 
\item[(c).]
{\bf Parity of opinions on $\mathcal{B}$ (from $\mathcal{A}$):} Given $\mathcal{A},\mathcal{B}\subseteq V(H)$, {\it parity time $c_{VM}(H,\mathcal{B},\mathcal{A})$} is the minimum time $t$ such that every initial opinion from $\mathcal{A}$ is present an even number of times (perhaps zero) among nodes in $\mathcal{B}$. We will omit the second argument when $\mathcal{A}=V$.
\end{itemize} 
\end{definition} 

\begin{figure}[ht]
\begin{center} 
\begin{minipage}{.4\textwidth}
\begin{center}
\includegraphics[width=5cm,height=5cm]{unreach123.pdf}
\end{center}
\end{minipage}
\begin{minipage}{.1\textwidth}
\Huge{
\[
\vdash
\]}
\end{minipage} 
\begin{minipage}{.4\textwidth}
\begin{center}
\includegraphics[width=5cm,height=5cm]{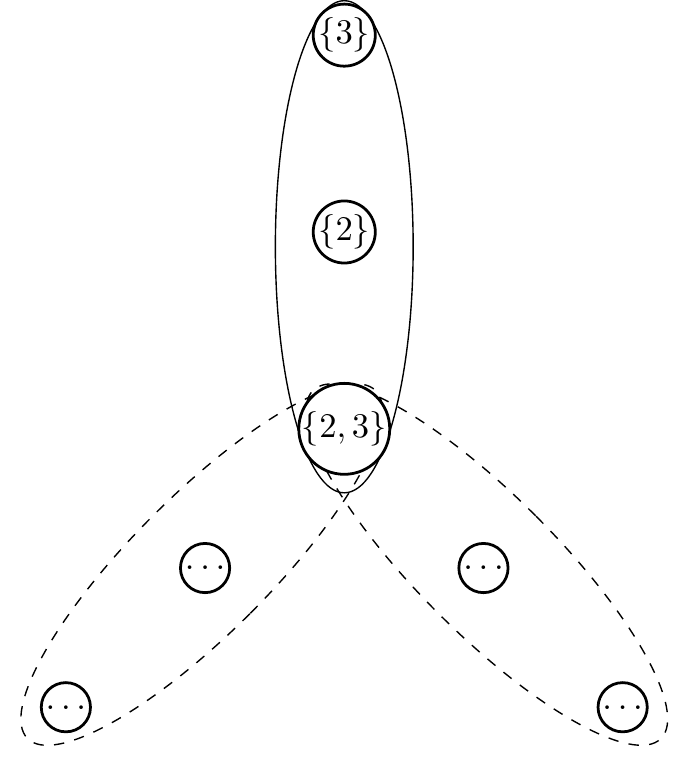}
\end{center}
\end{minipage}
\end{center} 
\caption{One step of the multiset voter model on hypergraphs.}
\label{multiset} 
\end{figure}

\subsection{Coupling annihilating and coalescing random walks}

A particular setting we would like to investigate is given by our motivating examples: the XOR-SAT problem if the system has a solution and the CTD for social balance. With these cases in mind we define the annihilation time for ARW on hypergraphs: 

\begin{definition}
Given set of vertices $\mathcal{B}\subseteq V(G)$, {\it $c_{ann}(G,\mathcal{B})$} is the minimum $t\geq 0$ such that in the a.r.w on $G$ started with $A_{i}=1$ if $i\in \mathcal{B}$, $A_{i}=0$ otherwise, at time $t$ we have $A_{i}= 0$ for all  $i$.  
\end{definition} 

We now extend a coupling argument valid in the case of graphs: 


\begin{theorem} Suppose $G$ is a connected hypergraph and $\mathcal{B}\subseteq V(G)$ is stabilizing. Then 
\begin{itemize} 
\item[-] in the coalescing random walk on $G$ starting from $\mathcal{B}$ one can reach coalescence. 
\item[-] one can couple the coalescing and annihilating random walks on $G$ such that
 \[
 c_{ann}(G, \mathcal{B})= c_{coal}(G,\mathcal{B}).
 \]
\item[-] a similar result holds for the lazy versions of c.r.w./a.r.w as well. 
\end{itemize} 
\label{coup-hypergraphs} 
\end{theorem}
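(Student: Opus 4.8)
The plan is to build an explicit coupling resting on the observation that the annihilating random walk is exactly the coalescing random walk read modulo $2$. I would run both processes on the same probability space, feeding them the \emph{same} sequence of scheduled pairs $(i,e)$ — a random live node $i$ together with a random hyperedge $e\ni i$ in the non-lazy case, or a random node $i$ in the lazy case — and maintain throughout the invariant
\[
A_{i}^{\mathrm{arw}}=|A_{i}^{\mathrm{crw}}|\bmod 2\qquad\text{for every }i\in V(G).
\]
The base case holds because both walks start from $\mathcal{B}$: in the a.r.w.\ $A_{i}^{\mathrm{arw}}=1$ iff $i\in\mathcal{B}$, while in the c.r.w.\ the multiset $A_{i}^{\mathrm{crw}}=\{i\}$ has odd cardinality iff $i\in\mathcal{B}$.

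For the inductive step I would verify that a single scheduled move preserves the invariant. When the chosen node $i$ is live (so $|A_{i}^{\mathrm{crw}}|$ is odd by the invariant), the c.r.w.\ move $A_{j_{r}}\leftarrow A_{j_{r}}\uplus A_{i}$ adds an odd-cardinality multiset to each neighbour, flipping its cardinality parity, and empties $A_{i}$; meanwhile the a.r.w.\ move $A_{v}\leftarrow A_{v}\oplus A_{i}$ with $A_{i}=1$ flips the bit of every $v\in e$ and zeroes $A_{i}$. These effects agree vertex by vertex on parities. When $i$ is not live (the lazy case, $|A_{i}^{\mathrm{crw}}|$ even) adding $A_{i}$ to the neighbours changes no parity and emptying $A_{i}$ leaves it even, matching the fact that the lazy a.r.w.\ step does nothing. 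Hence the invariant survives in both regimes. The conceptual point is that the multiset labels carried by the c.r.w.\ are irrelevant to the parity dynamics: only cardinalities modulo $2$ matter, and these evolve identically to the a.r.w.

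With the invariant established, all three assertions follow. Under the coupling the all-zero a.r.w.\ configuration is reached at exactly the step when every $|A_{i}^{\mathrm{crw}}|$ becomes even, so annihilation and full coalescence are literally the \emph{same} event on each realization; thus $c_{ann}(G,\mathcal{B})=c_{coal}(G,\mathcal{B})$ as random variables on the coupled space, and in particular they have the same distribution and expectation. Reachability of coalescence is then immediate from the stabilizing hypothesis: since $\mathcal{B}$ is stabilizing, some finite schedule drives the a.r.w.\ to $\mathbf{0}$, and replaying that schedule in the coupled c.r.w.\ drives all cardinalities to even parity. The lazy statement is already subsumed, since the inductive step was checked for both the live and the non-live choice of $i$.

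I expect the only real subtlety — and the point deserving the most care — to be the faithfulness of the coupling at the level of the \emph{selection rule}. In the non-lazy regime one must argue that choosing a uniformly random live node induces the same law in the two processes; this is precisely what the invariant guarantees, since the live vertices of the c.r.w.\ (those of odd cardinality) coincide with the live vertices $\{i:A_{i}^{\mathrm{arw}}=1\}$ of the a.r.w. Once this matching is secured, both schedules can be driven by a single source of randomness and the pathwise identity of the two hitting times is forced.
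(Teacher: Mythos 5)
Your proof is correct, and it reaches the theorem by a route that is recognizably the same coupling idea as the paper's but realized differently. The paper does not couple two separate chains; it constructs a single enriched process $P$ in which every ball is a pair (label, timestamp): live witnesses carry $\infty$, and when two live multisets meet their witnesses are stamped with the current time $t$ (i.e., killed), so that reading $P$ on the live witnesses only recovers the a.r.w., while forgetting timestamps recovers the c.r.w. Moreover, the paper runs $P$ with the \emph{lazy} schedule (any vertex may be chosen) and recovers the non-lazy a.r.w. by observing only the moves of live multisets --- a thinning argument, which is why its Claim 1 literally concludes only the inequality $c_{ann}(G,\mathcal{B})\leq c_{P}(G,\mathcal{B})$, with the rest left to a ``proof by picture.'' Your construction dispenses with the witness bookkeeping entirely: you run the two chains on a common schedule and verify the deterministic invariant $A_{i}^{\mathrm{arw}}=|A_{i}^{\mathrm{crw}}|\bmod 2$, checking both the live and the dead case of the chosen vertex, and you correctly identify the one point needing care --- that the invariant forces the live sets of the two chains to coincide, so a single uniform choice of live node (with a uniform incident hyperedge) is a legitimate common driver in the non-lazy regime. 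What your version buys is a genuinely pathwise identity $c_{ann}(G,\mathcal{B})=c_{coal}(G,\mathcal{B})$ with lazy coupled to lazy and non-lazy to non-lazy, which is cleaner than the paper's mixed-schedule argument; what the paper's version buys is extra information (which particles annihilated with which, and when) that your parity projection discards, though none of it is needed for this theorem. One small caveat: your identification of annihilation with ``every vertex of $V$ has even ball-count'' matches the paper's intended use of $c_{coal}$ (see the worked example after the duality theorem, where odd-count nodes are exactly the surviving a.r.w. particles), even though the displayed definition of $c_{coal}(H,\mathcal{B})$ sloppily requires parity only on $\mathcal{B}$; it is worth stating explicitly which reading you adopt.
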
 
\begin{proof}  

We will define the following stochastic process $P$: 

\begin{enumerate} 
\item {\bf Initial state:} $A_{i}=\{(i,\infty)\}$ for $i\in \mathcal{B}$, $A_{i}=\emptyset$ otherwise. Note that $A_{1}\cup A_{2}\cup \ldots \cup A_{n}=\mathcal{B}\times \{\infty \} $ and that each $A_{i}$ contains at most one index $b_{i}$ with $(b_{i},\infty)\in A_{i}$. We will call such a multiset {\it live} and $b_{i}$ {\it the witness} for $A_{i}$. Also denote $B_{i}=A_{i}\setminus \{(i,\infty)\}$ if $i$ is live, $B_{i}=A_{i}$ otherwise. 
\item {\bf Move:} At time $t$: Choose {\rm random vertex $i$ (not necessarily live)}. Choose random edge $(i,j_{1},\ldots, j_{k})$. For $r=1,\ldots, k$ 
\begin{itemize} 
\item If both $A_{i},A_{j_{r}}$ are live then make $A_{j_{r}}= (B_{i}\sqcup B_{j_{r}})\sqcup \{(b_{i},t),(b_{j_{r}},t)\}$.
\item If, on the other hand,  at most one of $A_{i},A_{j_{r}}$ is live then make $A_{j}:=A_{i}\sqcup A_{j_{r}}$. 
\end{itemize} 
Finally make $A_{i}=\emptyset$. Note that if we "move" a dead set $A_{i}$ to a live multiset $A_{j}$ then $A_{j}$ will still be live. 
\item {\bf Stopping:} {\it Stopping time $c_{P}(G)$} is the minimum $t\geq 0$ such that at most one $i$ is live (one if $n$ is odd, none if $n$ is even)
\end{enumerate} 

\begin{claim} The following are true: 
\begin{enumerate} 
\item $P$ {\bf observed on $[n]\times \{\infty\}$  and moves of live multisets only} (Figure~\ref{arw}) is the annihilating random walk on $G$ starting from configuration $\mathcal{B}$.  If $n$ is even then at time $c_{P}(G)$ all particles have annihilated. Consequently $c_{ann}(G, \mathcal{B})\leq c_{P}(G, \mathcal{B})$. 
\item $P$ where we disregard second components in all pairs Figure~\ref{crw}) is identical to the coalescing random walk on $G$ and
$c_{P}(G, \mathcal{B})=c_{coal}(G,\mathcal{B})$. 
\end{enumerate} 
\end{claim} 

A "proof by picture" is given in Figure~\ref{first}. There are two cases: $j$ is live or not. In both cases the observed process is identical to the annihilating random walk. Note that if $n$ is even then when coalescence occurs in the c.r.w. all particles have died in the a.r.w.

\begin{figure}[t]
\begin{center}
\includegraphics[width=8cm]{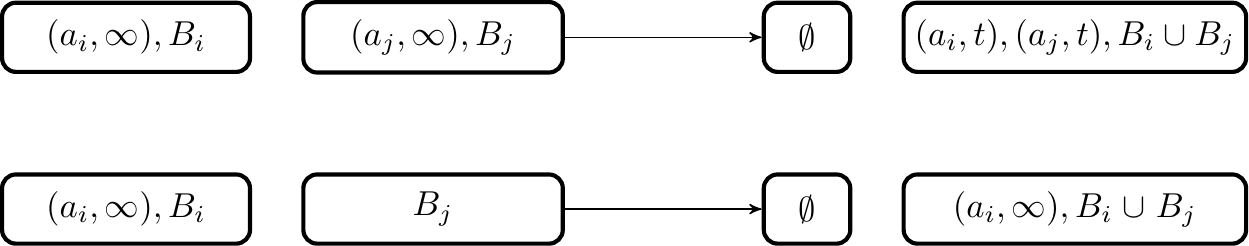}
\end{center}
\caption{The two cases of stochastic process $P$. On each side only two nodes lying inside a common hyperedge are pictured. In the first scenario, at time t particles $a_{i}$ and $a_{j}$ meet and annihilate.}
\label{first} 
\end{figure}

\begin{figure}[t]
\begin{center}
\includegraphics[width=8cm]{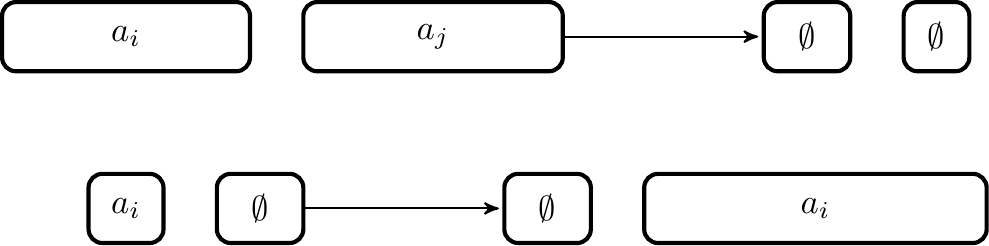}
\end{center}
\caption{First coupled version: annihilating random graphs (the two cases). Only two nodes inside a common hyperedge are pictured on each side.}
\label{arw}
\end{figure}

\begin{figure}
\begin{center}
\includegraphics[width=8cm]{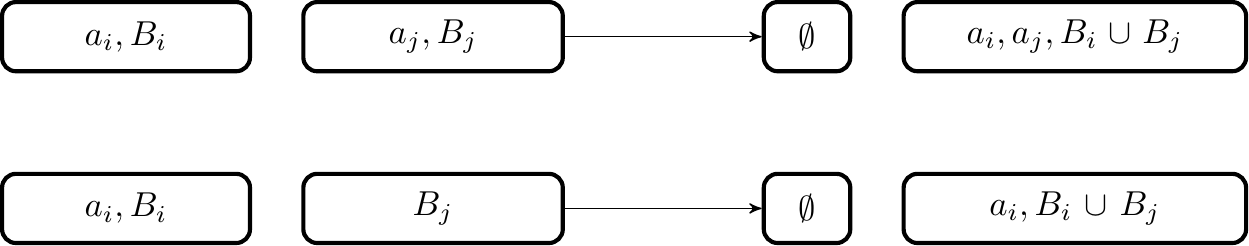}

\end{center}
\caption{Second coupled version: coalescing random walks (the two cases).  Only two nodes inside a common hyperedge are pictured on each side.}
\label{crw}
\end{figure}
\end{proof}

\subsection{Duality of coalescing random walks on hypergraphs and the multiset voter model}

The reason a result such as Theorem~\ref{coup-hypergraphs} is interesting is that on graphs (see \cite{aldous-fill-book}) $c_{coal}(G)$ is identical (via duality) to coalescence time of voter model $c_{VM}(G)$, which can in turn be upper bounded in terms of a so-called {\it Cheeger time of graph $G$}, essentially the inverse of the more well-known Cheeger constant of $G$. 

Similar results holds on hypergraphs, although we will need to give them in a slightly more general form: 

\begin{theorem} Let $H$ be a connected hypergraph and $\mathcal{B}$ be a stabilizing configuration on $H$. Then the following are true: 
\begin{itemize} 
\item[-] one can reach parity on $\mathcal{B}$ in the multiset voter model.
\item[-] the coalescence time $c_{coal}^{lazy}(H,\mathcal{B})$ of the lazy c.r.w. and the parity time of the associated multiset voter model $c_{VM}(H,\mathcal{B})$ are identically distributed.
\end{itemize}  
\label{dual}
\end{theorem}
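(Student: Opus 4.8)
The plan is to prove the duality by the device that underlies the classical coalescing-walk/voter-model correspondence of \cite{aldous-fill-book}: reduce both processes to linear algebra over ${\bf Z}_{2}$, exhibit the two elementary moves as transposes of one another, and then pass between the processes by reversing time. First I would work parity by parity. For a fixed label $\ell$, I record only the ${\bf Z}_{2}$-vector $m^{(t)}(\ell)\in {\bf Z}_{2}^{V}$ whose $v$-th entry is the multiplicity of $\ell$ at vertex $v$ taken mod $2$; since both the multiset sum $\uplus$ and the disjoint union reduce mod $2$ to ${\bf Z}_{2}$-addition, a single update $(i,e)$ with $e=(i,j_{1},\dots,j_{k})$ acts \emph{linearly} on these vectors. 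In the multiset voter model the move replaces row $i$ by $\bigoplus_{r}(\text{row }j_{r})$, i.e.\ it is left multiplication by a matrix $P_{(i,e)}$; in the (lazy) coalescing walk the move adds row $i$ to each row $j_{r}$ and then zeroes row $i$, i.e.\ left multiplication by a matrix $Q_{(i,e)}$. A direct entrywise check gives the key identity $Q_{(i,e)}=P_{(i,e)}^{\top}$, which is the algebraic heart of the duality.

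Next I would use time reversal. Fix a horizon $T$ and an update sequence $\omega=((i_{1},e_{1}),\dots,(i_{T},e_{T}))$. The voter model started from $A_{i}=\{i\}$ reaches, at time $T$, the matrix $M_{T}=P_{(i_{T},e_{T})}\cdots P_{(i_{1},e_{1})}$, and ``parity on $\mathcal{B}$'' (with $\mathcal{A}=V$) is exactly the condition $\mathbf{1}_{\mathcal{B}}^{\top}M_{T}=0$. Transposing and using $Q=P^{\top}$, this is equivalent to $Q_{(i_{1},e_{1})}\cdots Q_{(i_{T},e_{T})}\,\mathbf{1}_{\mathcal{B}}=0$, which says precisely that the lazy coalescing walk \emph{started from} $\mathcal{B}$ and driven by the reversed sequence $\omega^{R}$ has reached the all-even (coalesced) configuration at time $T$. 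Because the individual updates are i.i.d., $\omega$ and $\omega^{R}$ are equidistributed, so for every $T$ the probability that the voter model has reached parity on $\mathcal{B}$ at time $T$ equals the probability that the coalescing walk from $\mathcal{B}$ has coalesced at time $T$. The all-even state is absorbing for the coalescing walk (the zero vector is fixed by every $Q_{(i,e)}$), so the latter probability is the distribution function of $c^{lazy}_{coal}(H,\mathcal{B})$; matching distribution functions then yields the asserted equality in law with $c_{VM}(H,\mathcal{B})$.

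Reachability of parity, the first assertion, I would obtain from the hypothesis that $\mathcal{B}$ is stabilizing. By the characterization of stabilizing configurations (solvability of $H(\mathcal{B},\mathbf{0})$), together with Lemma~\ref{if} and Theorem~\ref{coup-hypergraphs}, the coalescing walk from $\mathcal{B}$ can reach the all-even configuration; transporting this through the transpose duality shows that parity on $\mathcal{B}$ is reachable in the multiset voter model. Checking that the lazy versus non-lazy choice of the updated node does not alter the transpose identity (it only enlarges the set of admissible moves) is routine and handled in passing.

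The delicate point, and the step I expect to be the main obstacle, is the reconciliation of \emph{first-passage} times across the time reversal. Unlike the graph case, these hypergraph walks are explosive and branching: the number of live particles is not monotone, and, correspondingly, the voter-model parity event is not itself absorbing (a later move touching $\mathcal{B}$ can destroy it, as one sees already on a small odd-connected example with a self-loop). The reversal identity is therefore naturally an equality \emph{at each fixed horizon} $T$, and the reversed driving sequence depends on $T$, so one cannot simply invoke a single monotone pathwise coupling as on graphs. The resolution is to push all the monotonicity onto the coalescing side, where the all-even state \emph{is} absorbing, and to deduce the equality of the two hitting-time laws from the whole family of fixed-horizon identities rather than from a pathwise matching. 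Making this passage rigorous is the technical core, and it is exactly where the hypergraph (explosive) setting departs qualitatively from the classical graph duality.
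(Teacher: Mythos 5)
Your transpose-and-reverse argument is, modulo language, the paper's own proof: the paper's graphical construction with pointed hyperedge events and left-right paths, read left-to-right as the lazy coalescing walk and right-to-left as the multiset voter model, is exactly the combinatorial expansion of your identity $Q_{(i,e)}=P_{(i,e)}^{\top}$ --- the number of left-right paths between $i$ and $j$ is the corresponding entry of the product of update matrices, reduction mod $2$ is the paper's ``each path contributes exactly one ball/opinion'' bookkeeping, and the equidistribution of $\omega$ and $\omega^{R}$ is the time reversal implicit in the two readings of the figure. Your derivation of the first bullet from the stabilizing hypothesis via Theorem~\ref{coup-hypergraphs} also matches the paper's (implicit) route. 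So up to and including the fixed-horizon identity $P[\text{parity on }\mathcal{B}\text{ at }T]=P[c^{lazy}_{coal}(H,\mathcal{B})\le T]$, your proof and the paper's coincide.

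The step you defer --- ``deduce the equality of the two hitting-time laws from the whole family of fixed-horizon identities'' --- is a genuine gap, and it cannot be closed as stated. Since the all-even state is absorbing on the coalescing side but parity is not absorbing on the voter side, the identities only give $P[c_{VM}\le T]=P\bigl[\bigcup_{s\le T}\{\text{parity at }s\}\bigr]\ \ge\ P[\text{parity at }T]=P[c^{lazy}_{coal}\le T]$, and the inequality can be strict. Concretely, take $V=\{1,2\}$ with a self-loop $e_{1}=\{1\}$ and edge $e_{2}=\{1,2\}$, and $\mathcal{B}=\{1\}$ (stabilizing: $z_{e_{1}}=1$, $z_{e_{2}}=0$ solves the associated linear system). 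The moves $(1,e_{1}),(1,e_{2}),(2,e_{2})$ occur with probabilities $\tfrac14,\tfrac14,\tfrac12$, and a direct mod-$2$ computation gives $P[c^{lazy}_{coal}\le 2]=\tfrac38$ but $P[c_{VM}\le 2]=\tfrac{7}{16}$: scheduling $(1,e_{1})$ first empties $A_{1}$, so parity on $\mathcal{B}$ holds at time $1$ (hence $c_{VM}=1$ on that event), and is then destroyed at time $2$ by $(1,e_{2})$; meanwhile the fixed-time marginals still agree ($P[\text{parity at }2]=\tfrac38$ on both sides), confirming your fixed-horizon identity while refuting equality in law of the first-passage times. (Reading parity as ``at $t$ and subsequently,'' as in the two-party definition, fails in the opposite direction on the same example.) You should know that the paper's own proof establishes only the fixed-time statement (``at $t_{0}$\,\dots'') and offers no argument for the first-passage claim either; what survives, and what the rest of the paper actually uses through Lemma~\ref{avglemma}, is $E[c^{lazy}_{coal}(H,\mathcal{B})]=\sum_{t\ge 0}\bigl(1-P[\text{parity on }\mathcal{B}\text{ at }t]\bigr)$. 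So your reconstruction of the duality mechanism is faithful and you correctly located the crux, but your proposed resolution does not work --- the theorem as literally stated needs to be weakened to the fixed-time identity.
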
 
\begin{proof}
The proof is an adaptation of the classical duality argument \cite{aldous-fill-book}: we will define a process on {\it oriented hyperedges} in $H$ (that is edges with a distinguished vertex) that will be interpreted in two different ways: as parity in the multiset voter model and coalescence in the coalescing random walk. 

The process is described in Figure~\ref{cou}. There is a certain difficulty in correctly drawing pointed events in hypergraphs. In the figure we represent hyperedges vertically at the moment the given hyperedge event occurs (times $t_{1}$ and $t_{2}$ in the coalescing random walk), but this may be more difficult to draw if the vertices of a hyperedge are not contiguous. Horizontal lines (e.g. for ball 3 between moments $t_{1}$ and $t_{2}$) refer to histories not interrupted by any hyperedge event between the corresponding times. A horizontal line may be interrupted by a hyperedge event. In the interest of readability we chose to drop some horizontal lines from the picture (e.g. at node 3 between time $0$ and $t_{1}$).

A {\it left-right path $P$} between node $i$ and node $j$ is a sequence of hyperedge events and horizontal lines such that: 
\begin{itemize} 
\item $P$ starts with a horizontal line of node $i$ and ends with a horizontal line of node $j$. 
\item Every horizontal line of a node is followed by a hyperedge event with the corresponding node being pointed. 
\item Every hyperedge event is followed by an unique horizontal line corresponding to a {\bf non-pointed node}.
\end{itemize} 

For instance, in the picture from Figure~\ref{cou} we have represented three left-right paths, between node 2 and each of nodes 1,4,5. 

In the c.r.w. the activation of hyperedge $e=[j\rightarrow i_{1},i_{2},\ldots i_{r}]$ pointed at $j$ is interpreted as vertex $j$ being chosen (together with edge $e$), thus sending a copy of its cluster of balls to all other neighbors. In the multiset voter model the activation of the same pointed hyperedge is interpreted as $j$ adopting the multiset union of opinions of $i_{1},i_{2},\ldots, i_{r}$. 


\begin{figure*}
\begin{center}
\includegraphics[scale=0.85]{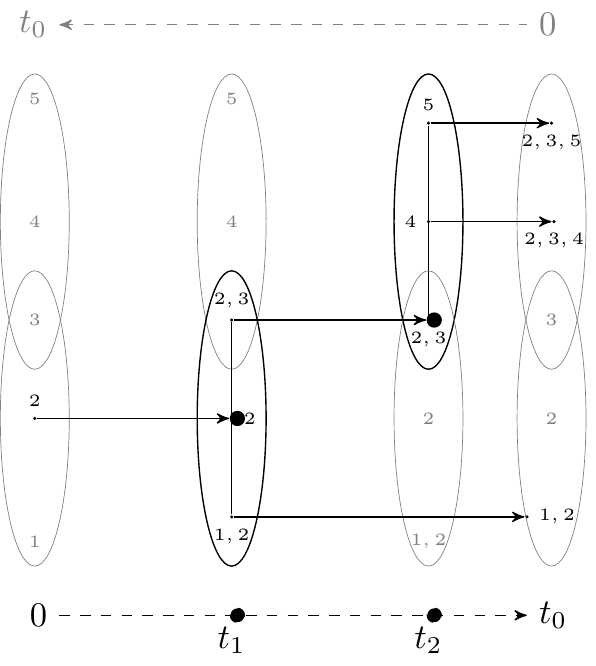}~~~~
\hspace{10mm}
\includegraphics[scale=0.85]{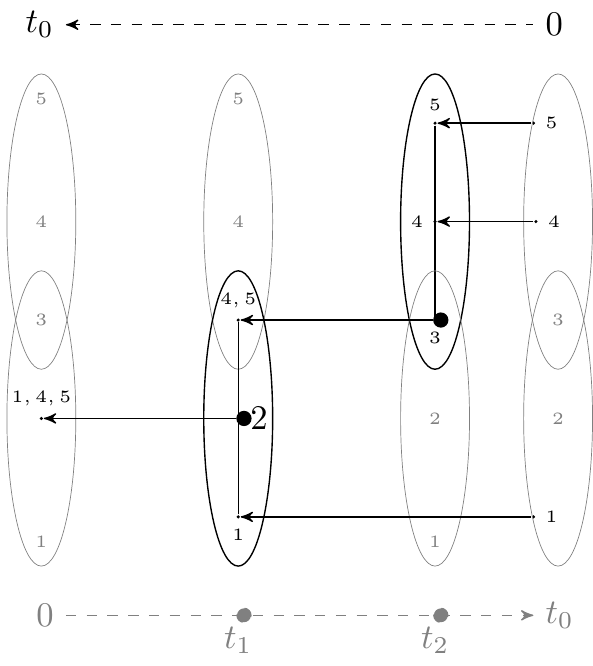}
\end{center}
\caption{Coupling the lazy coalescing random walk and the multiset voter model. Time runs from left to right in the lazy coalescing random walk and right to left in the multiset voter model. Digits represent node indices, unless bolded, in which case they represent balls/opinions. a) At time $t_{1}$ (in the lazy c.r.w.) copies of balls at (pointed) node $2$ are sent to nodes 1 and 3. Similarly, at time $t_{2}$ copies of balls at (pointed) node 3 are sent to nodes 4 and 5. b) The corresponding multiset voter model. First $3$ aquires the opinions of $4$ and $5$; then $2$ aquires opinions $(4,5)$ of $3$ and the opinion of $1.$} 
\label{cou}
\end{figure*}

Just as in the ordinary c.r.w./voter model, the existence of a left-right path between nodes $i$ and $j$ (e.g. 
$(2,1),(2,4),(2,5)$) is interpreted as the event: 
\begin{itemize} 
\item "at $t_{0}$ node $j$ holds a ball with label $i$." (in the c.r.w.). 
\item "at $t_{0}$ node $i$ holds opinion $j$ with multiplicity at least one." (in the multiset voter model) 
\end{itemize} 

\noindent Moreover, one path may contribute (when it does) with {\it exactly one ball/opinion} of a given type. Consider now the event: "at $t_{0}$ every node in $\mathcal{B}$ on the right-hand side is connected to nodes on the left-hand side by an even number of paths". 

\begin{itemize} 
\item In the coalescing random walk this is equivalent to "at $t_{0}$ we have parity from $\mathcal{B}$"
\item In the multiset voter model this is equivalent to "at $t_{0}$ we have parity of opinions on $\mathcal{B}$"
\end{itemize} 
\end{proof}

\begin{example}
Suppose (for a different example) that in the lazy coalescing random walk in Figure~\ref{cou} we start with balls at node set $\mathcal{B}=\{2,4\}$, i.e. the initial state of the system at time 0 is $S_{0}=[\emptyset, \{2\},\emptyset, \{4\},\emptyset ]$. Then at time $t_{0}$ the system is in the state $S_{1}=[\{2\}, \emptyset, \emptyset, \{2,4\},\{2\} ]$. Nodes $1,5$ have an odd number of balls, while 2,3,4 have an even number. If we were to run the lazy annihilating random walk, nodes $1,5$ would be those that still have a particle, while $2,3,4$ don't. 

In the voter model (displayed with time going backwards) we start with state $V_{0}=[\{1\}, \{2\},\{3\},\{4\},\{5\} ]$ and end up with state $V_{1}=[\{1\}, \{1,4,5\},\{4,5\},\{4\},\{5\} ]$. Values 1,5 are those that have not yet reached parity on set $\mathcal{B}=\{2,4\}$, while 2,3,4 have. 
\end{example} 

\subsection{The multiset voter model and its two-party counterpart}

Upper bounding the coalescence time of the voter model on graphs can be achieved \cite{aldous-fill-book} by coupling it with a "two-party" counterpart and analyzing this latter model instead. In the sequel we accomplish a similar task on general hypergraphs:  
\begin{definition}
Let $H=(V,E)$ be a connected hypergraph. Define the {\it two-party voter model on $H$} by the following: 
\begin{itemize}
\item[(a). ]
{\bf Initial state:} $A_{i}\in \{0,1\}$ for all $i\in V$. We denote $\mathcal{A}=\{i\in V: A_{i}=1\}$. 
\item[(b). ]
{\bf Moves:} Given node $i$ and hyperedge $e=(i,j_{1},j_{2},\ldots, j_{k})$, 
updating pair $(i,e)$ results in setting $A_{i}=\oplus_{r=1}^{k} A_{j_{r}}$, keeping all values $A_{j_{r}}$ unchanged. 
\item[(c). ]
{\bf Parity of opinions on $\mathcal{B}$:} Given $\mathcal{A,B}\subseteq V(H)$, the {\it parity time $c_{2-VM}(H;\mathcal{A},\mathcal{B})$} is the minimum time $t$ such that, starting from configuration $\mathcal{A}$, at time $t$ and subsequently $\oplus_{i\in \mathcal{B}} A_{i}=0.$
\end{itemize} 

Note that, unlike the multiset voter model, in the two-party voter model we allow initial states $\mathcal{A}$ where at time $t=0$ "some nodes do not hold any opinion". 
\end{definition}

\begin{theorem} Let $H$ be a connected hypergraph and $\mathcal{B}\subseteq V(H)$ be a stabilizing set. Then 
\begin{itemize} 
\item[-] for every set $\mathcal{A}\subseteq V(H)$, in the two-party voter model started from configuration $\mathcal{A}$  one can reach parity of opinions on $\mathcal{B}$.
\item[-] 
for every $\mathcal{A}\subseteq V(H)$ 
one can couple the multiset voter model and the two-state voter model with initial state $\mathcal{A}$ such that whenever we have parity on $\mathcal{B}$ in the multiset voter model we have parity of opinions on $\mathcal{B}$ from $\mathcal{A}$ in the two-party voter model.
\end{itemize}
\label{coup-voter} 
\end{theorem}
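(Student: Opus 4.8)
The plan is to establish the coupling (second bullet) first, and then deduce reachability (first bullet) from it together with the first bullet of Theorem~\ref{dual}. The coupling I would use is the obvious synchronous one: run the multiset voter model from its canonical start $A_i=\{i\}$ and the two-party model from the indicator of $\mathcal{A}$ (that is, writing $a_i$ for the two-party bit at vertex $i$ and $A_i$ for its multiset of opinions, take $a_i=1$ iff $i\in\mathcal{A}$), and drive both processes with the \emph{same} sequence of update pairs $(i,e)$.

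The core of the argument is a parity-homomorphism lemma. For a multiset $M$ of opinions let $\phi(M)\in\{0,1\}$ be the number of opinions of $M$ lying in $\mathcal{A}$, reduced modulo $2$. I would prove by induction on $t$ that under the synchronous coupling $a_i(t)=\phi(A_i(t))$ for every vertex $i$ and every time $t$. The base case holds because $\phi(\{i\})=1$ exactly when $i\in\mathcal{A}$, matching the two-party initialization. For the inductive step, an update of $(i,e)$ with $e=(i,j_1,\dots,j_k)$ replaces $A_i$ by $\uplus_{r=1}^k A_{j_r}$ and $a_i$ by $\oplus_{r=1}^k a_{j_r}$, while leaving all other coordinates untouched in both processes; since the count of $\mathcal{A}$-opinions is additive over $\uplus$, one has $\phi(\uplus_r A_{j_r})=\oplus_r \phi(A_{j_r})$, which equals $\oplus_r a_{j_r}$ by the induction hypothesis. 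This is the only computation in the proof and it is routine.

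Summing the lemma over $\mathcal{B}$ gives $\bigoplus_{i\in\mathcal{B}}a_i(t)=\sum_{\ell\in\mathcal{A}}\pi_\ell(t)\pmod 2$, where $\pi_\ell(t)$ is the parity of the number of copies of opinion $\ell$ present among the vertices of $\mathcal{B}$ at time $t$. When the multiset model has reached parity on $\mathcal{B}$ --- meaning \emph{every} opinion, in particular every $\ell\in\mathcal{A}$, occurs an even number of times on $\mathcal{B}$ --- all $\pi_\ell(t)$ vanish, so $\bigoplus_{i\in\mathcal{B}}a_i(t)=0$; this is precisely parity of opinions on $\mathcal{B}$ from $\mathcal{A}$ in the two-party model, proving the coupling bullet. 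The reachability bullet then follows at once: by the first bullet of Theorem~\ref{dual} the multiset model can reach parity on $\mathcal{B}$ (this is where stabilizability of $\mathcal{B}$ is used), and feeding the realizing move sequence into the coupling forces the two-party model to have parity on $\mathcal{B}$ from $\mathcal{A}$ along the same sequence.

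The main obstacle I anticipate is not the homomorphism itself but the \emph{permanence} clause ``at time $t$ and subsequently'' in the definition of the two-party parity time: the XOR over $\mathcal{B}$ need not survive an arbitrary further move. I would handle this by inheriting stability through the duality chain already assembled in the paper: the full-support parity event ``all opinions even on $\mathcal{B}$'' corresponds, via Theorem~\ref{dual} and the coupling of Theorem~\ref{coup-hypergraphs}, to the annihilation configuration ${\bf 0}$ of the annihilating random walk, which is absorbing; the homomorphism lemma then transports this permanence to $\bigoplus_{i\in\mathcal{B}}a_i$. A secondary point to keep straight is the asymmetry between hypothesis and conclusion --- the assumed multiset parity ranges over all opinions ($\mathcal{A}=V$) whereas the two-party conclusion only concerns opinions from $\mathcal{A}$ --- but this asymmetry only weakens the conclusion, which is exactly why a single implication, rather than an equivalence, is the right statement to couple.
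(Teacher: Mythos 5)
Your proposal is correct and follows essentially the same route as the paper: the paper's coupling is exactly your parity projection (it defines the two-party bit at vertex $i$ to be the parity of the multiset of opinions from $\mathcal{A}$ held at $i$), and it likewise derives reachability from the duality with the lazy coalescing random walk (Theorem~\ref{dual}) plus the stabilizing hypothesis on $\mathcal{B}$. Your explicit homomorphism induction merely formalizes what the paper asserts by definition of the coupled process, so there is no substantive difference.
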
 
\begin{proof} 

The first part follows by duality of the multiset voter model and the lazy c.r.w.: by Theorem~\ref{coup-hypergraphs}, since $\mathcal{B}$ is stabilizing, one can reach parity 
from $\mathcal{B}$ in the c.r.w. Therefore in the multiset voter model one can reach parity on $\mathcal{B}$ from $V$, hence from $\mathcal{A}$ as well. 

Now, given a run of the multiset voter model, define a (coupled) run of the two-state voter model with initial state $\mathcal{A}$ by defining, for every $i\in V$ and every moment $t$, $A_{i}$ to denote the parity of the multiset of opinions {\it from set $\mathcal{A}$ only} held at moment $t$ by vertex $i$.

Since $\mathcal{B}$ is a stabilizing set one can reach parity of opinions on $\mathcal{B}$ in the voter model. At that time each opinion (including those in $\mathcal{A}$) is present an even number of times among nodes in $\mathcal{B}$. Therefore we have parity of opinions on $\mathcal{B}$ in the coupled two-state voter model as well. 

\end{proof} 

\begin{corollary} In the settings of Theorem~\ref{coup-voter} we have, for every $t\in {\bf N}$
\begin{align*}
Pr[c_{2-VM}(H;\mathcal{A},\mathcal{B})>t]\leq Pr[C_{VM}(H;\mathcal{B})>t], \\
E[c_{2-VM}(H;\mathcal{A},\mathcal{B})]\leq E[C_{VM}(H;\mathcal{B})]
\label{2vm}
\end{align*}
\end{corollary}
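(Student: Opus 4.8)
The plan is to derive the Corollary directly from the coupling established in Theorem~\ref{coup-voter}, treating it as a routine translation of a pathwise domination into a statement about distributions and expectations. The key observation is that the second bullet of Theorem~\ref{coup-voter} gives a coupling of the multiset voter model and the two-party voter model (with initial state $\mathcal{A}$) under which, \emph{on every sample path}, the event ``parity on $\mathcal{B}$ holds in the multiset model'' implies the event ``parity of opinions on $\mathcal{B}$ from $\mathcal{A}$ holds in the two-party model.'' Since parity, once reached, persists (this is exactly why the stopping quantities $c_{VM}$ and $c_{2\text{-}VM}$ are well-defined as first-hitting-and-staying times), the induced relation between hitting times is $c_{2\text{-}VM}(H;\mathcal{A},\mathcal{B}) \leq c_{VM}(H;\mathcal{B})$ pointwise under the coupling.

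First I would fix the coupling $(\mathcal{M}_t, \mathcal{T}_t)$ from Theorem~\ref{coup-voter}, where $\mathcal{M}_t$ is the multiset voter model run from the all-singletons initial state and $\mathcal{T}_t$ is the two-party model run from $\mathcal{A}$, defined by letting $A_i$ at time $t$ record the parity of the number of opinions from $\mathcal{A}$ held by vertex $i$. Then I would argue that at any time $t$ at which the multiset model has reached parity on $\mathcal{B}$, every opinion (in particular every opinion originating in $\mathcal{A}$) appears an even number of times among the vertices of $\mathcal{B}$; summing parities gives $\oplus_{i\in\mathcal{B}} A_i = 0$ in the coupled two-party model, so parity on $\mathcal{B}$ has been reached there as well. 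Consequently the first-hitting time of the two-party parity event is no later than that of the multiset parity event, yielding the pathwise inequality $c_{2\text{-}VM}(H;\mathcal{A},\mathcal{B}) \leq c_{VM}(H;\mathcal{B})$.

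From the pathwise inequality both claimed inequalities follow by standard stochastic-dominance reasoning. For the tail bound, for each fixed $t$ the inclusion of events $\{c_{2\text{-}VM}(H;\mathcal{A},\mathcal{B})>t\} \subseteq \{c_{VM}(H;\mathcal{B})>t\}$ holds under the coupling, so taking probabilities gives $Pr[c_{2\text{-}VM}(H;\mathcal{A},\mathcal{B})>t]\leq Pr[c_{VM}(H;\mathcal{B})>t]$; since both sides depend only on the respective marginals, the inequality holds for the original (uncoupled) laws. For the expectation, I would invoke Lemma~\ref{avglemma}, writing each expectation as $\sum_{i\geq 0} Pr[\,\cdot > i\,]$ and summing the termwise tail inequalities just established.

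I expect there to be essentially no serious obstacle here, since the substantive work has already been done in Theorem~\ref{coup-voter}; the only point demanding genuine care is confirming that the parity event is \emph{absorbing} under the coupled dynamics, so that ``reaching parity'' truly corresponds to a hitting time and the event inclusion for a fixed $t$ is legitimate. If parity on $\mathcal{B}$ could be lost and regained in either model, the stopping-time inequality would need to be rephrased (e.g.\ in terms of the last exit time or of the stabilizing structure guaranteeing absorption at $\mathbf{0}$); the cleanest route is to note that for a stabilizing $\mathcal{B}$ the relevant parity condition is monotone once the associated annihilating configuration reaches $\mathbf{0}$, which is absorbing, so persistence holds and the fixed-$t$ event inclusion is valid.
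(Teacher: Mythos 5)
Your proposal is correct and matches the paper's own (much terser) proof exactly: the first inequality follows from the coupling of Theorem~\ref{coup-voter} via the pathwise event inclusion, and the second follows by applying Lemma~\ref{avglemma} to sum the tail bounds. Your extra care about the parity event being absorbing is a sound elaboration of the same argument, not a different route.
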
 
\begin{proof} 
The first inequality follows from coupling. For the second we apply Lemma~\ref{avglemma}.
\end{proof} 

We complement the second inequality in the previous corollary by one with the opposite direction, derived as follows: Consider the following process, parameterized by a positive number $\epsilon > 0$, which yields a random model we will call ${\bf D}_{\epsilon}$: 
\begin{itemize} 
\item[-] we partition the vertices of $H$ into two parts, $D$ and $\overline{D}$ by including each vertex into $D$ independently with probability $1/2-\epsilon$.    
\item[-] we run the two-state voter model from configuration $D$ (i.e.  0 on labels of vertices of $\overline{D}$ ("reds") and $1$ on vertices of $D$ ("blues")).
\item[-] We denote by $D_{t}$ the set of vertices labeled 1 at time $t$, by $N_{t}^{D_{\epsilon}}$ the cardinal of $D_{t}$,  and by $\Delta_{t}^{D_{\epsilon}}$ the difference in the number of ones as a result of the (possible) jump at time $t$. 
\item[-] Denote by $C^{D_{\epsilon}}$ the smallest time $t\geq 0$ when $N_{t}^{D_{\epsilon}}=0$. Denote by  $c_{2-VM}^{\bf D_{\epsilon}}(H;\mathcal{B})$ the corresponding parity time. 
\end{itemize}

\begin{lemma} For every $0<\epsilon<1/2$ we have \begin{equation} 
E[c_{VM}(H;\mathcal{B})]\leq \frac{2}{1-2\epsilon}\cdot E[C^{\bf D_{\epsilon}}_{2-VM}(H;\mathcal{B})]
\label{vm-2}
\end{equation} 
\end{lemma}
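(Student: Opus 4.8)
The plan is to prove the pointwise (in $t$) estimate
$Pr[C^{\mathbf{D}_{\epsilon}}_{2-VM}(H;\mathcal{B})>t]\geq \frac{1-2\epsilon}{2}\cdot Pr[c_{VM}(H;\mathcal{B})>t]$ for every $t\in\mathbf{N}$, and then to sum it over $t$ by means of Lemma~\ref{avglemma}. Once the pointwise bound is available,
$E[C^{\mathbf{D}_{\epsilon}}_{2-VM}]=\sum_{t\geq 0}Pr[C^{\mathbf{D}_{\epsilon}}_{2-VM}>t]\geq \frac{1-2\epsilon}{2}\sum_{t\geq 0}Pr[c_{VM}>t]=\frac{1-2\epsilon}{2}\,E[c_{VM}]$, which rearranges to the claimed inequality~(\ref{vm-2}). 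So the whole statement reduces to establishing that single inequality of tail probabilities.

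To set up the pointwise bound I would couple $\mathbf{D}_{\epsilon}$ to the multiset voter model exactly as in Theorem~\ref{coup-voter}. Color each initial opinion $j\in V$ \emph{blue} independently with probability $p=1/2-\epsilon$ and \emph{red} otherwise, and let $\mathcal{A}=D$ be the set of blue opinions; since opinion $j$ starts at vertex $j$, this is precisely the product-measure sampling of $D$ defining $\mathbf{D}_{\epsilon}$. Transporting the coloring through the dynamics as in Theorem~\ref{coup-voter}, the coupled two-party value $A_{i}(t)$ equals the parity of the number of blue opinions held by $i$ at time $t$, whence $\bigoplus_{i\in\mathcal{B}}A_{i}(t)=\bigoplus_{j\in D}P_{j}(t)$, where $P_{j}(t)\in\{0,1\}$ records the parity with which opinion $j$ is present on $\mathcal{B}$ at time $t$ in the multiset voter model. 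Thus the event "$\mathbf{D}_{\epsilon}$ fails parity on $\mathcal{B}$ at time $t$" is exactly the event that $|D\cap J_{t}|$ is odd, where $J_{t}=\{j:P_{j}(t)=1\}$.

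The heart of the argument is a one-time anti-concentration estimate over the coloring. I would condition on the entire trajectory $\omega$ of the multiset voter model, which fixes every $J_{t}$, leaving only the independent coloring random. If $c_{VM}(\omega)>t$ then full parity on $\mathcal{B}$ does not hold at time $t$, so $J_{t}\neq\emptyset$; for a nonempty set the probability that a $p$-random subset meets it in an odd number of vertices is $\frac{1-(1-2p)^{|J_{t}|}}{2}=\frac{1-(2\epsilon)^{|J_{t}|}}{2}\geq \frac{1-2\epsilon}{2}$, using $1-2p=2\epsilon$ and $|J_{t}|\geq 1$. Because $C^{\mathbf{D}_{\epsilon}}_{2-VM}>t$ holds whenever $\mathbf{D}_{\epsilon}$ fails parity at \emph{any} time $\geq t$ (by the "and subsequently" clause in the definition of the two-party parity time), a failure at time $t$ itself already forces $C^{\mathbf{D}_{\epsilon}}_{2-VM}>t$. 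Averaging the conditional bound over $\omega$, i.e. $Pr[C^{\mathbf{D}_{\epsilon}}_{2-VM}>t]\geq E_{\omega}\!\big[\mathbf{1}[c_{VM}(\omega)>t]\cdot\frac{1-2\epsilon}{2}\big]=\frac{1-2\epsilon}{2}Pr[c_{VM}>t]$, yields the desired tail inequality.

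The step I expect to be most delicate is the bookkeeping of the two slightly different notions of parity time: $c_{VM}$ is the first instant at which parity holds, whereas $C^{\mathbf{D}_{\epsilon}}_{2-VM}$ carries the "and subsequently" requirement. The argument is arranged so that this asymmetry works in our favor — I only need the implication $c_{VM}>t\Rightarrow J_{t}\neq\emptyset$, a statement about time $t$ alone, together with the fact that an odd meeting of $D$ with $J_{t}$ at time $t$ certifies $C^{\mathbf{D}_{\epsilon}}_{2-VM}>t$. One should also check that the coupling of Theorem~\ref{coup-voter} is genuinely a coupling of the \emph{forward} two-party dynamics started from $D$, so that $C^{\mathbf{D}_{\epsilon}}_{2-VM}$ is really governed by the $A_{i}(t)$ above; this is immediate, since the parities of the multiset update $A_{i}=\uplus_{r}A_{j_{r}}$ coincide with the $\oplus$-update of the two-party model.
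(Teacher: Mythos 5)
Your proof is correct and takes essentially the same route as the paper's: sample $D$ by independent $p$-coloring with $p=\frac{1}{2}-\epsilon$, condition on the multiset voter model trajectory, use anti-concentration of the parity of $|D\cap J_{t}|$ for nonempty $J_{t}$ to obtain the pointwise tail bound $\Pr[C^{\mathbf{D}_{\epsilon}}_{2-VM}(H;\mathcal{B})>t]\geq\left(\frac{1}{2}-\epsilon\right)\Pr[c_{VM}(H;\mathcal{B})>t]$, and then sum tails via Lemma~\ref{avglemma}. The only (harmless) differences are that your exact formula $\frac{1-(2\epsilon)^{|J_{t}|}}{2}$ replaces the paper's one-flip conditioning argument, and that you spell out explicitly the ``and subsequently'' asymmetry between the two parity times, a bookkeeping point the paper leaves implicit.
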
 

\begin{proof}
Suppose at time $t$ we {\bf do not} have parity on $\mathcal{B}$. Let $C_{t}$ be the resulting configuration.  Let $(a_{1},a_{2},\ldots, a_{n})$ be the vectors of label parities on $\mathcal{B}$ of all initial opinions. By our assumption there must be two different opinions $v_{1},v_{2}$ whose number of copies in $C_{t}|_{\mathcal B}$ have  different parities: $a_{i}\neq a_{j} \mbox{ (mod 2)}.$
We next apply the following trivial lemma

\begin{lemma} 
Conditional on being in state $C_{t}$, 
\[
Prob[\sum_{k\in {\bf D}} a_{k}\equiv 1 \mbox{ (mod 2)}]\geq 1/2-\epsilon.
\]
\end{lemma} 
\begin{proof} If $a_{i}$ is odd then including/excluding $i$ from $D$ changes the parity of $\sum_{j\in {\bf D}} a_{j}$.  
\end{proof} 

As a consequence we infer, similarly to the graph case \cite{aldous-fill-book}, that 
$Prob[c^{\bf D_{\epsilon}}_{2-VM}(H;\mathcal{B})>t]\geq \big(\frac{1}{2}-\epsilon\big) Prob[c_{VM}(H;\mathcal{B})> t].$
Finally, by Lemma~\ref{avglemma} $
E[c_{VM}(H;\mathcal{B})]\leq \frac{2}{1-2\epsilon}\cdot E[C^{\bf D_{\epsilon}}_{2-VM}(H;\mathcal{B})]
$
\end{proof}

\subsection{Reachability and recurrence in the two-party voter model on odd-connected hypergraphs}

The previous section motivates the study of recurrent states of the two-party voter model. For odd-connected hypergraphs the answer is especially simple:

\begin{theorem}
  Let $H$ be an odd-connected hypergraph and let $A$ be an initial state for the two-party voter model on $H$. Then the following hold: 
 \begin{itemize} 
 \item[(a).] If $H$ has an edge $e$ with $|e|$ odd then state {\bf 0}, in which all nodes have label zero, is reachable from $A$. 
 \item[(b).] If all edges $e$ of $H$ have even size then from any state different from {\bf 1} the system can reach state {\bf 0}. State {\bf 1} is another fixed point of the system. In this second state we have reached parity on all stabilizing sets $\mathcal{B}$ for the annihilating random walk on $H$. 
\end{itemize} 
\label{rec-voter}
\end{theorem}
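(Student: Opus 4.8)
The plan is to first reformulate a single move of the two-party voter model in a more transparent way. Writing $s_e(A)=\bigoplus_{v\in e}A_v$ for the parity of the number of ones carried by edge $e$, and observing that $\bigoplus_{r=1}^{k}A_{j_r}=s_e(A)\oplus A_i$ for $e=\{i,j_1,\ldots,j_k\}$, a move on the pair $(i,e)$ sets $A_i\leftarrow s_e(A)\oplus A_i$. In other words, \textbf{the move flips $A_i$ precisely when $e$ currently carries an odd number of ones, and leaves everything unchanged otherwise}. In particular \textbf{0} is always a fixed point, and a move turning a $1$ into a $0$ is available exactly when some node $i$ with $A_i=1$ lies in an edge $e$ with $s_e(A)=1$. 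This is the single observation that drives both parts.

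For part (a) I would argue by a monotone reduction of $|\mathcal{A}|$, where $\mathcal{A}=\{i:A_i=1\}$. Assume $A\neq\textbf{0}$, so $\mathcal{A}\neq\emptyset$. If $\mathcal{A}=V$, the hypothesised odd edge $e_0$ has $s_{e_0}(A)=|e_0|\bmod 2=1$; flipping any $i\in e_0$ (which carries $A_i=1$) strictly decreases $|\mathcal{A}|$. If $\emptyset\neq\mathcal{A}\subsetneq V$, then since $H$ is odd-connected the set $\mathcal{A}$ is not even parity dominating, so there is an edge $e$ with $|\mathcal{A}\cap e|$ odd; this $e$ satisfies $s_e(A)=1$ and contains at least one vertex of $\mathcal{A}$, which we flip to decrease $|\mathcal{A}|$. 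Iterating, $|\mathcal{A}|$ strictly decreases at every step, so after at most $|V|$ moves we reach \textbf{0}.

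For part (b), all edges have even size, hence in state \textbf{1} every edge $e$ has $s_e(\textbf{1})=|e|\bmod 2=0$, so no move changes anything and \textbf{1} is a fixed point. From any $A\neq\textbf{1}$ we have $\mathcal{A}\subsetneq V$, and the second case of the argument for (a) applies verbatim: odd-connectedness supplies an odd edge containing a $1$, and flipping it keeps $\mathcal{A}\subsetneq V$ (we only delete vertices from $\mathcal{A}$) while reducing $|\mathcal{A}|$, so we never drift into \textbf{1} and eventually reach \textbf{0}. For the final assertion I must check that in state \textbf{1} we have parity on every stabilizing set $\mathcal{B}$, i.e. $\bigoplus_{i\in\mathcal{B}}A_i=0$; in state \textbf{1} this XOR equals $|\mathcal{B}|\bmod 2$, so it suffices to show $|\mathcal{B}|$ is even. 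By the earlier characterization of stabilizing configurations, $H(\mathcal{B},\textbf{0})$ has a solution $(z_e)_{e\in E}$ over $\mathbf{Z}_2$, meaning $\mathcal{B}(v)=\sum_{e\ni v}z_e$ for every $v$; summing over $v$ and counting each edge once per incident vertex gives $|\mathcal{B}|\equiv\sum_{e}|e|\,z_e\pmod 2$, which vanishes because every $|e|$ is even.

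I expect the main obstacle to be purely conceptual rather than computational: spotting the reformulation of the move as ``flip $A_i$ iff $e$ is odd'' and then invoking odd-connectedness at exactly the right point to guarantee that a flippable $1$ always exists away from the fixed points. Once that is in place the reductions in (a) and (b) are monotone and routine, and the only genuinely separate ingredient is the parity computation for stabilizing sets, which rests entirely on the stated characterization theorem.
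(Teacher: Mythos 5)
Your proposal is correct and follows essentially the same route as the paper: the descent step (odd-connectedness guarantees an edge $e$ with $|\mathcal{A}\cap e|$ odd, and scheduling a vertex of $\mathcal{A}\cap e$ strictly decreases the number of ones, with the all-ones state handled separately via an odd edge in case (a) and recognized as a fixed point in case (b)) is exactly the paper's argument, merely presented through your cleaner ``flip $A_i$ iff $s_e(A)=1$'' reformulation. Your parity computation for stabilizing sets, summing the equations of $H(w_1,\mathbf{0})$ to get $|\mathcal{B}|\equiv\sum_e |e|z_e\equiv 0\pmod 2$, is also the paper's own calculation.
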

\begin{proof}
Consider a state $B\neq {\bf 0}, {\bf 1}$ reachable from $A$. We will show that there is another state reachable from $A$ with strictly fewer ones than in $B$. 

Indeed, since $B$ is not an even dominating set, there must exist a hyperedge $e$ of $H$ such that $|e\cap B|$ is odd, in particular is nonzero. Consider the natural configuration $1_{B}$ with ones on $B$ and zero outside this set.

  By scheduling a node of $B\cap e$ we reach a configuration $1_{C}$ in the two-state voter model with a strictly lower number of ones. This happens, of course, since $|B\cap e|$ is odd.

  Since in the above argument $B$ was an arbitrary subset, it follows that state ${\bf 0}$ is reachable in the 2-state voter model from any initial configuration.
  
  The argument also works when $B= {\bf 1}$ if hypergraph $H$ has an edge $e$ with $|e|$ odd, since in this case by scheduling edge $e$ one can reach a state different from {\bf 1}. 
  
 In the second case, let $\mathcal{B}$ be a stabilizing state, let $B$ be its support, let $w_{1}$ be the vector which is 1 on $B$ and 1 outside of it. Adding all equations of satisfiable system $H(w_{1},0)$ we get 
 \[
 0 = \sum_{e\in E} |e|x_{e}=\sum_{v\in V} w_{1}[v]-0[v]= \sum_{v\in B} 1, 
 \]
 so $|B|$ must be even. Therefore in state {\bf 1} of the 2-state vector model we have reached parity on any state $\mathcal{B}$. 
\end{proof} 

As a sanity check, let us briefly discuss the implications of the previous result in the case when $H$ is a graph without loops. In this situation the voter model on $H$ reaches \textit{unanimity}, i.e. one of the opinions will dominate, and all other opinions will dissapear. In this case the two-party voter model will reach state {\bf 1} iff its initial state contains a 1 on the winning opinion of the voter model, and state {\bf 0} otherwise. 

The previous result has beneficial consequences for the analysis of the voter model (and, equivalently, of the lazy annihilating random walk) on odd-connected hypergraphs: instead of analyzing the  (difficult to control) condition of parity on some stabilizing state $\mathcal{B}$ one can instead bound  the weaker condition that the voter model reaches state {\bf 0} (or ones of states {\bf 0},{\bf 1} in case (b).) which then implies parity on \textbf{all} stabilizing sets $\mathcal{B}$.

\begin{definition} 
Let $H$ be an odd-connected hypergraph. Define 
\[
T_{parity}(H,D_{\epsilon}) =\min\{t\geq 1: D_{t}=\emptyset\},
\]
i.e. the minimum time when model $D_{\epsilon}$ is in configuration ${\bf 0}$ if $H$ is in case (a). of Theorem~\ref{rec-voter}, and  
\[
T_{parity}(H,D_{\epsilon}) =\min\{t\geq 1: D_{t}\in \{\emptyset,V\} \},
\]
in case (b). 
\end{definition} 

The previous discussion implies the following result: 
\begin{lemma}
\label{help}
We have 
\[
C^{\bf D_{\epsilon}}_{2-VM}(H;\mathcal{B})\leq T_{parity}(H,D_{\epsilon}).
\label{parity}
\]
\end{lemma}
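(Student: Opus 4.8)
The plan is to show that at the stopping time $T_{parity}(H,D_{\epsilon})$ the two-party voter model run from the random configuration $D$ has entered a configuration that is simultaneously (i) a fixed point of the dynamics and (ii) one on which parity on $\mathcal{B}$ holds. Since the definition of the parity time $c^{\bf D_{\epsilon}}_{2-VM}(H;\mathcal{B})$ requires $\oplus_{i\in\mathcal{B}}A_i=0$ to hold \emph{at time $t$ and subsequently}, these two facts together will yield $c^{\bf D_{\epsilon}}_{2-VM}(H;\mathcal{B})\le T_{parity}(H,D_{\epsilon})$.

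First I would record that the states ${\bf 0}$ and (in case (b)) ${\bf 1}$ are absorbing for the two-party move $A_i=\oplus_{r=1}^k A_{j_r}$, where $e=(i,j_1,\ldots,j_k)$ so $|e|=k+1$. From all zeros any update leaves a zero; and when every edge has even size, $k$ is odd, so an update at ${\bf 1}$ XORs an odd number of ones and returns a one. Hence once $D_t=\emptyset$ (case (a)) or $D_t\in\{\emptyset,V\}$ (case (b)), the configuration never changes, so whatever parity statement holds at $T_{parity}(H,D_{\epsilon})$ continues to hold for all later times. This is precisely what is needed to satisfy the ``and subsequently'' clause in the definition of the parity time.

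Next I would verify that parity on $\mathcal{B}$ actually holds in each terminal configuration. In state ${\bf 0}$ this is immediate, since $\oplus_{i\in\mathcal{B}}A_i=0$ trivially. In state ${\bf 1}$ (relevant only in case (b)) one has $\oplus_{i\in\mathcal{B}}A_i=|B|\bmod 2$, so parity reduces to $|B|$ being even; this is exactly the content of the second part of Theorem~\ref{rec-voter}, where summing all equations of the satisfiable system $H(w_1,{\bf 0})$ forces $|B|$ to be even for every stabilizing $\mathcal{B}$. Combining the two observations, at time $T_{parity}(H,D_{\epsilon})$ the $D_{\epsilon}$ model sits in a fixed configuration on which parity on $\mathcal{B}$ holds and persists, so by definition $c^{\bf D_{\epsilon}}_{2-VM}(H;\mathcal{B})\le T_{parity}(H,D_{\epsilon})$, as claimed.

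The one point I would treat most carefully — and the only genuine subtlety — is case (b): reaching ${\bf 1}$ rather than ${\bf 0}$ yields parity \emph{only} because stabilizing sets have even cardinality, so the bound truly depends on invoking Theorem~\ref{rec-voter}(b) and is not a purely formal consequence of $D_t$ being absorbing. I would also note in passing that the reachability claims of Theorem~\ref{rec-voter} guarantee $T_{parity}(H,D_{\epsilon})$ is almost surely finite, so the inequality relates two a.s.-finite random variables and is not vacuous; this finiteness is what later lets the bound be combined with the expectation estimate in~(\ref{vm-2}).
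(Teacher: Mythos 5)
Your proof is correct and follows essentially the same route as the paper, which derives the lemma directly from the discussion after Theorem~\ref{rec-voter}: reaching ${\bf 0}$ (or, in case (b), ${\bf 1}$) implies parity on \emph{all} stabilizing sets $\mathcal{B}$, with the ${\bf 1}$ case resting on the even-cardinality argument $|B|\equiv 0 \pmod 2$. Your explicit verification that ${\bf 0}$ and (in case (b)) ${\bf 1}$ are fixed points, which discharges the ``and subsequently'' clause in the definition of the parity time, is a detail the paper leaves implicit, but it is the same argument.
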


\section{Proof of Theorem~\ref{vm}}

To prove Theorem~\ref{vm} (1) and (2) we have to show that we have 
\begin{equation} 
max_{\mathcal{B}} E[C^{\bf D_{\epsilon}}_{2-VM}(H;\mathcal{B})]
 \leq T_{n}(H)
 \label{ub} 
\end{equation} 
for some function $T_{n}(H)$ with the properties from Theorem~\ref{vm}.

From Lemma~(\ref{parity}), all we have to prove is that in fact
$E[T_{parity}(H,D_{\epsilon})]\leq T_{n}(H)$ for a corrresponding function $T_{n}$. 
We first show that 

\begin{lemma} 
We have 
\[
P[\Delta N_{t}^{D_{\epsilon}}= -1]- P[\Delta N_{t}^{D_{\epsilon}}= 1]= \frac{|E^{-}(D_{t},\overline{D_{t}})|-|E^{+}(D_{t},\overline{D_{t}})|}{nk}
\]
\end{lemma}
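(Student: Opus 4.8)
The plan is to condition on the configuration $D_t$ at time $t$ and compute the one-step transition probabilities of $N_t^{D_\epsilon}=|D_t|$ directly; the random initial partition used to define ${\bf D}_\epsilon$ only fixes the starting configuration and plays no role once $D_t$ is given, so all probabilities below should be read as conditional on $D_t$ (which is what makes the random-valued right-hand side meaningful). Writing $A=D_t$ and recalling that a single move of the two-party voter model picks a vertex $i$ together with an incident hyperedge $e$ and resets $A_i\leftarrow\bigoplus_{j\in e\setminus\{i\}}A_j$, I observe that only the label of $i$ can change, so $\Delta N_t^{D_\epsilon}\in\{-1,0,+1\}$, with the sign determined entirely by the old and new values of $A_i$.

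The core step is a short parity case analysis identifying exactly when $\Delta N_t^{D_\epsilon}=\pm1$. The new value of $A_i$ equals the parity of $|(e\setminus\{i\})\cap A|$. If $i\in A$ then $|(e\setminus\{i\})\cap A|=|e\cap A|-1$, so $A_i$ flips from $1$ to $0$ exactly when $|e\cap A|$ is odd, i.e. exactly when $e\in OddCut(A)$; by Definition~\ref{expansion-hyper} this is precisely the condition $(i,e)\in E^-(A,\overline A)$. Symmetrically, if $i\in\overline A$ then $|(e\setminus\{i\})\cap A|=|e\cap A|$, so $A_i$ flips from $0$ to $1$ exactly when $|e\cap A|$ is odd, i.e. exactly when $(i,e)\in E^+(A,\overline A)$. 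Hence the event $\{\Delta N_t^{D_\epsilon}=-1\}$ coincides with the chosen pair lying in $E^-(D_t,\overline{D_t})$, and $\{\Delta N_t^{D_\epsilon}=+1\}$ with it lying in $E^+(D_t,\overline{D_t})$. I would verify the degenerate cases against this dichotomy, in particular that a self-loop $e=\{i\}$ contributes to $E^-$ when $i\in A$ (then the move sets $A_i=0$) and never to $E^+$, matching the earlier observation that self-loops contribute only to $E^-$.

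Next I would pin down the selection distribution. Since $H$ is $k$-regular, the number of incident (vertex, edge) pairs is $\sum_v\deg(v)=nk$, and choosing a uniform vertex $i$ followed by a uniform incident edge assigns each such pair probability $\tfrac{1}{nk}$. Because $E^-(D_t,\overline{D_t})$ and $E^+(D_t,\overline{D_t})$ are disjoint sets of incident pairs, summing the per-pair probabilities gives
\[
P[\Delta N_t^{D_\epsilon}=-1\mid D_t]=\frac{|E^-(D_t,\overline{D_t})|}{nk},\qquad
P[\Delta N_t^{D_\epsilon}=+1\mid D_t]=\frac{|E^+(D_t,\overline{D_t})|}{nk},
\]
and subtracting the two yields the claimed identity.

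No step here is genuinely deep; the only real obstacle is bookkeeping discipline. I must carry out the parity computation separately for $i\in A$ and $i\in\overline A$ so that the correspondences with $E^-$ and $E^+$ are not conflated, and I must be careful that the ``$(i,j_1,\dots,j_k)$'' notation in the definition of the move does not secretly assume a fixed edge size: here $H$ is $k$-regular rather than $k$-uniform, so the XOR ranges over all of $e\setminus\{i\}$ whatever its cardinality, and the counting of incident pairs uses regularity (degree $k$) rather than any uniformity of edge sizes.
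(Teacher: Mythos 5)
Your proof is correct and takes essentially the same route as the paper's: you identify the events $\{\Delta N_{t}^{D_{\epsilon}}=-1\}$ and $\{\Delta N_{t}^{D_{\epsilon}}=+1\}$ with the chosen incident pair lying in $E^{-}(D_{t},\overline{D_{t}})$ and $E^{+}(D_{t},\overline{D_{t}})$ respectively, via the parity of $|e\cap D_{t}|$, and use $k$-regularity to conclude that each of the $nk$ vertex--edge pairs is selected with probability $1/(nk)$. Your explicit case split between $i\in D_{t}$ and $i\in\overline{D_{t}}$, the self-loop check, and the remark that regularity (not uniformity) is what the counting needs, merely spell out details the paper's terser proof leaves implicit.
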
 
\begin{proof} 
In the two-party voter model. 
$N_{t}^{D_{\epsilon}}$ decreases by one exactly when the  chosen vertex $v$ has label 1 and the edge $e\ni v$ contains an odd number of nodes (including $v$ !) with label 1. Similarly, 
$N_{t}^{D_{\epsilon}}$ increases by one precisely when the chosen vertex $v$ has label 0 and the edge $e\ni v$ has an odd number of nodes with label 1. The number of distinct vertex-edge pairs in the two-party multiset voter model is precisely $kn$, since every vertex of $H$ has degree exactly $k$. The number of vertex-edge pairs that lead to an increase by 1 is  nothing but $|E^{+}(D_{t},\overline{D_{t}})|$, with $E^{+}$ having the meaning from  Definition~\ref{expansion-hyper}. 
\end{proof} 


We complete the proof of the upper bounds~(\ref{ub}) for Cases (i) and (ii) using drift arguments entirely similar to the ones used in the proof of Theorem~\ref{thm0} (i) and (ii). Instead of $\Delta u(t)$ we will control $\Delta N_{t}^{D_{\epsilon}}$, and we upper bound the expected time to hit {\bf 0}, conditional (when Case (b) of Theorem~\ref{rec-voter} applies) on not hitting state {\bf 1} (if this latter event happens parity is reached). 

 In Case (i) we have positive drift and the expected convergence time will be linear, whereas in Case (ii) we correct the expected running time of the unbiased lazy random walk with $\tau_{odd}(D(\Phi_{n}))$, due to inequality 
 \[
 P[\Delta N_{t}^{D_{\epsilon}}= -1]\geq \frac{1}{\tau_{odd}(D(\Phi_{n}))}
 \]

All other details are entirely similar. $\qed$

\section{Conclusions} 

Our main technical contribution has been to show that the use of hypergraph versions of particle systems renders the problem of analyzing the running time of XOR-SAT solvable via a reduction to drift analysis. 

Our work could naturally be completed in many ways. First of all, we would like to remove some of the technical restrictions on the class of instances that can be analyzed with methods similar to ours. For instance, we would like to remove the condition of unique satisfiability from the statement of Theorem~\ref{thm0}. The condition of acyclicity of instances of XOR-SAT can probably be removed from the statement of Theorem~\ref{vm}. Concerning this result, it would be interesting to also obtain \textit{lower bounds} on the expected convergence time of WalkSAT in terms of structural properties of the triadic dual. Our method did not allow us to accomplish this goal, as it first replaced annihilating random walks by their lazy versions (that could be coupled to the lazy coalescing random walks, which are dual to the multiset voter model) and then replaced parity in the multiset voter model on the stabilizing set $B$ by a more tractable condition (that yielded an upper bound for its convergence time). 

On the other hand, more sophisticated methods and bounds on coalescence/annihilation, similar to those in \cite{cooper2013coalescing,cooper2016linear,kanade} could perhaps be obtained via further research, and are left as open problems. 

More importantly, it would be interesting to see if the running time of WalkSAT and related local search procedures, can be analyzed on instances more interesting problems (e.g $k$-SAT) in terms of (suitably defined) "particle systems". 

Last but not least we believe that the explosive models defined in this paper deserve further independent study.  For instance, we would like to understand the structure of recurrent states in coalescing random walk model on hypergraphs when state ${\bf 0}$ is not reachable. Also, the long-term structure of the multiset voter model deserves, we believe, further clarifications, especially in the case (b). of Theorem~\ref{rec-voter} which is more similar to the graph case. 


\appendix

\section{Counterexamples for reachability} 

We give two types of counterexamples. The first one is the setting for which a partial converse was (incorrectly) claimed in \cite{istrate-balance}: 
connected hypergraphs without graph edges. 

The second counterexample shows that the failure of the converse implication is not specific to hypergraphs: even on graphs the sufficient condition fails to be necessary.


\begin{figure}
\begin{center} 
\begin{minipage}{.4\textwidth}
\begin{center}
\includegraphics[width=4cm]{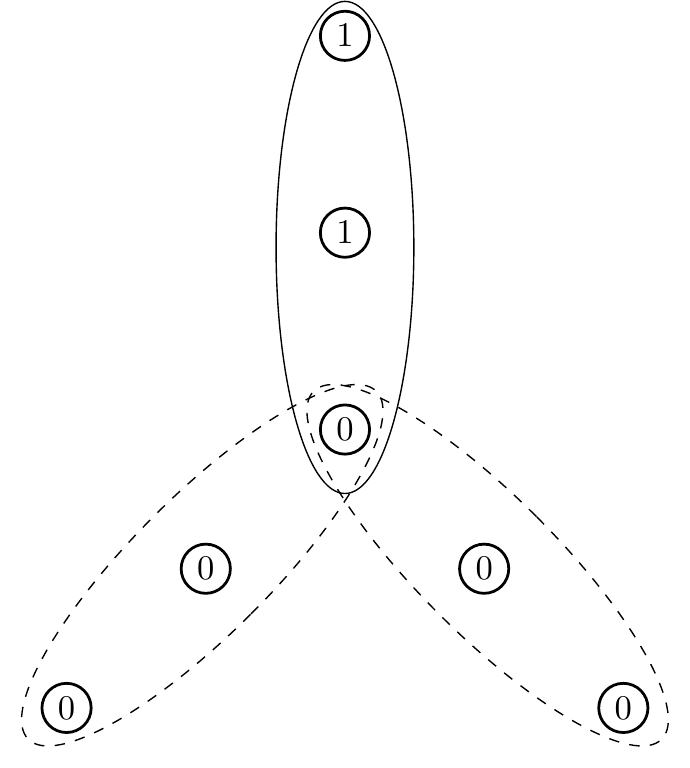}
\end{center}
\end{minipage}
\begin{minipage}{.1\textwidth}
\Huge{
\[
\nVdash
\]}
\end{minipage} 
\begin{minipage}{.4\textwidth}
\begin{center}
\includegraphics[width=5cm]{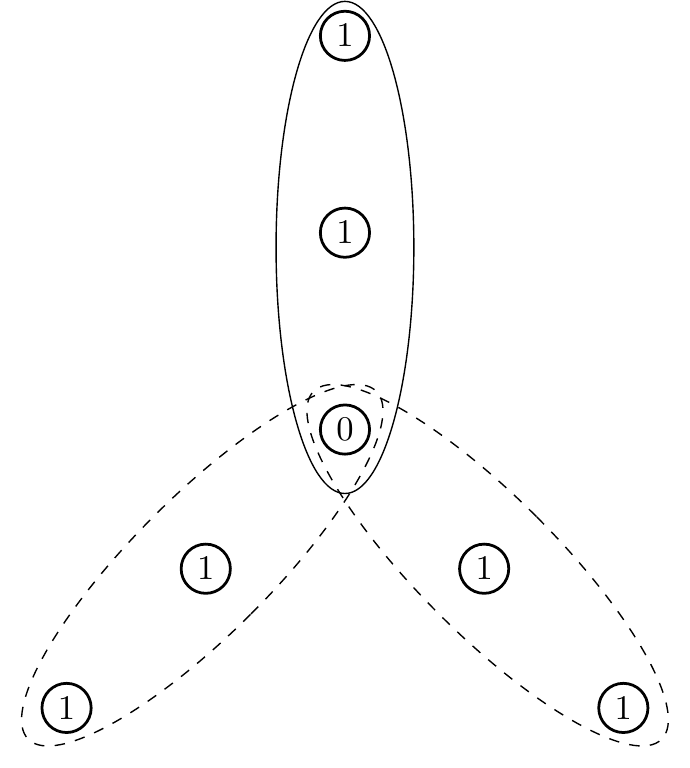}
\end{center}
\end{minipage}
\end{center} 
\caption{Unreachability in a hypergraph with no graph edges.}
\label{two-triangles}
\end{figure}

\begin{figure}
\begin{center} 
\begin{minipage}{.4\textwidth}
\begin{center}
\includegraphics[width=5cm]{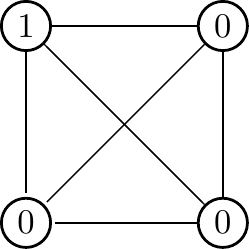}
\end{center}
\end{minipage}
\begin{minipage}{.1\textwidth}
\Huge{
\[
\nVdash
\]}
\end{minipage} 
\begin{minipage}{.4\textwidth}
\begin{center}
\includegraphics[width=5cm]{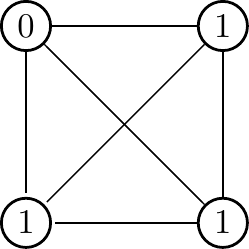}
\end{center}
\end{minipage}
\end{center} 
\caption{Unreachability in  a graph.}
\label{two-graph} 
\end{figure}

\begin{theorem} 
There exist
\begin{itemize} 
\item[(a). ]
a connected hypergraph $H$ that contains no graph edges,
and
\item[(b). ] a connected graph (i.e. all hyperedges have size two) $H$,   
\end{itemize}
as well as two configurations $w_{1},w_{2}$ on $H$ such that system $H(w_{1},w_{2})$ has solutions in ${\bf Z}_{2}$, yet $w_{2}$ is not reachable in $H$ from $w_{1}$. 

\label{graph} 
\end{theorem}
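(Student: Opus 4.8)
The plan is to refute, by explicit construction, the converse of Lemma~\ref{if}: in each part we exhibit a connected instance $H$ (with all hyperedges of size $\geq 3$ in (a), all of size two in (b)) together with configurations $w_1,w_2$ for which $H(w_1,w_2)$ is solvable over ${\bf Z}_2$ yet $w_2$ is unreachable from $w_1$. The two verifications are of very different character. Solvability is routine: one step of the walk alters the configuration by XOR-ing in the indicator vector $\chi_e$ of the fired hyperedge $e$, so solvability of $H(w_1,w_2)$ is exactly the assertion that $w_2-w_1$ lies in the ${\bf Z}_2$-span of the vectors $\{\chi_e\}_{e\in E}$, and I would simply exhibit a witnessing combination $z=(z_e)$. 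All the content is in the non-reachability claim, and the whole point of the examples is that the liveness precondition (an edge may be fired only while it contains a live vertex) can forbid every legal ordering of the firings prescribed by $z$.

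For part (b) I would take the $4$-cycle on vertices $1,2,3,4$ (Figure~\ref{two-graph}), with $w_1=(1,0,0,0)$ a single live vertex and $w_2=(1,1,1,0)$. Then $w_2-w_1$ is the indicator of the edge $\{2,3\}$, so $z$ fires $\{2,3\}$ once and $H(w_1,w_2)$ is solvable. Non-reachability follows from a one-line invariant: the set of weight-one configurations is closed under legal moves, since from a single token at $v$ the only legal move fires an edge $\{v,u\}$ containing $v$, which slides the token to $u$ and preserves the weight. Hence every configuration reachable from $w_1$ still has weight one, so the weight-three $w_2$ is never reached, even though the unconstrained firing of $z$ would produce it.

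For part (a) I would take the connected $3$-uniform hypergraph $H$ on $\{1,2,3,4\}$ with hyperedges $e_1=\{1,2,3\}$ and $e_2=\{2,3,4\}$ (both of size three, so $H$ has no graph edge), $w_1=(1,0,0,0)$ and $w_2={\bf 1}=(1,1,1,1)$. Here $w_2-w_1=\chi_{e_2}$, so $z$ fires $e_2$ once and $H(w_1,w_2)$ is solvable. Non-reachability rests on the observation that the all-ones configuration ${\bf 1}$ is a \emph{source} of the move digraph in any hypergraph: a legal move producing ${\bf 1}$ would have to fire some edge $e$ from the configuration ${\bf 1}\oplus\chi_e$, which is identically zero on $e$ and therefore contains no live vertex of $e$, making the move illegal. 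Thus ${\bf 1}$ has no legal incoming move, so it is unreachable from any $w_1\neq{\bf 1}$, in particular from ours. The instance drawn in Figure~\ref{two-triangles} is a larger but structurally identical witness.

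The main obstacle is conceptual rather than computational: one must certify that \emph{no} firing sequence whatsoever, including long detours that temporarily raise the number of live vertices, connects $w_1$ to $w_2$, and this cannot be achieved with any invariant preserved by unconstrained edge-firing. Indeed, any linear functional $x\mapsto\langle c,x\rangle$ invariant under firing every edge satisfies $\langle c,\chi_e\rangle=0$ for all $e$, hence vanishes on the sum of edge indicators $w_2-w_1$ and cannot separate $w_1$ from $w_2$. The resolution adopted above is to exploit the liveness precondition directly: the closed weight-one class in (b) and the ``${\bf 1}$ is a source'' argument in (a) are genuinely nonlinear, precondition-driven obstructions, and for any fixed finite instance they can alternatively be confirmed by closing $\{w_1\}$ under legal moves and checking that $w_2$ never appears.
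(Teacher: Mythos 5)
Your proof is correct, and it follows the same overall strategy as the paper's: small explicit instances, solvability of $H(w_{1},w_{2})$ certified by exhibiting $z$ directly, and unreachability derived from the liveness precondition of the annihilating walk. Part (b) is essentially the paper's argument in minimal form: the paper takes $K_{4}$ and invokes the fact that on loopless graphs the number of ones never increases, while your $C_{4}$ with the closed weight-one class uses exactly the same monotonicity observation. In part (a) you achieve a genuine economy. The paper's witness (three size-3 hyperedges through a common vertex, with $w_{2}$ vanishing at the hub) requires a two-step backward closure: first enumerate the preimages $w_{2}\oplus\chi_{e_{i}}$ of $w_{2}$, then note that each of these contains an all-ones hyperedge and therefore has no preimage other than $w_{2}$ itself. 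Your choice $w_{2}={\bf 1}$ collapses this to a single application of the same key fact --- every legal firing of $e$ leaves a zero at the chosen vertex of $e$, so an all-ones configuration has no incoming move at all --- making ${\bf 1}$ a source of the move digraph and hence unreachable from your $w_{1}\neq{\bf 1}$ in one line. What the paper's version buys in exchange is a target $w_{2}$ that is not itself a dead end for trivial reasons, showing the obstruction can occur deeper in the reachability structure. One small caveat on your closing remark: the instance of Figure~\ref{two-triangles} is not ``structurally identical'' to yours, since there $w_{2}$ is not all-ones (it is zero at the shared vertex), which is precisely why the paper needs the extra backward step; your sentence should be read as ``same kind of obstruction,'' not the same argument.
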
 

\begin{proof} 
\begin{enumerate} 

\item

Let $H$ be a hypergraph consisting of three hyperedges $e_{1},e_{2},e_{3}$ sharing a common vertex (Figure~\ref{two-triangles}). Let $w_{1}, w_{2}$ be the configuration described in that figure: the private vertices of $e_{2}$ (displayed with a solid line in Figure~\ref{two-triangles}) have initial value 1 in $w_{1}$, 
all other vertices being 0. On the other hand $w_{2}$ takes value 0 on the shared vertex and 1 everywhere else. 

It is easy to see that system $H(w_{1},w_{2})$ has a solution $z$ with $z(e_{1})=z(e_{3})=1$ and $z(e_{2})=0$. Yet $w_{2}$ is not reachable from $w_{1}$. Indeed hyperedges with three labels of one have no preimage. So the only preimages of state $w_{2}$ are itself and the three ones obtained by flipping labels on one hyperedge. 
\item

Let $H$ be the complete graph $K_{4}$ and let $w_{1}$ be 1 at a single vertex $v$ (Figure~\ref{two-graph}). Let $w_{2}$ be the configuration with ones at {\it every vertex but $v$}. System $H(w_{1},w_{2})$ has solution $z_{e}=1$ for every edge $e$, yet 
$w_{2}$ is {\bf not} reachable from $w_{1}$, as $w_{1}$ has a single one and $w_{2}$ has three, but on a graph the number of ones does not increase. 
\end{enumerate} 
\end{proof}
\end{document}